\documentclass[%
 reprint,
 superscriptaddress,
 amsmath,amssymb,
 aps,
 prx,
]{revtex4-2}

\usepackage{amsthm}
\usepackage{graphicx}
\usepackage{hyperref}
\usepackage{cleveref}
\usepackage{xcolor}
\usepackage{dsfont}

\newtheorem{thm}{Theorem}
\newtheorem{lem}[thm]{Lemma}
\newtheorem{cor}[thm]{Corollary}
\newtheorem{prop}[thm]{Proposition}
\theoremstyle{definition}
\newtheorem{defn}[thm]{Definition}
\theoremstyle{remark}

\crefname{thm}{Theorem}{Theorems}
\crefname{lem}{Lemma}{Lemmas}
\crefname{cor}{Corollary}{Corollaries}
\crefname{prop}{Proposition}{Propositions}
\crefname{defn}{Definition}{Definitions}
\crefname{equation}{Eq.}{Eqs.}
\crefname{section}{Sec.}{Secs.}

\newcommand{\Phizeros}{|\Phi_0(s)\rangle}
\newcommand{\Phijs}{|\Phi_j(s)\rangle}
\newcommand{\Dstar}{\Delta_*}
\newcommand{\mubd}{\mu_<}
\newcommand{\Ran}{\operatorname{Ran}}
\newcommand{\HS}{\mathrm{HS}}
\DeclareMathOperator{\arsinh}{arsinh}

\begin{document}

\title{Bounds on adiabatic path geometry from the width class of the gap profile}

\author{Mancheon Han}
\email{mchan@kias.re.kr}
\affiliation{School of Computational Sciences, Korea Institute for Advanced Study (KIAS), Seoul, 02455, Korea}
\author{Sangkook Choi}
\email{sangkookchoi@kias.re.kr}
\affiliation{School of Computational Sciences, Korea Institute for Advanced Study (KIAS), Seoul, 02455, Korea}
\date{\today}

\begin{abstract}
  In an adiabatic evolution, the wave function follows the path of ground states $\Phizeros$ of a Hamiltonian
  $H(s)$ as $s$ is tuned from $0$ to $1$. The geometry of this adiabatic path is described by quantities such as
  its length $L=\int_0^1 \|\partial_s \Phizeros\|\,ds$ and its total curvature $K$. This geometry affects
  the evolution time $T$. For instance, $T$ is at least of order $L/\Delta_*$, where $\Delta_*$ is the minimum energy gap
  over $s\in[0,1]$. The traditional approach gives $L=\mathcal{O}(\Delta_*^{-1/2})$ and $K=\mathcal{O}(\Delta_*^{-1})$, which
  are loose in many cases. We improve these bounds by using $\mubd(\gamma)$, the width of the region in $s$ on
  which the gap is below $\gamma$. The gap profile is in width class $p$ when $\mubd(\gamma)$ falls at least as
  fast as $\gamma^{1/p}$. A typical avoided crossing is in width class $p=1$, which gives
  $L=\mathcal{O}\big(\sqrt{\log \Delta_*^{-1}}\big)$ and $K=\mathcal{O}(\log \Delta_*^{-1})$. A profile in
  width class $p>1$ gives power laws with exponents $(p-1)/(2p)$ for $L$ and $(p-1)/p$ for $K$. The width class also bounds the evolution time. At $p=1$ we have
  $T=\mathcal{O}(\Delta_*^{-2})$ with a schedule that advances $s$ at a constant rate, and
  $T=\mathcal{O}(\Delta_*^{-1})$ up to a polylogarithmic correction with a schedule that traverses the path at
  a constant geometric speed. The bounds on $L$
  hold for any twice continuously differentiable $H(s)$, and the bounds on $K$ for affine $H(s)$. We also prove the tightness of scaling of $L$ in $\Delta_*$ for every integer $p\ge1$, and
  the scaling of $K$ at $p=1$. We demonstrate the bounds on the adiabatic Grover search, the XXZ spin chain, and
  molecular electronic Hamiltonians.
\end{abstract}

\maketitle

\section{Introduction}
\label{sec:intro}
Adiabatic evolution is a cornerstone of quantum state preparation~\cite{Farhi2000,Aspuru-Guzik2005}. If a Hamiltonian
$H(s)$ keeps a nonzero energy gap above its ground state and is varied slowly enough,
the system started at the ground state of the initial Hamiltonian stays close to the instantaneous ground state
$\Phizeros$~\cite{BornFock1928,Kato1950,AvronSeilerYaffe1987,Nenciu1993,Teufel2003,Jansen2007,Albash2018}. These ground states trace
a curve in Hilbert space, the adiabatic path. The length of this curve is the adiabatic path length
$L=\int_0^1\|\partial_s\Phizeros\|\,ds$~\cite{ProvostVallee1980,Boixo2009,Rezakhani2010}, and its total curvature is
$K=\int_0^L\kappa\,dl$. Here the phase of $\Phizeros$ is chosen so that
$\langle\Phi_0|\partial_s\Phi_0\rangle=0$, and $l$ is the arc length, so $dl=\|\partial_s\Phizeros\|\,ds$. The pointwise
curvature is $\kappa=\|Q\,d^2|\Phi_0\rangle/dl^2\|$ with
$Q=\mathbb I-|\Phi_0\rangle\langle\Phi_0|$~\cite{Alsing2024,mchanCGS}.
The path geometry enters the analysis of $T$ in two ways. First, it sets a
limit. The evolution time $T$ is at least of order $L/\Dstar$, where $\Dstar$
is the minimum gap over $s\in[0,1]$~\cite{Boixo2010,Albash2018}. Second, it
gives a way to build the schedule. Apart from special cases such as the
adiabatic Grover search, the gap function is not known in advance. A schedule
that exploits it therefore cannot be designed beforehand. By measuring the path
geometry as the evolution proceeds, the constant geometric speed
schedule~\cite{mchanCGS} needs only a lower bound on the energy gap over
$s\in[0,1]$, whereas other schedules with similar performance
gain~\cite{Roland2002,Isermann2021,Braida2025,Guo2025} are built from the gap function along
the whole path. It lowers the upper bound on $T$ by one order in $\Dstar^{-1}$, when $L$ and $K$
stay bounded independently of $\Dstar$. A guarantee written through the path
geometry is therefore only as good as the bounds available on $L$ and $K$.

For the adiabatic path length $L$, the conventional bound is $L=\mathcal O(\Dstar^{-1/2})$~\cite{Chiang2014}.
The same conventional approach gives $K=\mathcal O(\Dstar^{-1})$ for the total
curvature $K$. These conventional bounds are, in practice, too pessimistic. For instance, in our earlier work~\cite{mchanCGS} the measured path length and total
curvature grew far more slowly with decreasing $\Dstar$ than these bounds allow. Other
work already noted that in many cases of interest $L$ can be bounded independently of
the gap~\cite{Boixo2009,Boixo2010}. The conventional bounds are loose for a simple reason. The bounds on $L$ and $K$ are both obtained from the gap integral
$\int_0^1 ds/\Delta(s)$.
The conventional bounds use only $\Delta(s)\ge\Dstar$, which gives
$\int_0^1 ds/\Delta(s)\le1/\Dstar$. The whole functional form of $\Delta(s)$ is thereby
reduced to the single number $\Dstar$. This makes the conventional bounds far above the actual $L$ and $K$ in many cases.
If the gap is small only near its minimum, the integral becomes far smaller than
$1/\Dstar$.

We close this discrepancy between the conventional bounds and the actual $L$ and $K$
by introducing the width class. It grades gap profiles by how fast
$\mubd(\gamma)$~\cite{Jarret2019,Guo2025} vanishes with $\gamma$, where $\mubd(\gamma)$ is the
width of the region in $s$ on which the gap $\Delta(s)$ is smaller than $\gamma$. Thus $\Dstar$ measures how deep the gap minimum is, and the width class how narrow that
region is. Previous works~\cite{Jarret2019,Guo2025} use $\mubd(\gamma)$ with the evolution time as the quantity of
interest, under the linear condition $\mubd(\gamma)\le C\gamma$. Here the quantities of interest are
the path length and the total curvature, and
the width class grades that condition, covering forms of $\mubd(\gamma)$
beyond the linear one.
If $\mubd(\gamma)$ vanishes at least linearly in
$\gamma$, the gap profile is in width class $p=1$, and the gap integral grows only
logarithmically as $\Dstar\to0$. This gives
$L=\mathcal O\big(\sqrt{\log\Dstar^{-1}}\big)$ and $K=\mathcal O(\log\Dstar^{-1})$.
A typical avoided crossing is in this class. If $\mubd(\gamma)$ vanishes more slowly, the gap
profile is in width class $p>1$, and the gap integral is $\mathcal O(\Dstar^{(1-p)/p})$,
so $L=\mathcal O(\Dstar^{(1-p)/(2p)})$ and
$K=\mathcal O(\Dstar^{(1-p)/p})$.
In any case, including width class information improves the scaling of the bound
compared to that of the conventional bound, and the conventional exponents are recovered
in the limit $p\to\infty$. We also show that our scaling for each width class is
tight for $L$ with every integer $p\ge1$ and for $K$ with $p=1$, by constructing
Hamiltonian families that attain these scalings.

$L$ and $K$ often stay bounded by a constant. For example,
Sec.~\ref{sec:bounded} proves boundedness of $L$ and $K$ for affine Hamiltonians
whose path stays in a fixed two-dimensional subspace. The width class also lowers the bound on the
evolution time. For the linear schedule it gives
$T=\mathcal O(\Dstar^{-(3-1/p)})$, which is $\mathcal O(\Dstar^{-2})$ at $p=1$ in
place of the worst case $\mathcal O(\Dstar^{-3})$. For the constant geometric
speed schedule it gives $T=\mathcal O(\Dstar^{-(2-1/p)})$, which is one order
below the linear schedule in every width class and $\mathcal O(\Dstar^{-1})$ at
$p=1$, whenever $L$ stays bounded independently of $\Dstar$. Moreover, a gain over the
linear schedule is proved for the same schedule even when $L$ is unbounded
(Sec.~\ref{sec:runtime}).

We prove the bounds for
the ground subspace, the spectral subspace of the $g$ lowest levels, rather than for the
ground state alone. The quantities bounded are then the path length $L_P$ and the total
curvature $K_P$ of that subspace, which reduce to $L$ and $K$ at $g=1$. The gap is
measured above the subspace, as in the adiabatic theorem of Jansen, Ruskai, and
Seiler~\cite{Jansen2007}. The bounds on $L_P$ and $K_P$ hold even when the $g$ levels
inside the subspace cross.
We demonstrate the length and curvature bounds on the adiabatic Grover search,
the XXZ spin chain, and molecular electronic Hamiltonians
(Sec.~\ref{sec:applications}).

\section{Theory}
\label{sec:Theory}
We consider a Hamiltonian
\begin{equation}
  H(s)=H_0+sH',\qquad s\in[0,1],\label{eq:Hs}
\end{equation}
in a finite-dimensional Hilbert space $\mathcal H$. The dependence on $s$ is affine,
so the second derivative of $H(s)$ vanishes. Its eigenvalues are labeled from below and counted with multiplicity, $E_0(s)\le E_1(s)\le\cdots$, with $\Phijs$
a corresponding orthonormal eigenbasis chosen pointwise in $s$.
The eigenstates of the lowest $g$ eigenvalues form the ground subspace
[Fig.~\ref{fig:mechanism}(a)],
$P(s)$ denotes its spectral projector, and $Q(s)=\mathbb I-P(s)$ represents the spectral projector to its complement.
The gap between the ground subspace and the rest of the spectrum is
\begin{align}
  \Delta(s) = E_g(s) - E_{g-1}(s),
\end{align}
with a minimum $\Delta_*=\min_{s\in[0,1]}\Delta(s)>0$, attained at $s_*$. Let $\Gamma$ be an upper bound on $\Delta$, so $\Delta_*\le\Delta(s)\le\Gamma$ for all $s$.
Each eigenvalue changes at a rate at most $\Lambda=\|H'\|$. Indeed,
$\|H(s)-H(t)\|=\Lambda|s-t|$, so Weyl's perturbation theorem gives
$|E_j(s)-E_j(t)|\le\Lambda|s-t|$ for every $j$~\cite{Bhatia1997}.
The $g$ eigenvalues inside the ground subspace may split, touch, and cross. The
eigenvalues and eigenvectors labeled in increasing order can then be non-differentiable
in $s$. Only
$P(s)$, which inherits the smoothness of $H(s)$ through the contour-integral
representation of the spectral projector~\cite{Kato1995,Jansen2007}, is
differentiated in what follows.
We define the speed of the ground subspace as
\begin{align}
  v_P(s) = \frac{1}{\sqrt{2}} \|P'(s)\|_{\HS},
  \label{eq:vpdef}
\end{align}
where $'$ represents the derivative with respect to $s$ and
$\|A\|_{\HS}=\sqrt{\operatorname{Tr}(A^\dagger A)}$ is the norm derived from the Hilbert--Schmidt inner product.
The infinitesimal motion of the ground subspace decomposes into the rotation rates of the
principal angles~\cite{BjorckGolub1973,EdelmanAriasSmith1998}, and $v_P(s)$ is the
Euclidean norm of those rates.
When $g=1$, $v_P$ reduces to the speed of traversing the adiabatic path of
$\Phizeros$.
The $s$ dependence of variables is omitted where convenient.

\subsection{Bound for the path length}
\label{sec:path-length-bound}

The path length is defined as
\begin{align}
  L_P = \int_0^1 v_P(s) \, ds.
  \label{eq:LPdef}
\end{align}
Thus, $L_P$ is the angular distance accumulated by the ground subspace.
Then, with
\begin{align}
  M_1(s) = \sum_{j<g\le k}\frac{|\langle\Phi_k|H'|\Phi_j\rangle|^2}{E_k-E_j},\label{eq:M1s}
\end{align}
we have the following lemma for $v_P$.

\begin{lem}\label{lem:speed}
With $M_1(s)$ defined as in Eq.~\eqref{eq:M1s}, 
\begin{align}
v^2_P(s)\le \frac{M_1(s)}{\Delta(s)}.
\end{align}
\end{lem}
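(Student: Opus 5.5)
The plan is to compute $P'(s)$ explicitly in the eigenbasis and then bound term by term. Differentiate the spectral projector through its resolvent representation $P(s)=\frac{1}{2\pi i}\oint_{\mathcal C}(z-H(s))^{-1}dz$, with $\mathcal C$ enclosing $E_0,\dots,E_{g-1}$ and separating them from $E_g,\dots$. This gives $P'=\frac{1}{2\pi i}\oint (z-H)^{-1}H'(z-H)^{-1}dz$. Taking matrix elements in the pointwise eigenbasis $\Phijs$ and evaluating the residues, the diagonal blocks $PP'P$ and $QP'Q$ vanish. The off-diagonal block is
\begin{align}
  \langle\Phi_k|P'|\Phi_j\rangle=\frac{\langle\Phi_k|H'|\Phi_j\rangle}{E_j-E_k},\qquad j<g\le k,
\end{align}
and its Hermitian conjugate gives the $PP'Q$ block. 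This formula needs only the separation $E_k-E_j\ge\Delta(s)>0$ between the two groups. It does not need differentiability of individual eigenvectors, and degeneracies or crossings inside the ground subspace are harmless because only the contour data enter. Alternatively one can use $P'=PP'Q+QP'P$ (from $P^2=P$) together with $QP'P=QR\,H'P$-type identities, but the contour route is cleaner.

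Next, compute the Hilbert--Schmidt norm. The two off-diagonal blocks contribute equally, so
\begin{align}
  \|P'\|_{\HS}^2=2\sum_{j<g\le k}\frac{|\langle\Phi_k|H'|\Phi_j\rangle|^2}{(E_k-E_j)^2},
\end{align}
and hence $v_P^2=\sum_{j<g\le k}|\langle\Phi_k|H'|\Phi_j\rangle|^2/(E_k-E_j)^2$. The factor $1/\sqrt2$ in Eq.~\eqref{eq:vpdef} is chosen precisely to cancel this doubling.

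Finally, for every pair $j<g\le k$ the ordering of the eigenvalues gives $E_k-E_j\ge E_g-E_{g-1}=\Delta(s)$. Thus $1/(E_k-E_j)^2\le 1/\big(\Delta(s)(E_k-E_j)\big)$ for each term. Summing over the pairs yields $v_P^2\le M_1(s)/\Delta(s)$.

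The only delicate step is the first one: justifying the off-diagonal matrix-element formula without assuming smooth eigenvectors. I would handle it with the residue calculation above. Take $f(z)=(z-E_j)^{-1}(z-E_k)^{-1}$ with $E_j$ inside $\mathcal C$ and $E_k$ outside, so the residue is $1/(E_j-E_k)$. Any choice of orthonormal basis within a degenerate eigenspace works, because $P'$ and $M_1$ are basis-independent within each eigenspace. Everything else is routine.
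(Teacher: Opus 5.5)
Your proof is correct. It reaches the paper's exact identity Eq.~\eqref{eq:vPexact}, and your final step (using $E_k-E_j\ge\Delta(s)$ on one factor of $(E_k-E_j)^2$) is the same as the paper's. The difference lies in how you get the matrix elements of $P'$. You differentiate the Riesz projection and read off both $PP'P=QP'Q=0$ and $\langle\Phi_k|P'|\Phi_j\rangle=\langle\Phi_k|H'|\Phi_j\rangle/(E_j-E_k)$ from residues of $1/[(z-E_j)(z-E_k)]$. The paper argues purely algebraically instead. From $P^2=P$ it gets $PP'P=QP'Q=0$, so $P'=X+X^\dagger$ with $X=QP'P$. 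Differentiating $[H,P]=0$ and compressing with $Q$ on the left and $P$ on the right then gives the Sylvester equation $(QHQ)X-X(PHP)=-QH'P$. Because $QHQ$ and $PHP$ have disjoint spectra, this equation is uniquely solvable, which yields Eq.~\eqref{eq:Xsol}. Your route does everything in one computation and handles degeneracies inside either block without extra comment. It also justifies the differentiability of $P$ at the same time, which the paper simply takes from the contour representation. You should say that the contour is held fixed on a neighbourhood of $s$, which the positive gap and the continuity of the eigenvalues permit. The paper's route avoids complex analysis, and more importantly it introduces $X=QP'P$ and the Sylvester structure that are reused later. They appear in $\mathcal E_P''=\operatorname{Tr}((X+X^\dagger)H')$ in Lemma~\ref{lem:budget}, in the parallel-transport equation $U'=XU$, and in the bound $\|\mathcal S^{-1}(X)\|_{\HS}\le\|X\|_{\HS}/\Delta$ in Theorem~\ref{thm:K}.
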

\begin{proof}
  From $P^2=P$, we have $PP'+P'P=P'$. Multiplying by $P$ from the right gives
  $PP'P = 0$. Likewise, from $QP=0$, we have $QP'Q=0$.
  Then, $P+Q=1$ gives
  \begin{align*}
    P' = PP'P + PP'Q + QP'P + QP'Q = X+X^\dagger,
  \end{align*}
  where $X = QP'P$.
  Therefore, 
  \begin{align}
    v_P^2 = \frac{1}{2} \operatorname{Tr}((X^\dagger+X)^2) = \operatorname{Tr}(X^\dagger X),
    \label{eq:vp2x}
  \end{align}
  where $X^2=0$ because $X$ maps $\Ran P$ into $\Ran Q$.
  What remains is to find $X$.
  Differentiating $[H,P]=0$ gives
  \begin{align*}
    H'P + HP' - P'H - PH' = 0
  \end{align*}
  Multiplying by $Q$ from the left and by $P$ from the right, we have
  \begin{align*}
    QH'P + QHP'P-QP'HP = 0, \\
    QHQX - XPHP = -QH'P.
  \end{align*}
  Because $QHQ$ and $PHP$ have disjoint spectra, this equation is solvable for $X$ and gives
  \begin{align}
    X_{kj} = -\frac{\langle \Phi_k |H'|\Phi_j\rangle}{E_k-E_j}, \quad \text{for } j<g\le k, \label{eq:Xsol}
  \end{align}
  with all other elements of zeros.
  Combining Eqs.~\eqref{eq:vp2x} and \eqref{eq:Xsol} gives
  \begin{align}
    v_P^2 = \sum_{j<g\le k} \frac{|\langle \Phi_k |H'|\Phi_j\rangle|^2}{(E_k-E_j)^2}.
    \label{eq:vPexact}
  \end{align}
  $E_k-E_j \geq \Delta(s)$ proves the lemma.
\end{proof}

We denote the integral of $M_1$ by
\begin{align}
  C_L=\int_0^1 M_1(s)\,ds.
  \label{eq:CLdef}
\end{align}
The following lemma
evaluates it and bounds it by the eigenvalues of $H'$. To state
this, we define $\Lambda_g$ as half the difference between the sums of the $g$
largest and $g$ smallest eigenvalues of $H'$, so that $\Lambda_g\le g\Lambda$.
\begin{lem}\label{lem:budget}
Let $\mathcal{E}_P$ be the sum of the $g$ lowest energy eigenvalues of $H$, so $\mathcal{E}_P=\operatorname{Tr}(PH)$. Then,
\begin{align}
\mathcal{E}_P''(s)&=-2M_1(s), \nonumber\\
C_L &=\tfrac12\big[\mathcal{E}_P'(0)-\mathcal{E}_P'(1)\big]
\le \Lambda_g.
\end{align}
\end{lem}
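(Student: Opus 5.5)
The plan is to compute $\mathcal{E}_P'$ and $\mathcal{E}_P''$ directly from $\mathcal{E}_P=\operatorname{Tr}(PH)$, differentiating only $P$ and $H$ and never the individual eigenvalues, since those may be non-differentiable where levels inside the ground subspace cross.

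\emph{First derivative.} Differentiating gives
\[
\mathcal{E}_P'=\operatorname{Tr}(P'H)+\operatorname{Tr}(PH').
\]
The first term vanishes. Using the decomposition $P'=X+X^\dagger$ from the proof of \cref{lem:speed}, with $X=QP'P$, and the fact that $H$ commutes with $P$ and $Q$, we have $\operatorname{Tr}(QP'PH)=\operatorname{Tr}(P'PHQ)=0$. The adjoint term vanishes in the same way. Hence
\[
\mathcal{E}_P'=\operatorname{Tr}(PH'),
\]
which is a Hellmann--Feynman statement for the subspace.

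\emph{Second derivative.} Since $H$ is affine, $H''=0$, and therefore
\[
\mathcal{E}_P''=\operatorname{Tr}(P'H')=\operatorname{Tr}(XH')+\operatorname{Tr}(X^\dagger H').
\]
I will insert the explicit solution \cref{eq:Xsol}, $X_{kj}=-\langle\Phi_k|H'|\Phi_j\rangle/(E_k-E_j)$ for $j<g\le k$. Then
\[
\operatorname{Tr}(XH')=\sum_{j<g\le k}X_{kj}\langle\Phi_j|H'|\Phi_k\rangle=-M_1 .
\]
The adjoint term is the complex conjugate, which is also $-M_1$ because $M_1$ is real. This gives $\mathcal{E}_P''=-2M_1$.

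One point needs care. The eigenbasis is chosen pointwise and may be non-smooth, but $M_1(s)$ is a trace expression that is independent of the choice of basis within degenerate eigenspaces, so it is well defined. Also, $P(s)$ is smooth by the contour-integral representation, so $\mathcal{E}_P$ is $C^2$ and the formula holds for every $s$.

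\emph{The integral.} Integrating $\mathcal{E}_P''=-2M_1$ over $[0,1]$ gives
\[
C_L=\tfrac12\bigl[\mathcal{E}_P'(0)-\mathcal{E}_P'(1)\bigr]=\tfrac12\bigl[\operatorname{Tr}(P(0)H')-\operatorname{Tr}(P(1)H')\bigr].
\]

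\emph{Bound by $\Lambda_g$.} For any rank-$g$ orthogonal projector $R$, Ky Fan's maximum principle places $\operatorname{Tr}(RH')$ between the sum of the $g$ smallest and the sum of the $g$ largest eigenvalues of $H'$. The difference of two such traces is therefore at most twice $\Lambda_g$, and halving it gives $C_L\le\Lambda_g$.

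The inequality $\Lambda_g\le g\Lambda$ then follows because each eigenvalue of $H'$ has absolute value at most $\Lambda$, so each of the two sums of $g$ eigenvalues is at most $g\Lambda$ in absolute value.

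\emph{Main obstacle.} The delicate step is justifying the differentiation when eigenvalues inside the ground subspace cross. This is handled by working only with $P$, $Q$ and traces, and by noting that \cref{eq:Xsol} is valid pointwise, since $QHQ$ and $PHP$ have disjoint spectra.
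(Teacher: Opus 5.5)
Your proof is correct and follows the paper's argument essentially step by step: you get $\mathcal{E}_P'=\operatorname{Tr}(PH')$ from $P'=X+X^\dagger$ and $[H,P]=0$, insert \eqref{eq:Xsol} into $\operatorname{Tr}(P'H')$ to obtain $-2M_1$, integrate over $[0,1]$, and bound the endpoint traces by the extremal sums of $g$ eigenvalues of $H'$ (Ky Fan). Your extra remarks, that $M_1$ does not depend on the choice of eigenbasis and that $P$ is smooth, are consistent with the paper's setup and only make explicit what the paper leaves implicit.
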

\begin{proof}
  From $P'=PP'Q+QP'P$, as in the proof of Lemma~\ref{lem:speed}, and $[H,P]=0$,
  \begin{align*}
    \mathcal{E}_P' = \operatorname{Tr}(PH' + P'H) = \operatorname{Tr}(PH').
  \end{align*}
  So, using \eqref{eq:Xsol},
  \begin{align*}
    \mathcal{E}_P'' &= \operatorname{Tr}(P'H') = \operatorname{Tr}((X+X^\dagger)H') \\
                    &= -2 \sum_{j<g\le k}\frac{|\langle\Phi_k|H'|\Phi_j\rangle|^2}{E_k-E_j} = -2 M_1(s).
  \end{align*}
  Then,
  \begin{align*}
    C_L &= -\frac{1}{2} \int_0^1 \mathcal{E}_P'' ds = \frac{1}{2} (\mathcal{E}_P'(0) - \mathcal{E}_P'(1)) \\
                       &= \frac{1}{2} [\operatorname{Tr}(P(0)H')-\operatorname{Tr}(P(1)H')] \leq \Lambda_g,
  \end{align*}
  because $\operatorname{Tr}(P(0)H')$ is at most the sum of the $g$ largest
  eigenvalues of $H'$ and $\operatorname{Tr}(P(1)H')$ is at least the sum of the $g$
  smallest~\cite{Bhatia1997}.
\end{proof}
Lemmas~\ref{lem:speed} and \ref{lem:budget} combine to bound $L_P$ by the integral of $1/\Delta(s)$.
\begin{thm}
\begin{align}
  L_P&\le\Big(C_L\int_0^1\frac{ds}{\Delta(s)}\Big)^{1/2}\nonumber\\
     &\le\Big(\Lambda_g\int_0^1\frac{ds}{\Delta(s)}\Big)^{1/2}.
\end{align}
\label{thm:L}
\end{thm}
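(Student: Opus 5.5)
The plan is to combine Lemma~\ref{lem:speed} with the Cauchy--Schwarz inequality on $[0,1]$, and then invoke Lemma~\ref{lem:budget} for the second inequality. By Lemma~\ref{lem:speed}, pointwise in $s$ we have
\begin{align*}
  v_P(s)\le \sqrt{M_1(s)}\,\frac{1}{\sqrt{\Delta(s)}} .
\end{align*}
This uses $M_1(s)\ge0$, which holds because every term in Eq.~\eqref{eq:M1s} has a positive denominator $E_k-E_j\ge\Delta(s)>0$. Integrating over $s$ and applying Cauchy--Schwarz to the product $\sqrt{M_1}\cdot\Delta^{-1/2}$ gives
\begin{align*}
  L_P=\int_0^1 v_P\,ds\le\Big(\int_0^1 M_1\,ds\Big)^{1/2}\Big(\int_0^1\frac{ds}{\Delta}\Big)^{1/2}
  =\Big(C_L\int_0^1\frac{ds}{\Delta(s)}\Big)^{1/2},
\end{align*}
by the definition~\eqref{eq:CLdef} of $C_L$. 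Lemma~\ref{lem:budget} then gives $C_L\le\Lambda_g$, and substituting this yields the second inequality.

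The only point needing care is measurability and integrability. The eigenbasis $\Phijs$ is chosen only pointwise and can be non-smooth where levels inside the ground subspace cross. However, $M_1(s)$ does not depend on that choice. It equals $-\tfrac12\mathcal{E}_P''(s)$ by Lemma~\ref{lem:budget}, and it can also be written as $\operatorname{Tr}(X^\dagger(QHQ-PHP)X)$-type expressions built from $P(s)$ alone. Hence $M_1$ is continuous in $s$, because $P(s)$ is smooth through its contour-integral representation. Likewise $v_P$ is continuous, being the norm of the continuous function $P'$, and $1/\Delta$ is continuous and bounded by $1/\Dstar$. All the integrals are therefore finite Riemann integrals, and Cauchy--Schwarz applies directly. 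I do not expect any real obstacle: the step most worth checking is this basis-independence of $M_1$, which Lemma~\ref{lem:budget} already supplies.
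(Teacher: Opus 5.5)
Your proposal is correct and follows the paper's proof exactly. It uses the pointwise bound of Lemma~\ref{lem:speed}, applies Cauchy--Schwarz to $\sqrt{M_1}\,\Delta^{-1/2}$, and then substitutes $C_L\le\Lambda_g$ from Lemma~\ref{lem:budget}. Your regularity remark is a harmless addition the paper leaves implicit, but note that the cleanest basis-free form is $M_1=-\tfrac12\operatorname{Tr}(P'H')$, whereas the literal expression $\operatorname{Tr}\big(X^\dagger(QHQ-PHP)X\big)$ loses its $PHP$ part because $PX=0$ (the correct version is $\operatorname{Tr}(X^\dagger QHQX)-\operatorname{Tr}(XPHPX^\dagger)$).
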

\begin{proof}
  Inserting Lemma~\ref{lem:speed} into Eq.~\eqref{eq:LPdef} and using the Cauchy--Schwarz inequality, we have
  \begin{align*}
    L_P \leq \int_0^1 \left(\frac{M_1(s)}{\Delta(s)}\right)^{1/2} ds \leq \left(C_L \int_0^1 \frac{ds}{\Delta(s)}\right)^{1/2}.
  \end{align*}
  Then, $C_L\le\Lambda_g$ from Lemma~\ref{lem:budget} proves the second inequality.
\end{proof}
Using $\Delta(s)\ge\Delta_*$, we have
\begin{align}
  L_P \leq \left(\frac{C_L}{\Delta_*}\right)^{1/2}, \label{eq:crudepathlengthbound}
\end{align}
which recovers, at any rank $g$, the best known scaling $\mathcal{O}(\Delta_*^{-1/2})$ of the adiabatic path length $L$~\cite{Chiang2014}.
This bound replaces the integral of $1/\Delta(s)$ over $s\in[0,1]$ by its maximum value.
The actual value of the integral can be much smaller than $1/\Dstar$, depending on how the small-gap region behaves.

In the numerical experiments underlying our previous work~\cite{mchanCGS}, we observed that the width
of this region shrinks as the threshold defining a small gap is lowered.
How fast this region shrinks is what the gap integral is sensitive to. For instance,
a width vanishing linearly in the threshold renders the integral logarithmic,
while one vanishing more slowly gives a power law. We therefore classify gap profiles by how fast
the width shrinks.
\begin{defn}\label{def:width}
The gap profile is in \emph{width class} $p\ge1$ if
\begin{align}
\mubd(\gamma):=\left|\{s:\Delta(s)<\gamma\}\right|
\;\le\; C_W\left(\frac{\gamma}{\Gamma}\right)^{1/p}
\end{align}
with a positive $C_W$ for all $\gamma$ in $[\Delta_*,\Gamma]$.
\end{defn}

Definition~\ref{def:width} applies to a single Hamiltonian, so
$\Delta_*$ is one fixed number, while we want to describe how $L_P$ and
$K_P$ behave as $\Delta_*$ decreases. We therefore consider a set of
Hamiltonians with varying $\Delta_*$. In
our earlier work~\cite{mchanCGS}, for example, one such set consists of molecular
Hamiltonians at different bond lengths, with $\Delta_*$ spanning several
decades. We call such a set a family.
\begin{defn}\label{def:family}
A \emph{family} is a set of twice continuously differentiable Hamiltonians
$H(s)$ satisfying the assumptions of this section except
Eq.~\eqref{eq:Hs}, with a common rank $g$ of the ground subspace. A family is
\emph{affine} if every member is of the form~\eqref{eq:Hs}. An affine family is in
\emph{width class} $p$ if there exist constants $\Gamma$ and $C_W$ such that every
member satisfies Definition~\ref{def:width} with this same pair $(\Gamma,C_W)$, and
if a single constant bounds $\Lambda=\|H'\|$ for every member.
\end{defn}

Definition~\ref{def:width} is not the only constraint on $\mubd$. Because
$\{s:\Delta(s)<\gamma\}$ is a subset of $[0,1]$, $\mubd(\gamma)\le1$.
The two constraints cross at
$\Gamma C_W^{-p}$. Because $\gamma\le\Gamma$, together they give
$\mubd(\gamma)\le C_W(\gamma/\Gamma)^{1/p}$ for $\gamma\le\gamma_c$ and
$\mubd(\gamma)\le1$ for $\gamma>\gamma_c$, with
$\gamma_c=\min\{\Gamma C_W^{-p},\Gamma\}$.
The following lemma bounds the gap integral of Theorem~\ref{thm:L}.
\begin{lem}\label{lem:layer}
(i) For any gap profile,
\begin{align}
\int_0^1\frac{ds}{\Delta(s)}
=\frac{1}{\Gamma}+\int_{\Dstar}^{\Gamma}\frac{\mubd(\gamma)}{\gamma^2}\,d\gamma.
\end{align}
(ii) If the gap profile is in width class $p$, then
$\int_0^1 ds/\Delta(s)\le I_p(\Dstar)$, where $I_p(\Dstar)=1/\Dstar$ for
$\Dstar>\gamma_c$, and for $\Dstar\le\gamma_c$,
\begin{align}
I_p(\Dstar)&=\frac{1}{\gamma_c}+\frac{C_W}{\Gamma}\log\frac{\gamma_c}{\Dstar}
&& (p=1)
\nonumber\\
&=\frac{1}{\gamma_c}+\frac{pC_W\big(\Dstar^{\frac{1-p}{p}}-\gamma_c^{\frac{1-p}{p}}\big)}{(p-1)\,\Gamma^{1/p}}
&& (p>1).
\label{eq:layerclipped}
\end{align}
In all cases $I_p(\Dstar)\le\Dstar^{-1}$.
\end{lem}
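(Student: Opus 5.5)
Part (i) is the layer-cake formula for the function $1/\Delta(s)$ on $[0,1]$. I would write, for each $s$,
\begin{align*}
\frac{1}{\Delta(s)}=\frac{1}{\Gamma}+\int_{\Delta(s)}^{\Gamma}\frac{d\gamma}{\gamma^2}
=\frac{1}{\Gamma}+\int_{\Dstar}^{\Gamma}\frac{\mathds{1}[\Delta(s)<\gamma]}{\gamma^2}\,d\gamma .
\end{align*}
The first equality uses $\Delta_*\le\Delta(s)\le\Gamma$. The second extends the range of integration down to $\Dstar$ and inserts an indicator. I would then integrate over $s\in[0,1]$ and exchange the order of integration by Tonelli, which applies because the integrand is nonnegative and measurable ($\Delta$ is continuous, since the eigenvalues are $\Lambda$-Lipschitz). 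Since $\int_0^1\mathds{1}[\Delta(s)<\gamma]\,ds=\mubd(\gamma)$, this gives (i). Whether the indicator uses a strict or non-strict inequality affects only a set of $\gamma$ of measure zero.

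For (ii), I would insert the combined pointwise bound $\mubd(\gamma)\le\min\{C_W(\gamma/\Gamma)^{1/p},1\}$ into (i). This bound equals $C_W(\gamma/\Gamma)^{1/p}$ for $\gamma\le\gamma_c$ and $1$ for $\gamma>\gamma_c$.

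\emph{Case $\Dstar>\gamma_c$.} Using only $\mubd\le1$ gives $1/\Gamma+(1/\Dstar-1/\Gamma)=1/\Dstar$.

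\emph{Case $\Dstar\le\gamma_c$.} I split the integral at $\gamma_c$. On $[\gamma_c,\Gamma]$, the bound $\mubd\le1$ gives $1/\gamma_c-1/\Gamma$, which combines with the leading $1/\Gamma$ to give $1/\gamma_c$. On $[\Dstar,\gamma_c]$, the integrand is at most $C_W\Gamma^{-1/p}\gamma^{1/p-2}$. For $p=1$ it integrates to $(C_W/\Gamma)\log(\gamma_c/\Dstar)$. For $p>1$, the antiderivative of $\gamma^{(1-2p)/p}$ is $\frac{p}{1-p}\gamma^{(1-p)/p}$, which yields exactly the stated expression. I would double-check the sign bookkeeping here, which is routine.

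For the final claim $I_p(\Dstar)\le\Dstar^{-1}$, I would avoid the closed form and argue through the integrals. Since $\min\{C_W(\gamma/\Gamma)^{1/p},1\}\le1$, we have
\begin{align*}
I_p(\Dstar)=\frac1\Gamma+\int_{\Dstar}^{\Gamma}\frac{\min\{C_W(\gamma/\Gamma)^{1/p},1\}}{\gamma^2}\,d\gamma\le\frac1\Gamma+\int_{\Dstar}^{\Gamma}\frac{d\gamma}{\gamma^2}=\frac1{\Dstar}.
\end{align*}
This integral representation of $I_p$ is exactly what the case computation produced, so nothing more is needed.

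The only delicate point is that definition: one must note that $I_p$ is, by construction, the integral of the clipped bound. Once that identification is stated, the comparison is immediate and no estimates of the logarithm or power terms are required.
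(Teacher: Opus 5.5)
Your proposal is correct and follows the paper's proof essentially step for step: the layer-cake identity with an exchange of the $s$ and $\gamma$ integrals for (i), a split at $\gamma_c$ using $\mubd\le1$ above $\gamma_c$ and the width-class bound below it for (ii), and the observation that $I_p(\Dstar)$ is the integral of the clipped bound $\min\{C_W(\gamma/\Gamma)^{1/p},1\}\le1$ for the final inequality. The only cosmetic difference is that you truncate at $\Gamma$ from the outset, writing $1/\Delta(s)=1/\Gamma+\int_{\Delta(s)}^{\Gamma}d\gamma/\gamma^2$, whereas the paper integrates up to $\infty$ and then uses $\mubd(\gamma)=1$ for $\gamma>\Gamma$.
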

\begin{proof}
For every $s$,
\begin{align*}
  \frac{1}{\Delta(s)}
  =\int_{\Delta(s)}^\infty\frac{d\gamma}{\gamma^2}
  =\int_{\Delta_*}^\infty
    \frac{\mathds{1}_{\gamma>\Delta(s)}}{\gamma^2}\,d\gamma .
\end{align*}
Integrating this over $s$ and exchanging the order of the two integrals,
\begin{align*}
  \int_0^1 \frac{ds}{\Delta(s)}
  &=\int_{\Delta_*}^\infty\frac{d\gamma}{\gamma^2}\int_0^1\mathds{1}_{\gamma>\Delta(s)}\,ds
   =\int_{\Delta_*}^\infty\frac{\mubd(\gamma)}{\gamma^2}\,d\gamma \\
  &=\int_{\Delta_*}^{\Gamma}\frac{\mubd(\gamma)}{\gamma^2}\,d\gamma
    +\int_{\Gamma}^\infty\frac{\mubd(\gamma)}{\gamma^2}\,d\gamma \\
  &= \int_{\Delta_*}^{\Gamma}\frac{\mubd(\gamma)}{\gamma^2}\,d\gamma
    +\frac{1}{\Gamma},
\end{align*}
where the inner $s$-integral is $\int_0^1\mathds{1}_{\gamma>\Delta(s)}\,ds=\mubd(\gamma)$
and the last step uses $\mubd(\gamma)=1$ for $\gamma>\Gamma$, which holds
since $\Delta(s)\le\Gamma$ for all $s$.

For (ii), let $\Dstar\le\gamma_c$ and split the integral of (i) at $\gamma_c$.
On $[\gamma_c,\Gamma]$, $\mubd\le1$ gives
$\int_{\gamma_c}^{\Gamma}\gamma^{-2}d\gamma=1/\gamma_c-1/\Gamma$. On
$[\Dstar,\gamma_c]$, Definition~\ref{def:width} gives
\begin{align*}
&\int_{\Dstar}^{\gamma_c}\frac{\mubd(\gamma)}{\gamma^2}\,d\gamma
\le \frac{C_W}{\Gamma^{1/p}}\int_{\Dstar}^{\gamma_c}\gamma^{\frac1p-2}\,d\gamma\\
&\quad=
\begin{cases}
\dfrac{C_W}{\Gamma}\log\dfrac{\gamma_c}{\Dstar}, & p=1,\\[8pt]
\dfrac{p\,C_W\big(\Dstar^{-\frac{p-1}{p}}-\gamma_c^{-\frac{p-1}{p}}\big)}{(p-1)\,\Gamma^{1/p}}, & p>1.
\end{cases}
\end{align*}
Adding the three parts gives $I_p(\Dstar)$ for $\Dstar\le\gamma_c$. For
$\Dstar>\gamma_c$, $\mubd\le1$ on all of $[\Dstar,\Gamma]$, and (i) gives
$1/\Dstar=I_p(\Dstar)$, and the two branches of $I_p$ meet at $\Dstar=\gamma_c$.
Finally, because $C_W(\gamma/\Gamma)^{1/p}\le1$ for $\gamma\le\gamma_c$, the same
integrals give $I_p(\Dstar)\le\Dstar^{-1}$ in all cases.
\end{proof}
Combining Theorem~\ref{thm:L} and Lemma~\ref{lem:layer}, we have
\begin{thm}\label{thm:rates}
  Suppose the gap profile of the ground subspace of $H(s)$ is in width class $p$.
  Then $L_P\le[C_L\,I_p(\Dstar)]^{1/2}$.
  Therefore, along an affine family of width class $p$,
  $L_P = \mathcal{O}\big(\sqrt{\log \Delta_*^{-1}}\big)$ for $p=1$
  and $L_P=\mathcal{O}(\Delta_*^{\frac{1-p}{2p}})$ for $p>1$.
\end{thm}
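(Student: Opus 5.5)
The plan is to chain Theorem~\ref{thm:L} with Lemma~\ref{lem:layer}(ii), and then read off the asymptotics. For a single Hamiltonian whose gap profile is in width class $p$, Theorem~\ref{thm:L} gives
\begin{align*}
L_P\le\Big(C_L\int_0^1 ds/\Delta(s)\Big)^{1/2}.
\end{align*}
Lemma~\ref{lem:layer}(ii) bounds the integral by $I_p(\Dstar)$. Since $C_L\ge0$ (each term of $M_1$ is nonnegative, because $E_k-E_j\ge\Delta>0$) and the square root is monotone, substitution yields $L_P\le[C_L I_p(\Dstar)]^{1/2}$ directly.

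For the family statement, I need bounds uniform over members. By Lemma~\ref{lem:budget}, $C_L\le\Lambda_g\le g\Lambda$. Definition~\ref{def:family} provides a common rank $g$ and a common bound on $\Lambda$, so $C_L$ is bounded by a single constant. The constants $\Gamma$ and $C_W$ are shared across the family, so $\gamma_c=\min\{\Gamma C_W^{-p},\Gamma\}$ is a fixed positive number. The asymptotic claims concern $\Dstar\to0$, so I restrict to members with $\Dstar\le\gamma_c$ and use the corresponding branch of $I_p$.

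For $p=1$ this gives $I_1(\Dstar)=1/\gamma_c+(C_W/\Gamma)\log(\gamma_c/\Dstar)$. The right-hand side is $\mathcal O(\log\Dstar^{-1})$, since $\log(\gamma_c/\Dstar)=\log\Dstar^{-1}+\log\gamma_c$ and the additive constants are dominated once $\Dstar$ is small. Hence $L_P=\mathcal O(\sqrt{\log\Dstar^{-1}})$.

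For $p>1$, dropping the negative term $-\gamma_c^{(1-p)/p}$ leaves
\begin{align*}
I_p\le 1/\gamma_c+pC_W\Dstar^{(1-p)/p}/[(p-1)\Gamma^{1/p}].
\end{align*}
This is $\mathcal O(\Dstar^{(1-p)/p})$, because the exponent is negative, so the power diverges and dominates the constant. Taking the square root gives $L_P=\mathcal O(\Dstar^{(1-p)/(2p)})$.

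I expect no real obstacle here. The only points that need care are that every constant ($\Lambda_g$, $\Gamma$, $C_W$, $\gamma_c$) is uniform over the family, which is exactly what Definition~\ref{def:family} supplies, and that members with $\Dstar>\gamma_c$ are irrelevant to the small-$\Dstar$ asymptotics. In any case those members have $L_P\le(C_L/\gamma_c)^{1/2}$, which is bounded.
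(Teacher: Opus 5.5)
Your proposal is correct and follows the paper's own argument, which is simply to insert Lemma~\ref{lem:layer}(ii) into Theorem~\ref{thm:L}. Your extra bookkeeping is exactly what that one-line proof leaves implicit for the family asymptotics: the uniform bound $C_L\le\Lambda_g\le g\Lambda$ from Lemma~\ref{lem:budget} and Definition~\ref{def:family}, the fixed $\gamma_c$, and the restriction to the branch $\Dstar\le\gamma_c$.
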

\begin{proof}
Insert Lemma~\ref{lem:layer}(ii) into Theorem~\ref{thm:L}.
\end{proof}
Theorem~\ref{thm:rates} improves on \cref{eq:crudepathlengthbound} in every width
class, because $I_p(\Dstar)\le1/\Dstar$. At $p=1$ Theorem~\ref{thm:rates} replaces the power law by
$\sqrt{\log\Delta_*^{-1}}$, and at $p>1$ it lowers the exponent of $1/\Dstar$
from $1/2$ to $(p-1)/(2p)$.

\begin{figure}
  \centering
  \includegraphics[width=\columnwidth]{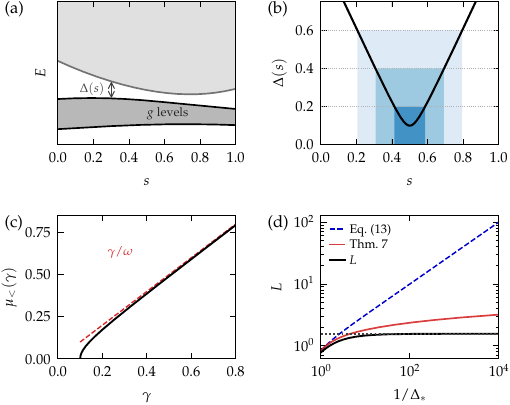}
  \caption{(a) A schematic of the ground subspace. The $g$ lowest levels are in
  the darker shaded region, inside which they may cross or become degenerate, and
  $\Delta(s)$ separates them from the rest of the spectrum. (b)--(d) The
  two-level crossing
  $H(s)=\omega(s-s_*)\sigma_z+\tfrac{\Delta_*}{2}\sigma_x$ with $\omega=1$ and
  $s_*=1/2$, at $\Delta_*=0.1$ in (b) and (c). (b) The shaded regions
  represent the sets $\{s:\Delta(s)<\gamma\}$ for $\gamma=0.2,0.4,0.6$. (c) Their width
  $\mubd(\gamma)$ (solid) and the bound $\gamma/\omega$ of
  \cref{eq:minorant_mubd} (dashed). (d) $L$ versus $1/\Delta_*$, with
  \cref{eq:crudepathlengthbound} (blue), the bound of
  Theorem~\ref{thm:rates} (red), and the computed $L$ (black). The dotted line
  is $\pi/2$.}
  \label{fig:mechanism}
\end{figure}

The inequality in Definition~\ref{def:width} reads the same way for $0<p<1$, and
one might ask whether an affine family of that class could give $L_P=\mathcal O(1)$.
For $0<p<1$, the integrand $\mubd(\gamma)/\gamma^2$ in Lemma~\ref{lem:layer}(i) is at
most $C_W\Gamma^{-1/p}\gamma^{1/p-2}$, whose integral over $[0,\Gamma]$ is finite. The
gap integral then stays bounded as $\Dstar\to0$, and Theorem~\ref{thm:L} gives
$L_P=\mathcal O(1)$. However, an affine family with $\Dstar\to0$ cannot be in width class $0<p<1$.
Each eigenvalue moves at a rate of at most $\Lambda$, so $\Delta(s)\le\Dstar+2\Lambda|s-s_*|$, which is
below $2\Dstar$ for $|s-s_*|<\Dstar/(2\Lambda)$. Within $[0,1]$, these $s$ form an interval of
length $\min\{s_*,\Dstar/(2\Lambda)\}+\min\{1-s_*,\Dstar/(2\Lambda)\}$.
Because one of $s_*$ and $1-s_*$ is at least $\tfrac12$, we have
\begin{align}
  \mubd(2\Dstar)\ge\min\Big\{\frac12,\frac{\Dstar}{2\Lambda}\Big\}.
  \label{eq:mubdlower}
\end{align}
It also holds when $\Lambda=0$, where the gap is constant and $\mubd(2\Dstar)=1$.
A single constant bounds $\Lambda$ along the family (Definition~\ref{def:family}), so the
right-hand side of \cref{eq:mubdlower} is at least a constant times $\Dstar$ as
$\Dstar\to0$. Definition~\ref{def:width} with $0<p<1$ would instead give
$\mubd(2\Dstar)\le C_W(2\Dstar/\Gamma)^{1/p}=o(\Dstar)$, which contradicts \cref{eq:mubdlower}. Thus $p=1$ is extremal, though the actual $L_P$
of a $p=1$ family may still stay bounded, as in Sec.~\ref{sec:applications}.

Applying Theorem~\ref{thm:rates} requires the width class information, which can
be complicated to check. The following lemma derives it from a simple condition
on the gap.
\begin{lem}\label{lem:minorant}
Denote by $s_*$ a minimizer of $\Delta$, so $\Delta(s_*)=\Delta_*$.
Suppose that, for a constant $\omega>0$,
\begin{align}
  \Delta(s)\ge2\omega\,|s-s_*|^{p},
  \qquad s\in[0,1].
  \label{eq:minorant}
\end{align}
Then, for $\gamma\in[\Dstar,\Gamma]$,
\begin{align}
  \mubd(\gamma)\;\le\;
  2\left(\frac{\gamma}{2\omega}\right)^{1/p},\label{eq:minorant_mubd}
\end{align}
so the gap profile is in width class $p$ with
$C_W=2(\Gamma/2\omega)^{1/p}$.
\end{lem}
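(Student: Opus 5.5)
The argument is a set inclusion followed by a comparison of constants. Fix $\gamma\in[\Dstar,\Gamma]$ and take any $s$ with $\Delta(s)<\gamma$. The hypothesis \cref{eq:minorant} gives $2\omega|s-s_*|^p\le\Delta(s)<\gamma$, so $|s-s_*|<(\gamma/2\omega)^{1/p}$. Hence
\begin{align*}
  \{s:\Delta(s)<\gamma\}\subseteq\big(s_*-r,\,s_*+r\big)\cap[0,1],\qquad r=\Big(\frac{\gamma}{2\omega}\Big)^{1/p}.
\end{align*}
The Lebesgue measure of the right-hand side is at most $2r$. This gives \cref{eq:minorant_mubd} directly. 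Measurability of the sublevel set is automatic because $\Delta$ is continuous, since the eigenvalues are Lipschitz with constant $\Lambda$ by Weyl's theorem.

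To match Definition~\ref{def:width}, I rewrite the bound as
\begin{align*}
  2\Big(\frac{\gamma}{2\omega}\Big)^{1/p}=2\Big(\frac{\Gamma}{2\omega}\Big)^{1/p}\Big(\frac{\gamma}{\Gamma}\Big)^{1/p}.
\end{align*}
This is exactly the form $C_W(\gamma/\Gamma)^{1/p}$ with $C_W=2(\Gamma/2\omega)^{1/p}$, and $C_W>0$ because $\omega>0$ and $\Gamma\ge\Dstar>0$.

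I expect no real obstacle. The only points to watch are these:
\begin{itemize}
  \item the strict inequality $\Delta(s)<\gamma$ is what defines $\mubd$, so the containment in an open interval is legitimate;
  \item intersecting with $[0,1]$ can only decrease the measure, so the factor $2$ is an upper bound even when $s_*$ is near an endpoint;
  \item the statement makes no claim that $\Delta_*>0$ interacts with \cref{eq:minorant} at $s=s_*$, because there the condition reads $\Delta_*\ge0$, which holds trivially.
\end{itemize}
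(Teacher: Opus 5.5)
Your proposal is correct and follows the same route as the paper: the hypothesis confines $\{s:\Delta(s)<\gamma\}$ to an interval of half-width $(\gamma/2\omega)^{1/p}$ about $s_*$, and rewriting $2(\gamma/2\omega)^{1/p}$ as $C_W(\gamma/\Gamma)^{1/p}$ gives the stated $C_W$. Your remarks on measurability, strictness, and intersecting with $[0,1]$ are valid details that the paper leaves implicit.
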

\begin{proof}
  If $\Delta(s)<\gamma$, Eq.~\eqref{eq:minorant} implies
  $2\omega|s-s_*|^p<\gamma$, so $\{s:\Delta(s)<\gamma\}$ lies in an interval of
  length at most $2(\gamma/2\omega)^{1/p}$, which is Eq.~\eqref{eq:minorant_mubd}.
  Writing $2(\gamma/2\omega)^{1/p}=C_W(\gamma/\Gamma)^{1/p}$ with
  $C_W=2(\Gamma/2\omega)^{1/p}$ gives Definition~\ref{def:width}.
\end{proof}

From this lemma, we see that width class $p=1$ deserves particular attention.
The reason is that it corresponds to a so-called avoided crossing.
Consider two levels that a symmetry leaves uncoupled, so that they can cross linearly in
$s$. Once that symmetry is broken, the coupling separates them and the crossing becomes
avoided. The two-level form of the gap near it is
$\sqrt{\Dstar^2+4\omega^2(s-s_*)^2}$. This is at least $2\omega|s-s_*|$, so
Eq.~\eqref{eq:minorant} holds with $p=1$ and
$L_P=\mathcal O\big(\sqrt{\log\Dstar^{-1}}\big)$.
Figure~\ref{fig:mechanism}(b,c) shows the sets $\{s:\Delta(s)<\gamma\}$ and the width
$\mubd(\gamma)$ for such a crossing, together with the bound
\cref{eq:minorant_mubd}. Avoided crossings are common in
physical and chemical systems, and every family of Sec.~\ref{sec:applications} is in
width class $p=1$.

The scaling of Theorem~\ref{thm:rates} in $\Dstar$ is tight for every integer
$p\ge1$. The tightness is proved by the existence of an
affine family of width class $p=1$ whose path length grows as
$\sqrt{\log\Dstar^{-1}}$, and, for each integer $p\ge2$, of one whose path
length grows as $\Dstar^{-(p-1)/(2p)}$ (Appendix~\ref{app:sharpness}).

\subsection{Bound for the total curvature}
\label{sec:curvature}

The adiabatic runtime bound for the constant geometric speed
schedule~\cite{mchanCGS} depends on the product $L(K+L)$. Here $K=\int_0^L\kappa\,dl$ is the total curvature of the
path, $l$ is the arc length, and $dl=v_P\,ds$. The pointwise curvature is
\begin{align}
  \kappa=\left\|Q\,\frac{d^2|\Phi_0\rangle}{dl^2}\right\|,
  \label{eq:kappa}
\end{align}
which vanishes along a geodesic of the state space. We now bound the corresponding quantity for the ground subspace.

The curvature involves second derivatives of the states. To deal with this, we
work with a moving frame $U(s)$, a matrix
whose $g$ orthonormal columns span the ground subspace. It satisfies
$U^\dagger U=\mathbb I_g$ and $UU^\dagger=P$. We choose the parallel-transport
gauge $U^\dagger U'=0$, which removes rotations of the frame within the ground
subspace. Such a frame is obtained
by transporting an initial frame, $U(s)=W(s)U(0)$, where the unitary $W$
solves $W'=[P',P]\,W$ with $W(0)=\mathbb I$~\cite{Kato1950,AvronSeilerYaffe1987}. Then $P(s)W(s)=W(s)P(0)$, so the
columns of $U(s)$ stay in the ground subspace, and $U'=[P',P]U=P'U$ because
$PU=U$ and $PP'P=0$. Writing $P'=X+X^\dagger$ with $X=QP'P$ as in the proof of
Lemma~\ref{lem:speed}, this reads $U'=XU$, which maps into $\Ran Q$. The gauge
condition therefore holds, and $\|U'\|_{\HS}=\|X\|_{\HS}=v_P$.

The unit tangent of the motion is $u=U'/v_P$. This requires $v_P>0$,
which the following lemma guarantees whenever the projector $P(s)$ is not
constant on $[0,1]$.
\begin{lem}\label{lem:neverstops}
If $v_P(s_0)=0$ at some $s_0\in[0,1]$, then $P(s)$ is constant on $[0,1]$.
\end{lem}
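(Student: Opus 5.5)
The plan is to show that $v_P(s_0)=0$ forces $H'$ to commute with $P(s_0)$. Because $H(s)$ is affine, this makes every $H(s)$ block diagonal with respect to the fixed splitting $\Ran P(s_0)\oplus\Ran Q(s_0)$. A continuity argument using $\Delta(s)\ge\Dstar>0$ then pins the ground subspace to $\Ran P(s_0)$ for every $s\in[0,1]$.

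\emph{Step 1 (vanishing speed means $H'$ is block diagonal).} From Eq.~\eqref{eq:vp2x}, $v_P^2=\operatorname{Tr}(X^\dagger X)$. So $v_P(s_0)=0$ gives $X(s_0)=0$. By Eq.~\eqref{eq:Xsol}, every matrix element $\langle\Phi_k|H'|\Phi_j\rangle$ with $j<g\le k$ vanishes at $s_0$. Equivalently, $Q_0H'P_0=0$, where $P_0=P(s_0)$ and $Q_0=\mathbb I-P_0$. Taking adjoints gives $P_0H'Q_0=0$, so $[H',P_0]=0$. Since $[H(s_0),P_0]=0$ and $H(s)=H(s_0)+(s-s_0)H'$ by Eq.~\eqref{eq:Hs}, we get $[H(s),P_0]=0$ for all $s$. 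Hence $H(s)=A(s)\oplus B(s)$, with $A(s)=P_0H(s)P_0$ acting on the $g$-dimensional space $\Ran P_0$ and $B(s)=Q_0H(s)Q_0$ acting on $\Ran Q_0$. Both blocks depend continuously on $s$.

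\emph{Step 2 (continuity argument).} Define $f(s)=\min\operatorname{spec}B(s)-\max\operatorname{spec}A(s)$, which is continuous in $s$. At $s_0$ the $g$ lowest eigenvalues of $H(s_0)$ are exactly the spectrum of $A(s_0)$, so $f(s_0)=\Delta(s_0)>0$.

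Whenever $f(s)>0$, the $g$ lowest eigenvalues of $H(s)$ are those of $A(s)$, so the spectral projector onto them is $P(s)=P_0$. In that case $f(s)=E_g(s)-E_{g-1}(s)=\Delta(s)\ge\Dstar$.

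Now suppose $f$ vanished somewhere to the right of $s_0$, and let $s_1$ be the first such point. On $[s_0,s_1)$ we have $f\ge\Dstar$, so by continuity $f(s_1)\ge\Dstar>0$, a contradiction. The same argument applies to the left of $s_0$. Therefore $f>0$ on all of $[0,1]$, and $P(s)=P_0$ throughout.

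\emph{Main obstacle.} The delicate point is Step 2. Block diagonality alone only says that $P(s)$ commutes with $P_0$; it does not say they are equal. In principle an eigenvalue of $B$ could cross below the top of $A$ and swap which block hosts the ground subspace. What rules this out is the standing assumption $\Delta(s)\ge\Dstar>0$ on all of $[0,1]$: the gap between the blocks can never close, so no such swap can occur.
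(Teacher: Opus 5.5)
Your proposal is correct and follows essentially the same route as the paper: vanishing speed forces $Q(s_0)H'P(s_0)=0$ via Eqs.~\eqref{eq:vp2x} and \eqref{eq:Xsol}, affinity then makes $P(s_0)$ commute with every $H(s)$, and a first-crossing continuity argument rules out a block swap. That argument works because the inter-block gap $\min\operatorname{spec}B-\max\operatorname{spec}A$ equals $\Delta(s)\ge\Dstar$ wherever it is positive.
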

\begin{proof}
At $s_0$, Eq.~\eqref{eq:vPexact} gives
$\langle\Phi_k|H'|\Phi_j\rangle=0$ for all $j<g\le k$, that is, $QH'P=0$.
Taking the adjoint, $PH'Q=0$. Thus $P(s_0)$ commutes with $H'$. It also commutes
with $H(s_0)$, being a spectral projector of $H(s_0)$. Since
$H(s)=H(s_0)+(s-s_0)H'$ by Eq.~\eqref{eq:Hs}, $P(s_0)$ commutes with $H(s)$ for
every $s$. The spectrum then splits into two blocks, one carried by $\Ran P(s_0)$ and one
by its complement. Let $e_P(s)$ be the largest eigenvalue of the first
block and $e_Q(s)$ the smallest of the second, both continuous in $s$.
Wherever $e_Q(s)>e_P(s)$, the first block holds the lowest $g$
eigenvalues. Then $e_P=E_{g-1}$, $e_Q=E_g$, and
$e_Q-e_P=\Delta(s)\ge\Dstar$. This is the case at $s_0$. If
$e_Q(s)\le e_P(s)$ somewhere, let $s_1$ be such a point closest to
$s_0$. Then $e_Q-e_P\ge\Dstar$ between $s_0$ and $s_1$, and
continuity gives $e_Q(s_1)-e_P(s_1)\ge\Dstar$, a contradiction.
Hence the first block holds the lowest $g$ eigenvalues for every $s$, and
$P(s)=P(s_0)$.
\end{proof}

In analogy with Eq.~\eqref{eq:kappa}, the total curvature of the ground
subspace is
\begin{align}
  K_P=\int_0^1\big\|Q\,u'\big\|_{\HS}\,ds.
  \label{eq:KPdef}
\end{align}
Any two frames in our gauge choice differ by a constant unitary, so $K_P$ does
not depend on the frame choice. At $g=1$ the frame is the ground state itself,
with its phase fixed by $\langle\Phi_0|\Phi_0'\rangle=0$. Then
$u=d|\Phi_0\rangle/dl$, $\|Qu'\|\,ds=\kappa\,dl$, and $K_P=K$. For a constant
$P$ the path does not turn, and we set $K_P=0$.

The bound for $K_P$ comes from the relation $HU=Uh$, where $h=U^\dagger HU$ is
the Hamiltonian in the frame. It is stated in terms of
\begin{align}
  Y=QH'u-uh',
  \label{eq:Ydef}
\end{align}
and the average of $\|Y\|_{\HS}$ with weight $1/\Delta(s)$,
\begin{align}
  C_K=\frac{\displaystyle\int_0^1\|Y\|_{\HS}\,\frac{ds}{\Delta(s)}}
           {\displaystyle\int_0^1\frac{ds}{\Delta(s)}},
  \label{eq:CKdef}
\end{align}
which is defined whenever $P$ is not constant, since $u$ is then defined on all of
$[0,1]$ by Lemma~\ref{lem:neverstops}. For a constant $P$ we set $C_K=0$.
\begin{thm}\label{thm:K}
Let $\Omega$ be the difference between the largest and the smallest
eigenvalues of $H'$. Then,
\begin{align}
  K_P&\le2\,C_K\int_0^1\frac{ds}{\Delta(s)}\nonumber\\
     &\le2\,\Omega\int_0^1\frac{ds}{\Delta(s)}.
  \label{eq:KP}
\end{align}
\end{thm}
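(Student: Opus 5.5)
The plan is a pointwise bound $\|Qu'\|_{\HS}\le 2\|Y\|_{\HS}/\Delta(s)$, integrated over $s$, followed by $\|Y\|_{\HS}\le\Omega$. If $P$ is constant, $K_P=C_K=0$ and there is nothing to prove. Otherwise Lemma~\ref{lem:neverstops} gives $v_P>0$ on $[0,1]$, so $u=U'/v_P$ is defined and continuously differentiable everywhere (as far as I can tell, $P$ is twice continuously differentiable for an affine $H$). Integrating the pointwise bound gives
\begin{align*}
K_P\le2\int_0^1\|Y\|_{\HS}\frac{ds}{\Delta(s)}=2C_K\int_0^1\frac{ds}{\Delta(s)},
\end{align*}
which is the first inequality by Eq.~\eqref{eq:CKdef}.

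\textbf{Step 1 (Sylvester equation for $Qu'$).} I start from $HU=Uh$ with $h=U^\dagger HU$.
\begin{itemize}
\item The gauge $U^\dagger U'=0$ and $H''=0$ give $h'=U^\dagger H'U$.
\item Differentiating $HU=Uh$ and applying $Q$, using $U'=XU=v_P u$ with $u\in\Ran Q$, gives $QHQ\,u-u\,h=-QH'U/v_P$.
\item I differentiate this identity once more and project with $Q$ on the left. I will use $PP'P=QP'Q=0$, the relations $Xu=0$ and $X^\dagger u\in\Ran P$, and the rewriting $Q'=-P'$.
\end{itemize}
This should leave an equation of the form $QHQ\,(Qu')-(Qu')\,h=R$. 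The term $Q(QHQ)'u$ reduces to $QH'u$, and the derivative of the right-hand side produces $-QH'u$ again. Together with $-u h'$ and a term proportional to $(v_P'/v_P)\,QH'U/v_P$, the inhomogeneity $R$ should be expressible through $Y=QH'u-uh'$ and through $u$ itself. The target is $\|R\|_{\HS}\le2\|Y\|_{\HS}$. I expect this bookkeeping to be the main obstacle, in particular disposing of the $v_P'$ term. It is proportional to $QHQu-uh$, which is not obviously small. My first attempt will be to pair it with the normalization identity $\operatorname{Re}\operatorname{Tr}(u^\dagger u')=0$, which fixes how $v_P'/v_P$ is expressed through the other terms. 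If the cancellation is not exact, I will instead bound that piece by $\|Y\|_{\HS}$ directly; the factor $2$ is meant to absorb this.

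\textbf{Step 2 (inverting the Sylvester operator).} On $\Ran Q$ the operator $QHQ$ has spectrum at least $E_g$, and $h$ has spectrum at most $E_{g-1}$. Diagonalizing both, the entries of the solution are those of $R$ divided by $E_k-E_j\ge\Delta(s)$. Hence $\|Qu'\|_{\HS}\le\|R\|_{\HS}/\Delta(s)\le2\|Y\|_{\HS}/\Delta(s)$.

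\textbf{Step 3 ($\|Y\|_{\HS}\le\Omega$).} Replacing $H'$ by $H'-c\,\mathbb I$ changes $QH'u$ by $-cu$ (since $Qu=u$) and $uh'$ by $-cu$, so $Y$ is unchanged. I choose $c$ as the midpoint of the spectrum of $H'$, so $\|H'-c\|=\Omega/2$. Then $\|QH'u\|_{\HS}\le\Omega/2$ and $\|uh'\|_{\HS}\le\|u\|_{\HS}\|h'\|\le\Omega/2$, using $\|u\|_{\HS}=1$. Therefore $C_K\le\Omega$, which gives the second inequality in Eq.~\eqref{eq:KP}.
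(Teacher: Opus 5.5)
Your Step~1 does close up. Differentiating your first-order identity and projecting with $Q$ gives exactly $\mathcal S(Qu')=-2Y-\frac{v_P'}{v_P}\,\mathcal S(u)$ with $\mathcal S(X)=(QHQ)X-Xh$. One of the two $uh'$ terms comes from $QP'H'U=U'h'$, which your list omits. This equation is equivalent to the paper's Eq.~\eqref{eq:sylvester}, and your Step~3 is the paper's argument.

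\textbf{Where the proposal fails.} The gap is the intermediate target $\|R\|_{\HS}\le2\|Y\|_{\HS}$. Since $R=\mathcal S(Qu')$ is fixed, no bookkeeping can change it, and the inequality is false in general. Pairing with $\operatorname{Re}\operatorname{Tr}(u^\dagger u')=0$ does determine $v_P'/v_P=-2\operatorname{Re}\operatorname{Tr}(u^\dagger Z)$ with $Z=\mathcal S^{-1}(Y)$. Substituting it, however, gives $R=-2\big[Y-\mathcal S(u)\operatorname{Re}\operatorname{Tr}\big([\mathcal S^{-1}(u)]^\dagger Y\big)\big]$. This is an oblique, not orthogonal, projection of $Y$, of norm $\|\mathcal S(u)\|_{\HS}\|\mathcal S^{-1}(u)\|_{\HS}\ge1$.

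For a concrete counterexample, take $g=1$ with excitation energies $\Delta$ and $D\gg\Delta$, and $u$ split equally between the two excited states. Such a $u$ and any real $Y$ are produced by a real symmetric $H'$ chosen at that $s$. Then there is a $Y$ with $\|R\|_{\HS}\approx(D/\Delta)\|Y\|_{\HS}$.

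Your fallback fails for the same reason. The factor $2$ is already spent on the two copies of $Y$. The $v_P'$ piece can only be bounded by $2\|Y\|_{\HS}\|\mathcal S(u)\|_{\HS}/\Delta$, which brings in the full spectral width of $H$ rather than the gap. Step~2 would then leave an extra factor of order $D/\Delta$.

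\textbf{The missing idea.} Invert $\mathcal S$ before eliminating $v_P'$. The left side is $\mathcal S$ applied to $Qu'+\frac{v_P'}{v_P}u$, so
\[
Qu'+\frac{v_P'}{v_P}\,u=-2Z .
\]
Because $Qu=u$ and $\|u\|_{\HS}=1$, we have $\operatorname{Re}\operatorname{Tr}(u^\dagger Qu')=\operatorname{Re}\operatorname{Tr}(u^\dagger u')=0$. So the two terms on the left are orthogonal for the real inner product $\operatorname{Re}\operatorname{Tr}(A^\dagger B)$, and $\|Qu'\|_{\HS}^2+(v_P'/v_P)^2=4\|Z\|_{\HS}^2$. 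Hence $\|Qu'\|_{\HS}\le2\|Z\|_{\HS}\le2\|Y\|_{\HS}/\Delta$ by your bound on $\mathcal S^{-1}$.

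The orthogonality to $u$ is a property of the unknown $Qu'$, not of its image under $\mathcal S$. That is why the $v_P'$ term drops out after inversion but not before. This is the paper's route, Eqs.~\eqref{eq:turningraw}--\eqref{eq:turningpointwise}. With this reordering the rest of your proposal goes through unchanged.
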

\begin{proof}
For a constant $P$, $K_P=0$. Otherwise $v_P>0$ by
Lemma~\ref{lem:neverstops}, and $u$ is defined on all of $[0,1]$.
Let $h=U^\dagger HU$, so $HU=Uh$.
Differentiating this identity and multiplying by $U^\dagger$ from the left gives
$h'=U^\dagger H'U$, using $U^\dagger U'=0$ and $U^\dagger H=hU^\dagger$.
Differentiating $HU=Uh$ twice gives
\begin{align*}
  H''U+2H'U'+HU''=U''h+2U'h'+Uh'' ,
\end{align*}
whose first term vanishes because $H(s)$ is affine. Multiply what is left by
$Q$ from the left. The term $Uh''$ drops because $QU=0$, while
$QHU''=(QHQ)(QU'')$ because $QHP=0$, and $QU'h'=U'h'$ because $QU'=U'$.
With the map
$\mathcal S(X):=(QHQ)X-Xh$ on matrices from $\mathbb C^g$ to $\Ran Q$, the
remaining identity becomes
\begin{align}
  \mathcal S(QU'')=-2\,\big(QH'U'-U'h'\big).
  \label{eq:sylvester}
\end{align}
The map $\mathcal S$ is
self-adjoint for the Hilbert--Schmidt inner product
$\langle A,B\rangle_{\HS}=\operatorname{Tr}(A^\dagger B)$. Write $\chi_j$ for the
eigenvectors of $h$, with eigenvalues $E_j$, $j<g$. The matrices
$|\Phi_k\rangle\langle\chi_j|$, $k\ge g$, are eigenmatrices of
$\mathcal S$. There are $(\dim\mathcal H-g)\,g$ of them, a basis of the space
$\mathcal S$ acts on.
The eigenvalues are $E_k-E_j\ge\Delta$, so $\mathcal S$ is invertible with
\begin{align}
  \|\mathcal S^{-1}(X)\|_{\HS}\le\frac{\|X\|_{\HS}}{\Delta(s)}
  \label{eq:Sinv}
\end{align}
for every $X$.
Insert $U'=v_Pu$ and $QU''=v_P'\,u+v_P\,Qu'$ into Eq.~\eqref{eq:sylvester},
using $Qu=u$. Its right-hand side becomes $-2v_PY$ with $Y$ as in
Eq.~\eqref{eq:Ydef}. Applying $\mathcal S^{-1}$ and dividing by $v_P$, with
$Z=\mathcal S^{-1}(Y)$,
\begin{align}
  \frac{v_P'}{v_P}\,u+Qu'=-2Z.
  \label{eq:turningraw}
\end{align}
Because $\|u\|_{\HS}=1$, $\operatorname{Re}\operatorname{Tr}(u^\dagger u')=0$.
Also, $u^\dagger Qu'=u^\dagger u'$ because $Qu=u$.
The real part of the Hilbert--Schmidt inner product of
Eq.~\eqref{eq:turningraw} with $u$ therefore gives
$v_P'/v_P=-2\operatorname{Re}\operatorname{Tr}(u^\dagger Z)$. Substituting it
back into Eq.~\eqref{eq:turningraw},
\begin{align}
  Qu'=-2\,\big[Z-u\,\operatorname{Re}\operatorname{Tr}(u^\dagger Z)\big].
  \label{eq:turning}
\end{align}
The subtracted term removes the component of $Z$ along $u$, so
$\|Qu'\|_{\HS}\le2\|Z\|_{\HS}$. With Eq.~\eqref{eq:Sinv},
\begin{align}
  \|Qu'\|_{\HS}\le\frac{2\,\|Y\|_{\HS}}{\Delta(s)},
  \label{eq:turningpointwise}
\end{align}
and integrating this over $s$ is the first inequality, by the definition
of $C_K$ in Eq.~\eqref{eq:CKdef}.

For the second, let $c$ be the
midpoint of the spectrum of $H'$, so $\|H'-c\|=\Omega/2$. Since
$h'=U^\dagger H'U$, its spectrum lies in the same interval, so
$\|h'-c\mathbb I_g\|\le\Omega/2$. Then
\begin{align*}
  Y=Q(H'-c)u-u\,(h'-c\mathbb I_g),
\end{align*}
and each term has Hilbert--Schmidt norm at most $\Omega/2$, because
$\|AX\|_{\HS}\le\|A\|\,\|X\|_{\HS}$ and
$\|XB\|_{\HS}\le\|X\|_{\HS}\,\|B\|$, with $\|u\|_{\HS}=1$ and
$\|Q\|\le1$. Hence $\|Y\|_{\HS}\le\Omega$ pointwise, and $C_K$, an
average of $\|Y\|_{\HS}$, obeys the same bound.
\end{proof}
The counterpart of \cref{eq:crudepathlengthbound} for the total curvature is obtained the
same way, by replacing the gap profile with its minimum,
\begin{align}
  K_P\le\frac{2C_K}{\Dstar}.
  \label{eq:crudecurvaturebound}
\end{align}
We refer to \cref{eq:crudepathlengthbound} and \cref{eq:crudecurvaturebound} together as
the conventional bounds.

The gap enters Eq.~\eqref{eq:KP} through the same integral as in
Theorem~\ref{thm:L}, so Lemma~\ref{lem:layer}(ii)
converts it into a scaling in $\Dstar$, now without the square root.
\begin{cor}\label{cor:Krates}
Suppose the gap profile is in width class $p$. Then
$K_P\le2\,C_K\,I_p(\Dstar)$.
Therefore, along an affine family of width class $p$, $K_P=\mathcal O(\log\Dstar^{-1})$ for $p=1$ and
$K_P=\mathcal O(\Dstar^{\frac{1-p}{p}})$ for $p>1$.
\end{cor}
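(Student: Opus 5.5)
The plan is to combine Theorem~\ref{thm:K} with Lemma~\ref{lem:layer}(ii), just as Theorem~\ref{thm:rates} combined Theorem~\ref{thm:L} with the same lemma, except that no square root is taken. First, for a constant $P$ both sides vanish or the bound is trivial, since $K_P=0$ and $C_K=0$. Otherwise, the first inequality of Theorem~\ref{thm:K} gives $K_P\le 2C_K\int_0^1 ds/\Delta(s)$. Because $C_K\ge0$, we may insert $\int_0^1 ds/\Delta(s)\le I_p(\Dstar)$ from Lemma~\ref{lem:layer}(ii). This yields $K_P\le 2C_K I_p(\Dstar)$, which is the first claim.

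For the scaling along an affine family, I need two uniform controls. The first is on $C_K$. By the second inequality of Theorem~\ref{thm:K}, $C_K\le\Omega\le 2\Lambda$. The bound $\Omega\le2\Lambda$ holds because the spectrum of $H'$ lies in $[-\Lambda,\Lambda]$, and Definition~\ref{def:family} bounds $\Lambda$ uniformly along the family. The second is on $I_p(\Dstar)$, where $\Gamma$, $C_W$, and hence $\gamma_c=\min\{\Gamma C_W^{-p},\Gamma\}$ are common to all members. As $\Dstar\to0$, eventually $\Dstar\le\gamma_c$, and Eq.~\eqref{eq:layerclipped} applies with fixed constants. At $p=1$ this gives $I_1(\Dstar)=1/\gamma_c+(C_W/\Gamma)\log(\gamma_c/\Dstar)=\mathcal O(\log\Dstar^{-1})$. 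For $p>1$, dropping the negative term gives $I_p(\Dstar)\le 1/\gamma_c+pC_W\Dstar^{(1-p)/p}/((p-1)\Gamma^{1/p})=\mathcal O(\Dstar^{(1-p)/p})$.

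Multiplying the two bounds gives $K_P=\mathcal O(\log\Dstar^{-1})$ for $p=1$ and $K_P=\mathcal O(\Dstar^{(1-p)/p})$ for $p>1$. The only point requiring care is that every constant is uniform over the family: $C_K$ through the uniform bound on $\Lambda$, and $I_p$ through the shared pair $(\Gamma,C_W)$. The affinity assumption is needed here because Theorem~\ref{thm:K} was proved for affine $H(s)$, where $H''=0$. For members with $\Dstar>\gamma_c$, we have $I_p=1/\Dstar<1/\gamma_c$, which is a bounded constant, so these members do not affect the asymptotic statement. I do not expect a genuine obstacle; the argument is a direct substitution.
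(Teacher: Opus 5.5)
Your proposal is correct and follows the paper's own argument: the paper obtains the corollary by inserting Lemma~\ref{lem:layer}(ii) into the first inequality of Theorem~\ref{thm:K}, without the square root that appears in Theorem~\ref{thm:rates}. The uniformity you spell out, via the shared pair $(\Gamma,C_W)$ and $C_K\le\Omega\le2\Lambda$, together with your handling of the constant-$P$ case and of members with $\Dstar>\gamma_c$, makes explicit steps the paper leaves implicit.
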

At $p=1$ the bounds scale as $\sqrt{\log\Dstar^{-1}}$ for the length and as
$\log\Dstar^{-1}$ for the curvature, and this pair is tight. A single family of
width class $p=1$ saturates both at once
(Theorems~\ref{thm:stair-length} and~\ref{thm:stair-curvature}).

\subsection{Bound for non-affine Hamiltonians}

Until now we derived the bounds for affine $H(s)$. In this subsection we
consider non-affine $H(s)$. Definition~\ref{def:width} and the notion of a
family in Definition~\ref{def:family} apply without change.

Affinity entered the length bound only through Lemma~\ref{lem:budget}, and
there only through the bound on $C_L$, not through its definition. It is
therefore enough to replace that lemma. Its
proof rests on the identity $\mathcal E_P''=-2M_1$, and for non-affine
$H(s)$ the second derivative of the energy acquires one more term, so that
\begin{align}
  \mathcal{E}_P''=\operatorname{Tr}(PH'')-2M_1.
  \label{eq:EPcurved}
\end{align}
Because $P$ is a rank-$g$ orthogonal projector and $A$ is Hermitian,
$\operatorname{Tr}(PA)$ lies between the sums of the $g$ smallest and the $g$
largest eigenvalues of $A$~\cite{Bhatia1997}, which we denote $\ell_g(A)$
and $r_g(A)$. Therefore
$|\operatorname{Tr}(PA)|\le\beta_g(A):=\max\{|\ell_g(A)|,|r_g(A)|\}$,
and $\beta_1(A)=\|A\|$. Accordingly, we define
$\mathcal{B}_g=\int_0^1\beta_g(H''(s))\,ds$, and generalize $\Lambda_g$
of Lemma~\ref{lem:budget} to
\begin{align}
  \Lambda_g=\tfrac12\big[r_g(H'(0))-\ell_g(H'(1))\big].
  \label{eq:nonaffinebudgets}
\end{align}
For affine $H(s)$, $\mathcal{B}_g=0$ and the two definitions of $\Lambda_g$
agree.

\begin{lem}\label{lem:budgetcurved}
For twice continuously differentiable $H(s)$,
\begin{align}
  C_L\le\Lambda_g+\tfrac12\,\mathcal{B}_g.
\end{align}
\end{lem}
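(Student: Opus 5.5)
The plan follows the proof of Lemma~\ref{lem:budget}, adding only the correction from $H''$. First I establish Eq.~\eqref{eq:EPcurved} for twice continuously differentiable $H(s)$. The projector $P(s)$ inherits $C^2$ smoothness from the contour-integral representation, so $\mathcal{E}_P=\operatorname{Tr}(PH)$ is $C^2$. As before, $P'=X+X^\dagger$ is off-diagonal with respect to the splitting $P,Q$, while $H$ commutes with $P$, so $\operatorname{Tr}(P'H)=0$ and $\mathcal{E}_P'=\operatorname{Tr}(PH')$.

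Differentiating once more gives $\mathcal{E}_P''=\operatorname{Tr}(PH'')+\operatorname{Tr}(P'H')$. The derivation of Eq.~\eqref{eq:Xsol} used only $[H,P]=0$ and its first derivative, and never affinity. Hence $X$ is still given by Eq.~\eqref{eq:Xsol} with $H'$ the current derivative, and $\operatorname{Tr}(P'H')=-2M_1$ exactly as in Lemma~\ref{lem:budget}. This proves Eq.~\eqref{eq:EPcurved}.

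Next I integrate over $[0,1]$:
\begin{align*}
C_L=\int_0^1 M_1\,ds=\tfrac12\big[\mathcal{E}_P'(0)-\mathcal{E}_P'(1)\big]+\tfrac12\int_0^1\operatorname{Tr}(PH'')\,ds .
\end{align*}
The boundary term is $\tfrac12[\operatorname{Tr}(P(0)H'(0))-\operatorname{Tr}(P(1)H'(1))]$. By the Ky Fan bound quoted in the text, $\operatorname{Tr}(P(0)H'(0))\le r_g(H'(0))$ and $\operatorname{Tr}(P(1)H'(1))\ge\ell_g(H'(1))$, so the boundary term is at most $\Lambda_g$ as defined in Eq.~\eqref{eq:nonaffinebudgets}.

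For the bulk term I use the same fact pointwise. $\operatorname{Tr}(P(s)H''(s))$ lies in $[\ell_g(H''),r_g(H'')]$, so its absolute value is at most $\beta_g(H''(s))$. Integrating gives at most $\mathcal{B}_g$, and adding the two terms yields $C_L\le\Lambda_g+\tfrac12\mathcal{B}_g$.

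The only step needing care is regularity. The differentiations and the fundamental theorem of calculus require $\mathcal{E}_P\in C^2$ and $\beta_g(H''(s))$ to be integrable. Under $C^2$ regularity of $H$, $\Delta(s)\ge\Dstar>0$ lets a single contour enclose the lowest $g$ levels locally uniformly, which gives the smoothness of $P$. Also, $\beta_g$ is continuous in its argument and $H''$ is continuous, so $\beta_g(H''(s))$ is integrable. I expect no real obstacle beyond this bookkeeping, and in particular no need to control crossings inside the ground subspace, since only $P$ and traces over it enter.
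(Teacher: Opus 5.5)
Your proposal is correct and follows the paper's proof: integrate $\mathcal{E}_P''=\operatorname{Tr}(PH'')-2M_1$ using $\mathcal{E}_P'=\operatorname{Tr}(PH')$, bound the boundary traces by $r_g(H'(0))$ and $\ell_g(H'(1))$ to obtain $\Lambda_g$, and bound the bulk term pointwise by $\beta_g(H'')$. Beyond the paper's one-line argument, you also justify why Eq.~\eqref{eq:EPcurved} and the needed $C^2$ regularity of $P$ hold without affinity, which the paper asserts without proof.
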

\begin{proof}
Integrating
Eq.~\eqref{eq:EPcurved} with $\mathcal E_P'=\operatorname{Tr}(PH')$ gives
\begin{align*}
  C_L
  &=\tfrac12\big[\operatorname{Tr}(P(0)H'(0))-\operatorname{Tr}(P(1)H'(1))\big] \\
  &\quad+\tfrac12\int_0^1\operatorname{Tr}(PH'')\,ds
  \;\le\;\Lambda_g+\tfrac12\,\mathcal{B}_g,
\end{align*}
bounding each trace by the extreme eigenvalue sums.
\end{proof}
Lemma~\ref{lem:speed} does not use affinity either, so the first inequality of
Theorem~\ref{thm:L} and the bound $L_P\le[C_L\,I_p(\Dstar)]^{1/2}$ of
Theorem~\ref{thm:rates} continue to hold, with
$C_L\le\Lambda_g+\mathcal{B}_g/2$ from Lemma~\ref{lem:budgetcurved} in place of
$C_L\le\Lambda_g$. The scaling along a family needs $C_L$ bounded over its
members, which for an affine family followed from the single bound on $\Lambda$ in
Definition~\ref{def:family}.
\begin{defn}\label{def:familycurved}
A family is in \emph{width class} $p$ if there exist constants $\Gamma$ and $C_W$
such that every member satisfies Definition~\ref{def:width} with this same pair,
and if single constants bound $\sup_s\|H'(s)\|$ and $\mathcal{B}_g$ over the family.
\end{defn}
For an affine family, $\mathcal{B}_g=0$ and $\sup_s\|H'(s)\|=\Lambda$, so
Definition~\ref{def:familycurved} generalizes Definition~\ref{def:family} to
non-affine families. A family in width class $p$ also has
$\Lambda_g$ bounded, because
$|\Lambda_g|\le g\sup_s\|H'(s)\|$, so $C_L$ stays bounded along it and the
non-affine dependence on $s$ does not change the scaling in $\Dstar$.

Unlike the length bound, the total curvature does not extend this way.
Theorem~\ref{thm:K} uses affinity through $H''=0$ in Eq.~\eqref{eq:sylvester}.
For non-affine $H(s)$, Appendix~\ref{app:winding} gives a family with the gap
fixed at $\Delta=2$ and with $\sup_s\|H'(s)\|$ and $\mathcal{B}_g$ bounded, whose
total curvature is unbounded.

\subsection{A constant bound for \texorpdfstring{$L$}{L} and \texorpdfstring{$K$}{K}?}
\label{sec:bounded}

So far we bounded the path length and the total curvature from the gap
profile, which gives logarithmic scaling at best. In many practical cases
these quantities are instead bounded by a constant, as in Secs.~\ref{sec:grover}
and~\ref{sec:xxz}, where the computed lengths stay near $\pi/2$ and $\pi/4$. The reason is geometric rather than spectral. The clearest example is a ground
state that stays inside a two-dimensional subspace that does not move with $s$,
while the rest of the spectrum is arbitrary. The following proposition makes
this precise.

\begin{prop}
Suppose $H(s)$ is affine as in Eq.~\eqref{eq:Hs} with $g=1$. Suppose also that a
two-dimensional subspace $\mathcal V$, independent of $s$, is invariant under
$H(s)$ and contains $|\Phi_0(s)\rangle$ for every $s\in[0,1]$. Then,
\begin{align}
  L\le\frac{\pi}{2},\qquad K=0.
  \label{eq:twolevel}
\end{align}
\label{prop:twolevel}
\end{prop}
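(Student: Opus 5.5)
The plan is to reduce everything to a two-level problem inside $\mathcal V$ and read off the geometry on the Bloch sphere.

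\textbf{Reduction to $\mathcal V$.} Since $\mathcal V$ is invariant under $H(s)=H_0+sH'$ for all $s$, it is invariant under both $H_0$ and $H'$. The restriction is therefore a $2\times2$ affine Hermitian matrix. Choosing an orthonormal basis of $\mathcal V$, write it as $h(s)=a(s)\mathbb I_2+\mathbf b(s)\cdot\boldsymbol\sigma$ with $\mathbf b(s)=\mathbf b_0+s\,\mathbf b_1\in\mathbb R^3$. Since $|\Phi_0(s)\rangle\in\mathcal V$, it is the lower eigenvector of $h(s)$. The other eigenvalue $a+|\mathbf b|$ of $h(s)$ is an eigenvalue of $H(s)$ above $E_0$, so $2|\mathbf b(s)|\ge\Dstar>0$. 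Thus the segment $\mathbf b([0,1])$ avoids the origin, and the ground state is the spin state pointing along $\hat n(s)=-\mathbf b(s)/|\mathbf b(s)|$.

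\textbf{Geometry of $\hat n$.} If $\mathbf b_0$ and $\mathbf b_1$ are parallel, $\hat n$ is constant, so $P$ is constant and $L=K=0$. Otherwise both lie in a fixed plane through the origin. I will rotate the basis of $\mathcal V$ (a fixed unitary, which changes neither $L$ nor $K$) so that this plane is the $x$–$z$ plane. Then $h(s)$ is real symmetric. Let $\theta(s)$ be the polar angle of $\hat n(s)$ in that plane. Then $\theta'=\pm|\mathbf b_0\times\mathbf b_1|/|\mathbf b|^2$ has a constant sign. Hence $\theta$ is strictly monotone, and since a line segment not through the origin subtends an angle less than $\pi$, the total sweep satisfies $|\theta(1)-\theta(0)|<\pi$.

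\textbf{Explicit path.} In this basis, take $|\Phi_0(s)\rangle=\cos(\theta/2)|e_0\rangle+\sin(\theta/2)|e_1\rangle$ with $|e_0\rangle,|e_1\rangle$ fixed real basis vectors of $\mathcal V$ embedded in $\mathcal H$. This vector is real and normalized, so $\langle\Phi_0|\partial_s\Phi_0\rangle=0$ and it is in the paper's gauge. Its speed is $\|\partial_s\Phi_0\|=|\theta'|/2$, so
\[
L=\tfrac12|\theta(1)-\theta(0)|<\tfrac{\pi}{2}.
\]
For the curvature, monotonicity of $\theta$ lets me use $l=|\theta-\theta(0)|/2$ as the arc length. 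Then $d^2|\Phi_0\rangle/dl^2=-|\Phi_0\rangle$, so $Q\,d^2|\Phi_0\rangle/dl^2=0$, giving $\kappa\equiv0$ and $K=0$.

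The only delicate points are, first, confirming that the ground state of $H$ is the \emph{lower} eigenvector of $h$ with $\mathbf b\neq0$, which follows from the gap as above. Second, the parallel-$\mathbf b_0,\mathbf b_1$ case must be handled separately so that $v_P>0$, consistent with Lemma~\ref{lem:neverstops}.
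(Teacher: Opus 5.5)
Your proposal is correct and follows essentially the same route as the paper: you reduce to the affine Bloch vector $\mathbf b(s)=\mathbf b_0+s\,\mathbf b_1$ on $\mathcal V$, rotate $\mathbf b_0,\mathbf b_1$ into the $xz$ plane so the ground state stays in a fixed real plane, note that the direction of $\mathbf b$ turns monotonically at rate $|\mathbf b_0\times\mathbf b_1|/|\mathbf b|^2$ through less than $\pi$, and halve that angle for the state. The differences are cosmetic: you verify $\kappa\equiv0$ by the explicit computation $d^2|\Phi_0\rangle/dl^2=-|\Phi_0\rangle$ instead of citing that a great circle is a geodesic, and you spell out why $\mathbf b(s)\neq0$ (via the gap) and treat the parallel case separately, both of which the paper states only briefly.
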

\begin{proof}
On $\mathcal V$, up to a multiple of the identity, $H(s)$ acts as
$\tfrac12\mathbf b(s)\cdot\boldsymbol\sigma$ with
$\mathbf b(s)=\mathbf b_0+s\,\mathbf b_1$, where $\mathbf b_0,\mathbf b_1\in\mathbb R^3$, and $\mathbf b(s)\neq0$ because the ground
state is nondegenerate. Rotate the Bloch frame so that $\mathbf b_0$ and
$\mathbf b_1$ lie in the $xz$ plane. In the corresponding basis $H(s)$ is real
symmetric, so its ground state is real and stays in a fixed real plane of
$\mathcal V$. The unit vectors of that plane trace a great circle of the state
space, a geodesic, so $\kappa\equiv0$ and $K=0$ by Eq.~\eqref{eq:kappa}. If instead
the ground state does not move, $K=0$ trivially.

The ground state is fixed by the direction of $\mathbf b$. Its angular velocity
$(\mathbf b_0\times\mathbf b_1)/|\mathbf b|^2$ never changes direction, so $\mathbf b$
turns monotonically. It turns by less than $\pi$ in total, since $\mathbf b$ traces a
segment that misses the origin. The state turns by half as much, so $L$ is half the angle between
$\mathbf b(0)$ and $\mathbf b(1)$, less than $\pi/2$.
\end{proof}
Here $L$ approaches
$\pi/2$ as the initial and final states become orthogonal
[Fig.~\ref{fig:mechanism}(d)]. The Grover family of Sec.~\ref{sec:grover}
satisfies the hypothesis exactly, its ground state staying in the plane
spanned by the initial and the marked state, and the XXZ family of
Sec.~\ref{sec:xxz} comes close to lying in a fixed plane near its gap minimum.

The ground state of many avoided crossings stays in nearly the same plane near
their gap minimum, so one may expect $L$ and $K$ to stay bounded in such
systems. The $\sqrt{\log\Dstar^{-1}}$ and $\log\Dstar^{-1}$
scalings are reached instead when the ground state keeps mixing with a new
excited direction as $s$ increases, as in the family of
Appendix~\ref{app:sharpness}.

\subsection{Adiabatic evolution time}
\label{sec:runtime}

Finally, we consider how the width class and the bounds on the path geometry
act on the adiabatic evolution time $T$. The evolution time is strongly affected
by the schedule, that is, by how $s$ is changed with respect to the time $t$. A
discussion of the evolution time therefore comes with a specific choice of
schedule.
More precisely, take a schedule $s(\tau)$, a twice continuously
differentiable and nondecreasing function of $\tau=t/T$ with $s(0)=0$ and
$s(1)=1$, and let $|\psi(t)\rangle$ solve
$i\,\partial_t|\psi(t)\rangle=H(s(t/T))\,|\psi(t)\rangle$ from a normalized
$|\psi(0)\rangle\in\Ran P(0)$. With
$F_P(T)=\langle\psi(T)|P(1)|\psi(T)\rangle$ the population of the ground
subspace at the end of the evolution, the adiabatic evolution error $1-F_P(T)$
is at most $(\mathcal{C}[s]/T)^2$ with~\cite{Jansen2007}
\begin{align}
  \mathcal{C}[s]={}&\sqrt m\sum_{\tau\in\{0,1\}}\frac{\|\dot P\|}{\Delta(\tau)}
  +\sqrt m\int_0^1\frac{\|Q\ddot PP\|}{\Delta(\tau)}\,d\tau\nonumber\\
  &+\sqrt m\int_0^1\frac{\|\dot P\|^2}{\Delta(\tau)}\,d\tau
  +2m\int_0^1\frac{\|\dot H\|\,\|\dot P\|}{\Delta^2(\tau)}\,d\tau,
  \label{eq:cost}
\end{align}
where a dot is $\partial_\tau$, $\Delta(\tau)=\Delta(s(\tau))$, and $m$ bounds
the number of distinct eigenvalues inside the ground subspace along the path.

The first schedule we consider is the linear schedule, the standard choice.
It advances $s$ uniformly in time. Applying the
width class to this schedule gives the following corollary.
\begin{cor}\label{cor:runtime}
For twice continuously differentiable $H(s)$ whose gap profile is in width
class $p$, let $s_{\rm lin}(\tau)=\tau$ be the linear schedule. Write $\Lambda^{(1)}=\sup_s\|H'(s)\|$ and
$\Lambda^{(2)}=\sup_s\|H''(s)\|$, and let
\begin{align}
 \mathcal J_q=\frac1{\Gamma^q}
 \Big[1+\frac{qp}{qp-1}\,C_W\Big(\Big(\frac{\Gamma}{\Dstar}\Big)^{q-\frac1p}-1\Big)\Big].
 \label{eq:Jq}
\end{align}
Then $\mathcal C[s_{\rm lin}]\le C_T$ with
\begin{align}
 C_T={}&m\Lambda^{(1)}\Big(\frac1{\Delta^2(0)}+\frac1{\Delta^2(1)}\Big)
 \nonumber\\
 &+m\Lambda^{(2)}\,\mathcal J_2
 +7m^{3/2}(\Lambda^{(1)})^2\,\mathcal J_3.
 \label{eq:runtime}
\end{align}
\end{cor}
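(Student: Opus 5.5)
The plan is to bound each of the four terms of $\mathcal C[s]$ in Eq.~\eqref{eq:cost} pointwise by powers of $1/\Delta(s)$, and then integrate those powers with the layer-cake argument of Lemma~\ref{lem:layer}. For the linear schedule a dot coincides with $'$, so $\dot P=P'$, $\ddot P=P''$ and $\dot H=H'$.

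\emph{Step 1: pointwise bound on $P'$.} As in the proof of Lemma~\ref{lem:speed}, $P'=X+X^\dagger$ with $X=QP'P$, and $X$ solves the Sylvester equation $(QHQ)X-X(PHP)=-QH'P$. Since $X$ and $X^\dagger$ have orthogonal ranges and orthogonal supports, $\|P'\|=\|X\|$. The spectra of $QHQ$ and $PHP$ lie on opposite sides of a gap of width $\Delta(s)$. For such ordered spectra the operator-norm Sylvester bound holds with constant $1$ (the Heinz/Bhatia--Davis--McIntosh estimate). Hence
\begin{align*}
\|P'\|\le\frac{\Lambda^{(1)}}{\Delta}.
\end{align*}
This gives the endpoint terms $\sqrt m\,\|P'\|/\Delta\le m\Lambda^{(1)}/\Delta^2$ at $s=0,1$. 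It also gives the third and fourth terms of Eq.~\eqref{eq:cost}: $\sqrt m\,\|P'\|^2/\Delta\le\sqrt m(\Lambda^{(1)})^2/\Delta^3$ and $2m\|H'\|\|P'\|/\Delta^2\le 2m(\Lambda^{(1)})^2/\Delta^3$.

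\emph{Step 2: pointwise bound on $QP''P$.} This is the one genuinely new estimate and the main obstacle. Differentiate $[H,P]=0$ twice to get
\begin{align*}
H''P+2H'P'+HP''=P''H+2P'H'+PH''.
\end{align*}
Sandwich this between $Q$ on the left and $P$ on the right. Use $QP'=X$ and $P'P=X$, which follow from $QP'Q=PP'P=0$. This yields a Sylvester equation for $QP''P$:
\begin{align*}
(QHQ)(QP''P)-(QP''P)(PHP)=-QH''P-2QH'QX+2XPH'P.
\end{align*}
Applying the same constant-$1$ Sylvester bound and Step~1 gives
\begin{align*}
\|QP''P\|\le\frac{\Lambda^{(2)}+4(\Lambda^{(1)})^2/\Delta}{\Delta}.
\end{align*}
The second term of Eq.~\eqref{eq:cost} is therefore at most $\sqrt m\,\Lambda^{(2)}/\Delta^2+4\sqrt m(\Lambda^{(1)})^2/\Delta^3$. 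The point to check carefully is the bookkeeping of the $H'P'$ cross terms, so that the constant is $4$. If only a $\pi/2$ Sylvester constant were available, the final numerical constant would grow, but the structure of the bound would not change.

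\emph{Step 3: collecting terms.} The coefficient of $\int(\Lambda^{(1)})^2\Delta^{-3}$ is $(4+1)\sqrt m+2m\le 7m^{3/2}$ since $m\ge1$, and we use $\sqrt m\le m$ for the other terms. It remains to show $\int_0^1\Delta^{-q}\,ds\le\mathcal J_q$ for $q=2,3$. Repeat Lemma~\ref{lem:layer}(i) with $\Delta^{-q}=\int_{\Delta}^\infty q\gamma^{-q-1}\,d\gamma$, which gives
\begin{align*}
\int_0^1\frac{ds}{\Delta^q}=\Gamma^{-q}+\int_{\Dstar}^{\Gamma}q\,\mubd(\gamma)\gamma^{-q-1}\,d\gamma .
\end{align*}
Insert Definition~\ref{def:width}, $\mubd\le C_W(\gamma/\Gamma)^{1/p}$, without clipping at $\gamma_c$. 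Then
\begin{align*}
qC_W\Gamma^{-1/p}\int_{\Dstar}^{\Gamma}\gamma^{1/p-q-1}\,d\gamma=\frac{qp}{qp-1}\,\frac{C_W}{\Gamma^{q}}\Big[\Big(\frac{\Gamma}{\Dstar}\Big)^{q-1/p}-1\Big],
\end{align*}
which is well defined because $qp>1$. This is exactly $\mathcal J_q$ of Eq.~\eqref{eq:Jq}, and summing the contributions yields $C_T$ in Eq.~\eqref{eq:runtime}.
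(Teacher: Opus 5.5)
Your proof is correct and has the same architecture as the paper's. Both bound the two norms in Eq.~\eqref{eq:cost} pointwise by derivatives of $H$ over powers of $\Delta$, then integrate $\Delta^{-q}$ using the layer-cake identity and the unclipped width-class bound to obtain $\mathcal J_q$. Your computation of $\mathcal J_q$ matches the paper's term for term.

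The difference is where the pointwise estimates come from. The paper imports them from Jansen, Ruskai, and Seiler. Those estimates carry factors of $\sqrt m$ (in $\|\dot P\|$) and $m$ (in the $\|\dot H\|^2$ part of $\|\ddot P\|$), and this is exactly what produces the $m$ and $7m^{3/2}$ in $C_T$.

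You instead derive the estimates from the Sylvester equation for $X=QP'P$ and its second-order analogue for $QP''P$. Your bookkeeping of the cross terms via $QP'=P'P=X$ is right. Since $QHQ$ and $PHP$ have spectra on opposite sides of the gap, the constant is $1$. This follows directly from $X=\int_0^\infty e^{-t(QHQ-c)}\,Y\,e^{t(PHP-c)}\,dt$ with $c$ the midpoint of the gap, so your caveat about a $\pi/2$ constant never arises.

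What your route buys: it is self-contained and reuses the machinery of Lemma~\ref{lem:speed} and Theorem~\ref{thm:K}. It gives $m$-free estimates $\|P'\|\le\Lambda^{(1)}/\Delta$ and $\|QP''P\|\le(\Lambda^{(2)}+4(\Lambda^{(1)})^2/\Delta)/\Delta$. These yield the sharper intermediate bound $\sqrt m\,\Lambda^{(1)}(\Delta^{-2}(0)+\Delta^{-2}(1))+\sqrt m\,\Lambda^{(2)}\mathcal J_2+(5\sqrt m+2m)(\Lambda^{(1)})^2\mathcal J_3$, which you then correctly relax to $C_T$ using $m\ge1$.

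The paper's route is shorter and reproduces the published JRS constants verbatim.
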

\begin{proof}
At $\tau=s$, bounding the two norms in Eq.~\eqref{eq:cost} by derivatives of
the Hamiltonian with the estimates of Jansen, Ruskai, and
Seiler~\cite{Jansen2007} gives
\begin{align*}
 \mathcal C[s_{\rm lin}]\le{}&m\Lambda^{(1)}\Big(\frac1{\Delta^2(0)}+\frac1{\Delta^2(1)}\Big)\\
 &+m\Lambda^{(2)}\int_0^1\frac{ds}{\Delta^2(s)}
 +7m^{3/2}(\Lambda^{(1)})^2\int_0^1\frac{ds}{\Delta^3(s)}.
\end{align*}
Writing $1/\Delta^q=\int_\Delta^\infty q\,\gamma^{-q-1}\,d\gamma$ and exchanging
the integrals as in Lemma~\ref{lem:layer}(i),
\begin{align*}
  \int_0^1\frac{ds}{\Delta^q(s)}
  =\frac{1}{\Gamma^q}
   +q\int_{\Dstar}^{\Gamma}\frac{\mubd(\gamma)}{\gamma^{q+1}}\,d\gamma .
\end{align*}
Inserting
Definition~\ref{def:width} then gives
$\int_0^1ds/\Delta^q(s)\le\mathcal J_q$ at $q=2$ and $q=3$, proving
Eq.~\eqref{eq:runtime}.
\end{proof}
Because
\begin{align}
  \mathcal J_q=\mathcal O\big(\Dstar^{-(q-\frac1p)}\big)
  \quad \textrm{as} \quad \Dstar\to0
  \label{eq:Jqscaling}
\end{align}
for $qp>1$, the last term of Eq.~\eqref{eq:runtime} sets the scaling of $C_T$.
Consider a family in width class $p$ as in
Definition~\ref{def:familycurved} along which $\Lambda^{(2)}$ stays bounded.
With the linear schedule, a fixed target population is then reached in
$T=\mathcal O(\Dstar^{-(3-1/p)})$, which is $\mathcal O(\Dstar^{-2})$ at
$p=1$. This improves on the worst case $T=\mathcal O(\Dstar^{-3})$ obtained
from the minimum gap alone~\cite{Jansen2007}. At $p=1$ with $g=1$,
Corollary~\ref{cor:runtime} recovers the linear-schedule scaling of Guo and
An~\cite{Guo2025}, and $\Dstar^{-2}$ is the scaling commonly
expected~\cite{Jansen2007} and observed~\cite{mchanCGS}.

Next, we consider the constant geometric speed (CGS)
schedule~\cite{mchanCGS}. In contrast to the linear schedule, which advances
$s$ uniformly in time, the CGS schedule advances the arc length of the ground
state path uniformly, $l=L_P\tau$. Applying the width class to
this schedule gives the following corollary.
\begin{cor}\label{cor:runtimecgs}
Let $H(s)$ be affine as in Eq.~\eqref{eq:Hs} with $g=1$, let $P$ be
nonconstant, and let the gap profile be in width class $p$. Write
$\Lambda=\|H'\|$ and let $\Omega$ be the difference between the largest and the
smallest eigenvalues of $H'$. Write $l(s)=\int_0^sv_P(r)\,dr$, so that $l(1)$
is the path length $L_P$ of Eq.~\eqref{eq:LPdef}. Let $s_c$ be the
schedule defined by $l(s_c(\tau))=L_P\tau$, which is well defined by
Lemma~\ref{lem:neverstops}. Then
\begin{align}
 \mathcal C[s_c]\le L_P\Big[\frac2{\Dstar}
 +\Big(\frac52\Omega+2\Lambda\Big)\mathcal J_2\Big].
 \label{eq:runtimecgs}
\end{align}
\end{cor}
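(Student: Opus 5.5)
The plan is to evaluate each of the four terms of Eq.~\eqref{eq:cost} under the constant geometric speed schedule $s_c$ separately. In each term we convert the $\tau$-integral back to an $s$-integral and then bound it by $\mathcal J_2$ through $\int_0^1 ds/\Delta^2(s)\le\mathcal J_2$. That inequality was established in the proof of Corollary~\ref{cor:runtime} and needs only the width class.

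\textbf{Preliminaries on the schedule.} By Lemma~\ref{lem:neverstops}, $v_P>0$ on $[0,1]$. Since $v_P$ is continuously differentiable for affine $H$ with a gap, $l(s)$ is strictly increasing and $C^2$. Its inverse therefore defines a $C^2$ nondecreasing $s_c$ with
\[
\dot s_c=L_P/v_P(s_c),\qquad d\tau=v_P\,ds/L_P.
\]
With $g=1$ we may take $m=1$. For rank one, $P'=X+X^\dagger$ has operator norm equal to $\|X\|_{\HS}=v_P$. Hence $\|\dot P\|=v_P\dot s_c=L_P$ for every $\tau$.

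\textbf{Boundary and $\|\dot H\|$ terms.} The boundary term is then
\[
L_P\bigl(1/\Delta(0)+1/\Delta(1)\bigr)\le 2L_P/\Dstar .
\]
For the last term we use $\|\dot H\|=\Lambda\dot s_c$ and $\dot s_c\,d\tau=ds$, which gives
\[
2\int_0^1 \frac{\Lambda\,\dot s_c\,L_P}{\Delta^2}\,d\tau=2\Lambda L_P\int_0^1\frac{ds}{\Delta^2}\le 2\Lambda L_P\mathcal J_2 .
\]

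\textbf{$\|\dot P\|^2$ term.} This term is
\[
L_P^2\int_0^1\frac{d\tau}{\Delta}=L_P\int_0^1\frac{v_P}{\Delta}\,ds .
\]
We bound $v_P$ pointwise via Eq.~\eqref{eq:Xsol}, which gives $v_P=\|X\|_{\HS}\le\|QH'P\|_{\HS}/\Delta$. Since $QH'P=Q(H'-c)P$ with $c$ the spectral midpoint and $P$ of rank one, $\|QH'P\|_{\HS}\le\Omega/2$. The term is thus at most $\tfrac12\Omega L_P\mathcal J_2$.

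\textbf{$\|Q\ddot PP\|$ term (main obstacle).} This is the step I expect to be the main obstacle. It must be expressed through the curvature so that the pointwise estimate Eq.~\eqref{eq:turningpointwise} applies. Using the parallel-transport frame, $P=UU^\dagger$, so
\[
Q\ddot PP=Q\ddot U U^\dagger+2Q\dot U\dot U^\dagger P+QU\ddot U^\dagger P .
\]
The last summand vanishes since $QU=0$. The middle one vanishes because $\dot U^\dagger P=\dot U^\dagger UU^\dagger=0$ by the gauge condition. Therefore $\|Q\ddot PP\|=\|Q\ddot U\|$.

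Under the schedule, $\dot U=U'\dot s_c=L_P u$, so $Q\ddot U=L_P\,Qu'\,\dot s_c$. The key point is that constant speed kills the tangential $v_P'$ term of Eq.~\eqref{eq:turningraw}. The integral then becomes
\[
L_P\int_0^1\frac{\|Qu'\|}{\Delta}\,ds\le L_P\int_0^1\frac{2\|Y\|_{\HS}}{\Delta^2}\,ds\le 2\Omega L_P\mathcal J_2,
\]
by Eq.~\eqref{eq:turningpointwise} and the pointwise bound $\|Y\|_{\HS}\le\Omega$ from the proof of Theorem~\ref{thm:K}.

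\textbf{Conclusion.} Summing the four contributions gives
\[
\mathcal C[s_c]\le L_P\Big[\frac{2}{\Dstar}+\Big(2\Omega+\tfrac12\Omega+2\Lambda\Big)\mathcal J_2\Big],
\]
which is Eq.~\eqref{eq:runtimecgs}. Care is needed to use the operator norm (rather than the HS norm) consistently for rank-one objects, and to confirm the $C^2$ regularity of $s_c$ demanded by Eq.~\eqref{eq:cost}.
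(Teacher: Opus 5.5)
Your proposal is correct and follows essentially the paper's proof. It uses the same four terms of Eq.~\eqref{eq:cost}, the same change of variables $d\tau=v_P\,ds/L_P$, and the same pointwise bounds: $2\Omega/\Delta^2$ from Eq.~\eqref{eq:turningpointwise}, $\Omega/(2\Delta^2)$ for $v_P/\Delta$ by centering $H'$, and $\int_0^1 ds/\Delta^2\le\mathcal J_2$. The only differences are cosmetic. You get $\|Q\ddot PP\|=\|Q\ddot U\|=L_P\dot s_c\|Qu'\|$ from the frame identity $Q\ddot PP=Q\ddot UU^\dagger$, where the paper writes $\|Q\ddot PP\|=[\dot v^2+v^2\|Q\dot u\|^2]^{1/2}$ and then sets $\dot v=0$. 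Your check that $s_c$ is $C^2$ is a small addition the paper leaves implicit.
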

\begin{proof}
At $g=1$ the ground subspace holds one eigenvalue, so $m=1$. In the gauge
$\langle\Phi_0|\dot\Phi_0\rangle=0$ the two norms in Eq.~\eqref{eq:cost} are
$\|\dot P\|=v$ and $\|Q\ddot PP\|=[\dot v^2+v^2\|Q\dot u\|^2]^{1/2}$, where
$v=\partial_\tau l$ is the speed in the scaled time and
$u=\partial_s\Phizeros/v_P$ is the unit tangent of
Sec.~\ref{sec:curvature}. The second norm follows from
$Q\,\partial_\tau^2|\Phi_0\rangle=\dot v\,u+v\,Q\dot u$ and
$\operatorname{Re}\langle u|Q\dot u\rangle=0$, as in the proof of
Theorem~\ref{thm:K}.

Under $s_c$ the speed is the constant $L_P$, so $\dot v=0$ and
$d\tau=v_P\,ds/L_P$. Each term of Eq.~\eqref{eq:cost} then carries a single
factor $L_P$,
\begin{align*}
 \mathcal C[s_c]={}&L_P\Big(\frac1{\Delta(0)}+\frac1{\Delta(1)}\Big)
 +L_P\int_0^1\frac{\|Qu'\|_{\HS}}{\Delta(s)}\,ds\\
 &+L_P\int_0^1\frac{v_P(s)}{\Delta(s)}\,ds
 +2L_P\Lambda\int_0^1\frac{ds}{\Delta^2(s)},
\end{align*}
where $\dot u=(L_P/v_P)\,u'$ and $\partial_\tau s_c=L_P/v_P$, so the change of
variables cancels the factor $v_P$ in the second and the fourth term.
Equation~\eqref{eq:turningpointwise} with $\|Y\|_{\HS}\le\Omega$, for the $Y$
of Eq.~\eqref{eq:Ydef}, bounds the second integrand by $2\Omega/\Delta^2$. For the third, expanding
$\partial_s\Phizeros$ in the eigenbasis and centering $H'$ at the midpoint $c$
of its spectrum gives
\begin{align*}
 v_P^2&=\sum_{j\ge1}\frac{|\langle\Phi_j|H'|\Phi_0\rangle|^2}{(E_j-E_0)^2}\\
 &\le\frac{\|Q(H'-c)\Phizeros\|^2}{\Delta^2(s)}\le\frac{\Omega^2}{4\Delta^2(s)},
\end{align*}
The shift by $c$ is free because $Q\Phizeros=0$, and $\|H'-c\|=\Omega/2$, so
that integrand is at most $\Omega/(2\Delta^2)$. The endpoint gaps are at
least $\Dstar$, and $\int_0^1ds/\Delta^2(s)\le\mathcal J_2$ as in the proof of
Corollary~\ref{cor:runtime}. Collecting the four terms proves
Eq.~\eqref{eq:runtimecgs}.
\end{proof}
Therefore, the CGS schedule gives $T=\mathcal{O}(\Dstar^{-(2-1/p)})$ for a width class $p$ if $L_P$ is
bounded by a constant independent of $\Dstar$. So, for a system with bounded adiabatic path length,
one order in $\Dstar^{-1}$ is gained by using the CGS schedule compared to the linear schedule
for any width class.
Even for the cases with $L_P$ unbounded, combining
Theorem~\ref{thm:rates} with Corollary~\ref{cor:runtimecgs} gives
$T=\mathcal{O}(\Dstar^{-1}[\log\Dstar^{-1}]^{1/2})$ for $p=1$ and
$T=\mathcal{O}(\Dstar^{-(5p-3)/(2p)})$ for $p>1$. This lowers the exponent by $(p+1)/(2p)$ compared to the
linear schedule for $p>1$. At $p=1$ the exponent falls from $2$ to $1$, with the
additional $[\log\Dstar^{-1}]^{1/2}$ factor.

Corollary~\ref{cor:runtimecgs} generalizes the result of our earlier
work~\cite{mchanCGS}. There, the one-order improvement is proved only for a
constantly bounded $L_P$ and between the bounds without the width class
information ($p\to\infty$), so from
$\Dstar^{-3}$ to $\Dstar^{-2}$. Here, we prove that one order for any width class
in the $L_P$ bounded case. And even $T=\mathcal{O}(\Dstar^{-1})$ itself is proved at $p=1$, which accounts for the
observations from the numerical tests there. Moreover, Corollary~\ref{cor:runtimecgs}
also covers cases with unbounded $L_P$, verifying the effectiveness of the CGS schedule even without the bounded $L_P$ assumption.

\section{Applications}
\label{sec:applications}

We illustrate the bounds on systems in three different categories. For each family we
give the pair $(\Gamma,C_W)$ that certifies width class $p=1$ for every member. We
then give the coefficient $C_L$ and compare Theorem~\ref{thm:rates} with the computed
$L$. The curvature is compared for the molecular families, where we also give $C_K$ and
Corollary~\ref{cor:Krates}. Every family here has a nondegenerate ground state, so $g=1$.

Here we describe how the bounds in the figures are evaluated. First, we use the computed
coefficients
$C_L$ and $C_K$. The conventional bounds are then
computed from \cref{eq:crudepathlengthbound} for $L$ and
\cref{eq:crudecurvaturebound} for $K$, and ours from Theorem~\ref{thm:rates} for
$L$ and Corollary~\ref{cor:Krates} for $K$. The two bounds coincide for
$\Dstar\ge\gamma_c$, and the shaded region between them is the improvement due to
the width class.
Second, for each family we use one pair $(\Gamma,C_W)$ and the largest values of $C_L$
and $C_K$ over its members, so one line in each panel serves the whole family.

\subsection{Adiabatic Grover search}
\label{sec:grover}
The adiabatic Grover search~\cite{Grover1997,Roland2002} drives the uniform
superposition $|\Phi_{\mathrm{in}}\rangle=|+\rangle^{\otimes n}$ toward a marked computational basis
state $|w\rangle$. The Hamiltonian is
$H(s)=(1-s)(\mathbb I-|\Phi_{\mathrm{in}}\rangle\langle\Phi_{\mathrm{in}}|)+s(\mathbb I-|w\rangle\langle w|)$,
with $\langle w|\Phi_{\mathrm{in}}\rangle=1/\sqrt N$ and $N=2^n$. The evolution stays in the
two-dimensional space spanned by $|w\rangle$ and $|\Phi_{\mathrm{in}}\rangle$, where
\begin{align}
  \Delta(s)=\sqrt{1-4(1-1/N)\,s(1-s)} .
  \label{eq:grovergap}
\end{align}
The gap minimum $\Delta_*=1/\sqrt N$ sits at $s_*=1/2$ and is lowered by
enlarging $N$ [Fig.~\ref{fig:apps}(a)].
With $\omega=\sqrt{1-1/N}$, the gap is
$\Delta(s)=\smash{\sqrt{\Dstar^2+4\omega^2(s-1/2)^2}}$, which is bounded below
linearly, Eq.~\eqref{eq:minorant} with $p=1$ and this $\omega$. Lemma~\ref{lem:minorant}
therefore gives width class $p=1$ with $\Gamma=1$ and $C_W=1/\omega\le1.04$ for
$N\ge16$, and Lemma~\ref{lem:budget} gives $C_L=\omega^2$. None of these constants
grows as $\Dstar$ shrinks.

The estimate
\cref{eq:crudepathlengthbound} gives $(\omega^2/\Dstar)^{1/2}$ and grows as a
power of $1/\Dstar$. Figure~\ref{fig:apps}(b) compares the computed $L$ with
\cref{eq:crudepathlengthbound} and with Theorem~\ref{thm:rates}, which is evaluated
from the family constants $\Gamma=1$, $C_W=1.04$ and $C_L=1$.
Because the ground states live in a fixed two-dimensional space that is
invariant under $H(s)$, Proposition~\ref{prop:twolevel} bounds $L$ by $\pi/2$, and
the computed $L$ approaches that value, far below the bound of
Theorem~\ref{thm:rates}.

\begin{figure}
  \centering
  \includegraphics[width=\columnwidth]{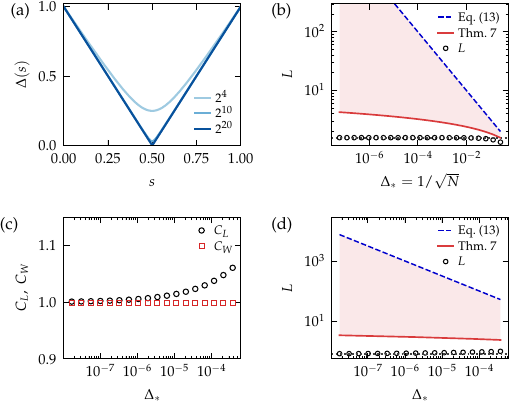}
  \caption{The adiabatic Grover search (a,b) and the XXZ chain (c,d). (a) The gap for
  $N=2^4,2^{10},2^{20}$. (b) $L$ versus $\Delta_*=1/\sqrt N$, with the conventional bound
  \cref{eq:crudepathlengthbound} (blue) and Theorem~\ref{thm:rates} (red), and the computed $L$ (black). The dotted line is $\pi/2$. (c) $C_L$ (black) and
  $C_W$ (red) versus $\Delta_*$. (d) The same as (b) for the XXZ family, with the
  dotted line at $\pi/4$. In (b) and (d), the shaded area fills the region between the two bounds.}
  \label{fig:apps}
\end{figure}

\subsection{XXZ spin chain}
\label{sec:xxz}
We next consider an open XXZ chain~\cite{YangYang1966} of $n=10$ spins with
\begin{align}
  H(s)={}&J_z\sum_i S^z_iS^z_{i+1}
       +s\,J_{xy}\sum_i\big(S^x_iS^x_{i+1}+S^y_iS^y_{i+1}\big) \nonumber\\
       &+(1-s)\,h_0\sum_i(-1)^iS^z_i.
  \label{eq:xxzpath}
\end{align}
The couplings are in the easy-axis regime $J_z>J_{xy}$. At $s=0$ the transverse
coupling is off and $H$ is diagonal, and the ground state is the single Néel product
selected by the staggered field. As $s$ runs to $1$ the transverse coupling turns on and the
staggered field turns off. The target at $s=1$ is the ground state of the pure XXZ
chain, dominated by the two Néel orderings, with transverse corrections that vanish as
$J_{xy}\to0$. The gap minimum sits at the endpoint $s=1$, where the staggered field has vanished and
the global spin flip $\prod_i\sigma^x_i$ is an exact symmetry that exchanges the two
Néel orderings, whose combinations are close to the two lowest states there.

We compute the family at $J_z=2$ and $h_0=0.4$ for $18$ values of $J_{xy}$ from $0.21$
down to $0.028$, which drive $\Dstar$ from $3.8\times10^{-4}$ to $1.7\times10^{-8}$.
The computed gap profiles give
$\Gamma=1.8$ and $C_W=1.00$ at $p=1$, while $C_L\le1.061$ and
neither $C_L$ nor $C_W$ grows as the gap closes
[Fig.~\ref{fig:apps}(c)]. Figure~\ref{fig:apps}(d) reports the result. The bound of
Theorem~\ref{thm:rates} runs from $2.4$ to $3.4$ over these four decades of $\Dstar$
and stays above every computed $L$ by a factor between $2.4$ and $4.1$. The
conventional bound \cref{eq:crudepathlengthbound} instead reaches $7.9\times10^3$,
above the computed $L$ by a factor of $9.6\times10^3$.

The computed $L$ itself does not grow, staying between $0.82$ and $0.98$. Along the
whole path the ground state keeps at least $95\%$ of its weight in the plane spanned by
the two Néel orderings. Because the gap minimum sits at $s=1$, only half of the avoided
crossing is traversed. This accounts for the value near $\pi/4$, half of the $\pi/2$ of
Proposition~\ref{prop:twolevel}.

\subsection{Molecular electronic Hamiltonians}
\label{sec:chemistry}
Finally we turn to the electronic structure of molecules. The Hamiltonian $H(s)=H_i+s(H_f-H_i)$
runs from a mean-field $H_i$ to the full interacting $H_f$, so $H_0=H_i$ and $H'=H_f-H_i$. We
consider the two molecular systems of our earlier work~\cite{mchanCGS}, the nitrogen molecule over a
range of bond lengths and a strongly correlated bioinorganic [2Fe-2S]
cluster~\cite{Beinert1997,Johnson2005,Reiher2017,Sharma2014}. For both systems, $H_i$ is
the Kohn--Sham Hamiltonian of a scalar-relativistic BP86 density-functional
calculation~\cite{Becke1988,Perdew1986}. For N$_2$, we use the
STO-3G basis~\cite{Stewart1970} and consider all $14$ electrons in its $10$ orbitals. For
[2Fe-2S], we use the TZP-DKH basis~\cite{Jorge2009} and an active space of $14$ electrons in
$12$ Kohn--Sham orbitals~\cite{Lee2023}, the $10$ Fe $3d$ orbitals and the two $3p$ orbitals
of the bridging sulfurs.
Each member of the [2Fe-2S] family starts from a different Slater determinant, and we
shift the Kohn--Sham orbital energies in $H_i$ so that this determinant has the lowest energy.
Since $H(s)$ conserves some symmetries (the total spin for both systems and the $D_{2h}$
point-group symmetry for N$_2$), the adiabatic path starting from the ground state of $H_i$ stays
in the corresponding symmetry sector, where we compute $\Delta(s)$.

For the $37$ bond lengths of N$_2$, from $0.4$ to $4.0$ \AA{} in steps of
$0.1$ \AA{}, the minimum gap runs from $1.45$ down to $1.8\times10^{-4}$ hartree. The
profiles give $\Gamma=1.51$ and $C_W=8.6$ at $p=1$, with $C_L\le0.535$ and
$C_K\le1.13$ hartree. For the $83$ determinant starts of [2Fe-2S], whose minimum gap runs
from $4.7\times10^{-2}$ down to $3.5\times10^{-5}$ hartree, we have $\Gamma=0.50$
and $C_W=20.7$, with $C_L\le2.422$ and $C_K\le1.37$ hartree.

Figure~\ref{fig:chemistry} plots $L$ and $K$ against $\Dstar$ together with the
resulting bounds. The bound of Theorem~\ref{thm:rates} lies above every computed $L$, by a factor between
$1.9$ and $7.1$ for N$_2$ [Fig.~\ref{fig:chemistry}(a)] and between $2.4$ and $7.3$
for [2Fe-2S] [Fig.~\ref{fig:chemistry}(c)]. The bound of Corollary~\ref{cor:Krates} lies above every
computed $K$, by $4.9$ to $22.8$ for N$_2$ [Fig.~\ref{fig:chemistry}(b)] and by $12.8$ to
$95.4$ for [2Fe-2S] [Fig.~\ref{fig:chemistry}(d)]. At the smallest gap of N$_2$ the
curvature bound improves on \cref{eq:crudecurvaturebound} by a factor of $122$ and the
length bound on \cref{eq:crudepathlengthbound} by $11$. For [2Fe-2S] at its smallest gap, the curvature bound improves by a
factor of $92$ and the length bound by $9.6$.

\begin{figure}
  \centering
  \includegraphics[width=\columnwidth]{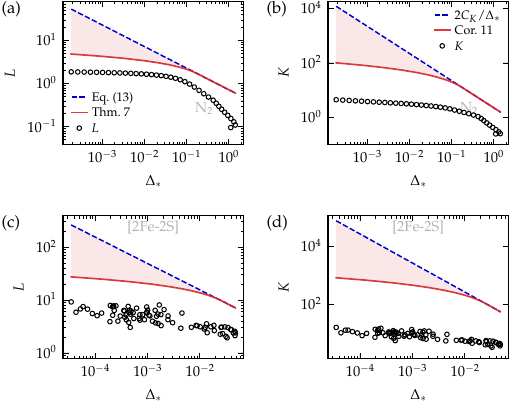}
  \caption{Adiabatic path length $L$ and total curvature $K$ versus $\Delta_*$ for the
  N$_2$ bond-length family (a,b) and the [2Fe-2S] determinant family (c,d).
  Open circles are the computed values. Solid lines are the bounds of
  Theorem~\ref{thm:rates} and Corollary~\ref{cor:Krates}, dashed lines the conventional
  bounds \cref{eq:crudepathlengthbound} and \cref{eq:crudecurvaturebound}, with
  the same coefficients. The shaded area fills the region between the two bounds.}
  \label{fig:chemistry}
\end{figure}

\section{Conclusion}
\label{sec:conclusion}
The conventional bounds on the adiabatic path length $L$ and the total curvature $K$ are
too conservative. In practice, as shown in Sec.~\ref{sec:applications}, the deviation
between the bounds and the computed values becomes as large as orders of magnitude as the
minimum gap $\Dstar$ decreases to $0$. The reason for this discrepancy is that the bounds
use only $\Dstar$, how deep the gap profile is, to bound the gap integral
$\int_0^1ds/\Delta$. Here, we have shown that introducing the width class, a measure of how narrow the
small-gap region is, improves the bounds up to polylogarithmic ones for a typical
avoided crossing.

More specifically, width class $p=1$ makes the gap integral logarithmic, so the length grows
at most as $[\log(\Dstar^{-1})]^{1/2}$ and the curvature at most as $\log(\Dstar^{-1})$.
For $p>1$ the bounds are powers of $1/\Dstar$ with exponents $(p-1)/(2p)$ and $(p-1)/p$,
below the $1/2$ and the $1$ of the conventional bounds \cref{eq:crudepathlengthbound} and
\cref{eq:crudecurvaturebound}. The gap scaling of the path length is proved to be tight
for every integer $p\geq1$, while the tightness of the curvature scaling is proved only
for $p=1$ (Appendix~\ref{app:sharpness}). Moreover, independently of the bounds on the path length and the curvature, the width class
also improves the runtime bound of the adiabatic theorem for the linear schedule, from
$\mathcal O(\Dstar^{-3})$ to $\mathcal O(\Dstar^{-(3-1/p)})$, which is $\mathcal O(\Dstar^{-2})$ at $p=1$
(Corollary~\ref{cor:runtime}). For the constant geometric speed schedule the
width class and the length bound of Theorem~\ref{thm:rates} together give
$\mathcal O(\Dstar^{-1}[\log\Dstar^{-1}]^{1/2})$ at $p=1$
(Corollary~\ref{cor:runtimecgs}).

We evaluated the bounds on the adiabatic Grover
Hamiltonian, an XXZ chain, and two families of molecular Hamiltonians, all four of which
turn out to be in width class $p=1$. The resulting bounds lie above every computed $L$,
and above every computed $K$ for the molecular families, and well below the conventional
bound at small $\Dstar$.

Three directions remain open.
The first is a clear understanding of the mechanism that bounds the adiabatic path
length and the total curvature by a constant.
Though the width class improves the bounds up to polylogarithmic ones, the computed $L$
and $K$ appear to be bounded by a constant as $\Dstar\to0$ (Sec.~\ref{sec:xxz} and the
N$_2$ family of Sec.~\ref{sec:chemistry}). Proposition~\ref{prop:twolevel} partially answers this
question. Extending it to more complicated systems may resolve it. The second
is the tightness of the curvature scaling, which is proved at $p=1$ in this paper and open for
$p>1$. The third is the design of adiabatic algorithms based on the path geometry.
The constant geometric speed schedule~\cite{mchanCGS} uses the path geometry so
that the prior spectral knowledge needed to run the schedule is only a lower bound on the gap
rather than the whole gap function, and Corollary~\ref{cor:runtimecgs} gives the
schedule a performance guarantee. The path geometry may reduce the prior spectral
knowledge of other adiabatic
algorithms in the same way, and constructing such algorithms with performance
guarantees like Corollary~\ref{cor:runtimecgs} is left for future work.

\begin{acknowledgments}
We used resources of
the Center for Advanced Computation at Korea
Institute for Advanced Study
and the National Energy Research Scientific Computing Center (NERSC), a U.S. Department of Energy
Office of Science User Facility operated under
Contract No. DE-AC02-05CH11231.
SC was supported by a KIAS Individual Grant (No. CG090601) at Korea Institute for Advanced Study
and by Quantum Simulator Development Project for Materials Innovation through the National
Research Foundation of Korea (NRF) funded by the Korean government
(Ministry of Science and ICT (MSIT)) (No. NRF-2023M3K5A1094813).
MH was supported by a KIAS Individual Grant (No. CG091302) and
by Institute of Information \& Communications Technology Planning \& Evaluation (IITP) grant
funded by the Korea government (MSIT) (No. 2022-0-01026).
\end{acknowledgments}

\bibliography{refs}

\appendix

\makeatletter
\@addtoreset{thm}{section}
\makeatother
\renewcommand{\thethm}{\thesection\arabic{thm}}
\renewcommand{\theHthm}{appendix.\thesection.\arabic{thm}}
\renewcommand{\thesubsection}{\thesection.\arabic{subsection}}

\section{Families saturating the bounds}
\label{app:sharpness}

Two families saturate the bounds of the main text. The first reaches
$L_P=\Theta(\sqrt{\log\Dstar^{-1}})$ and $K_P=\Theta(\log\Dstar^{-1})$
simultaneously at width class one, the scalings of Theorem~\ref{thm:rates} and
Corollary~\ref{cor:Krates}, through a single avoided crossing. The second
saturates the power law of Theorem~\ref{thm:rates} at width class $p$ for each
fixed integer $p\geq2$. Both are tensor sums
$A\boxplus B=A\otimes\mathbb I_B+\mathbb I_A\otimes B$, where $\mathbb I_A$ and
$\mathbb I_B$ are the identities on the spaces of $A$ and $B$.
Lemma~\ref{lem:tensorsum} collects the properties of tensor sums that both
proofs use.

\begin{lem}\label{lem:tensorsum}
Let $A(s)$ and $B(s)$ be real symmetric and continuously differentiable matrix-valued functions on an interval,
each of dimension at least two, with nondegenerate ground states $|\phi_A\rangle$ and
$|\phi_B\rangle$ and positive gaps $\Delta_A(s)$ and $\Delta_B(s)$. Then $H(s):=A(s)\boxplus B(s)$ has the
nondegenerate ground state $|\Phi_0\rangle=|\phi_A\rangle\otimes|\phi_B\rangle$, whose energy is the
sum of the ground energies of $A(s)$ and $B(s)$. The gap of $H(s)$ is
$\Delta(s)=\min\{\Delta_A(s),\Delta_B(s)\}$. If $|\phi_A\rangle$
and $|\phi_B\rangle$ are chosen real normalized and continuously differentiable, then
\begin{align}
 v_P^2=\|\partial_s\phi_A\|^2+\|\partial_s\phi_B\|^2.
 \label{eq:tensorsum-speed}
\end{align}
\end{lem}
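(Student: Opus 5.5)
The approach is to diagonalize $A(s)\boxplus B(s)$ in the product of the eigenbases of $A(s)$ and $B(s)$, and then read the speed off Lemma~\ref{lem:speed}'s exact expression, Eq.~\eqref{eq:vPexact}, at $g=1$. The tensor-sum form makes the perturbation $H'=A'\boxplus B'$ act on one factor at a time, so the spectral part and the speed part separate cleanly.

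\emph{Spectrum and gap.} Let $a_0<a_1\le\cdots$ and $b_0<b_1\le\cdots$ be the eigenvalues of $A$ and $B$, with eigenvectors $|\alpha_i\rangle$ and $|\beta_k\rangle$, where $|\alpha_0\rangle=|\phi_A\rangle$ and $|\beta_0\rangle=|\phi_B\rangle$. Then $H|\alpha_i\rangle\otimes|\beta_k\rangle=(a_i+b_k)|\alpha_i\rangle\otimes|\beta_k\rangle$, and these product vectors form an orthonormal eigenbasis.
\begin{itemize}
\item The smallest eigenvalue $a_0+b_0$ is attained only at $(i,k)=(0,0)$, because $a_i>a_0$ for $i\ge1$ and $b_k>b_0$ for $k\ge1$. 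The ground state is therefore nondegenerate and equal to $|\phi_A\rangle\otimes|\phi_B\rangle$.
\item Every other pair has $a_i+b_k-a_0-b_0\ge\min\{a_1-a_0,\,b_1-b_0\}$.
\item This minimum is attained by $(1,0)$ or $(0,1)$, which exist because each dimension is at least two.
\end{itemize}
Hence $\Delta=\min\{\Delta_A,\Delta_B\}$.

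\emph{Speed.} At $g=1$, Eq.~\eqref{eq:vPexact} reads
\[
v_P^2=\sum_{(i,k)\neq(0,0)}\frac{|\langle\alpha_i\beta_k|H'|\alpha_0\beta_0\rangle|^2}{(a_i+b_k-a_0-b_0)^2}.
\]
Lemma~\ref{lem:speed} was derived for the affine setting, but its derivation of Eq.~\eqref{eq:vPexact} never uses affinity, only $C^1$ smoothness. The matrix element equals $\langle\alpha_i|A'|\alpha_0\rangle\delta_{k0}+\delta_{i0}\langle\beta_k|B'|\beta_0\rangle$, and since $(i,k)\ne(0,0)$ at most one of the two terms is nonzero. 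The sum therefore splits into the $k=0$, $i\ge1$ part and the $i=0$, $k\ge1$ part:
\[
v_P^2=\sum_{i\ge1}\frac{|\langle\alpha_i|A'|\phi_A\rangle|^2}{(a_i-a_0)^2}+\sum_{k\ge1}\frac{|\langle\beta_k|B'|\phi_B\rangle|^2}{(b_k-b_0)^2}.
\]
The same formula for the rank-one projector of $A$ alone identifies the first sum as $\|Q_A\partial_s\phi_A\|^2$.

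\emph{Real gauge.} For real normalized $\phi_A$, $\langle\phi_A|\partial_s\phi_A\rangle$ is real and equals $\tfrac12\partial_s\|\phi_A\|^2=0$. So $Q_A\partial_s\phi_A=\partial_s\phi_A$, and the first sum is $\|\partial_s\phi_A\|^2$; likewise the second is $\|\partial_s\phi_B\|^2$. As a cross-check, differentiating the product state directly gives $\partial_s\phi_A\otimes\phi_B+\phi_A\otimes\partial_s\phi_B$. Its two terms are orthogonal because $\langle\phi_A|\partial_s\phi_A\rangle=0$, so its squared norm is the same sum. It is orthogonal to the product state, so it equals $Q\,\partial_s\Phi_0$, and its squared norm is $v_P^2$ in this gauge.

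\emph{Main obstacle.} The hard step is justifying Eq.~\eqref{eq:vPexact} without affinity and with a basis chosen pointwise. Nondegeneracy of the ground state makes the projector differentiable, and the formula involves only $P$ and $P'$, so the pointwise choice of the excited basis does not matter.
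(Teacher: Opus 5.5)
Your proof is correct. Its spectral half (product eigenbasis, eigenvalues $a_i+b_k$, nondegeneracy, and $\Delta=\min\{\Delta_A,\Delta_B\}$) is the paper's argument written out in more detail. This includes the use of dimension at least two, which the paper leaves implicit.

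For the speed, your main route is different. You insert the product eigenbasis into the sum-over-states formula Eq.~\eqref{eq:vPexact}. You rightly note that its derivation needs only differentiability of $H$ and $P$, as the paper itself remarks for non-affine $H$. You then observe that $H'=A'\boxplus B'$ couples $|\phi_A\rangle\otimes|\phi_B\rangle$ only to states excited in a single factor. This splits $v_P^2$ into $\|Q_A\partial_s\phi_A\|^2+\|Q_B\partial_s\phi_B\|^2$, and the real gauge is used only at the end to remove $Q_A$ and $Q_B$.

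The paper's proof is exactly your ``cross-check''. It uses the product rule, the orthogonality of $\partial_s\phi_A\otimes\phi_B$ and $\phi_A\otimes\partial_s\phi_B$ in the real gauge, and $v_P=\|\partial_s\Phi_0\|$ for a rank-one projector with $\langle\Phi_0|\partial_s\Phi_0\rangle=0$. That argument is complete by itself and shorter. It needs neither Eq.~\eqref{eq:vPexact} nor the identification $\tfrac12\|P_A'\|_{\HS}^2=\|Q_A\partial_s\phi_A\|^2$; that identification is true, but you assert it without computing it.

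What your route buys is a gauge-free statement, $v_P^2=v_{P_A}^2+v_{P_B}^2$ at the level of projectors, valid for any phase choice, together with a transparent reason why no cross terms appear. The same identity also follows in one line from $P=P_A\otimes P_B$ and $\operatorname{Tr}(P_AP_A')=\operatorname{Tr}(P_AP_A'P_A)=0$.
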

\begin{proof}
The eigenvalues of a tensor sum are the sums of those of its two factors,
which proves the assertions on the ground state, the ground energy and the gap.
The product rule gives
$\partial_s|\Phi_0\rangle=|\partial_s\phi_A\rangle\otimes|\phi_B\rangle
+|\phi_A\rangle\otimes|\partial_s\phi_B\rangle$.
Real normalized vectors satisfy $\langle\phi_A|\partial_s\phi_A\rangle=
\langle\phi_B|\partial_s\phi_B\rangle=0$, so the two terms are orthogonal and
$\langle\Phi_0|\partial_s\Phi_0\rangle=0$. With $P=|\Phi_0\rangle\langle\Phi_0|$,
Eq.~\eqref{eq:vpdef} then gives $v_P=\|\partial_s\Phi_0\|$, which proves
Eq.~\eqref{eq:tensorsum-speed}.
\end{proof}
With more than two factors the tensor sum is formed in the same way, for example
$A\boxplus B\boxplus C=A\otimes\mathbb I_B\otimes\mathbb I_C
+\mathbb I_A\otimes B\otimes\mathbb I_C+\mathbb I_A\otimes\mathbb I_B\otimes C$.
Since $A\boxplus B\boxplus C=(A\boxplus B)\boxplus C$,
Lemma~\ref{lem:tensorsum} applies factor by factor.

\subsection{Width class one}

Let $N$ be a positive integer and let $s\in[0,1]$.
Let $a_j=2^{-j}$ for $1\le j\le N$.
Let $b=2^{-20}$ and $\beta_N=b/\sqrt N$.
Let $\zeta_N=\sqrt{1-a_N^2}$.
Define

\begin{align}
 B_N(s)&=\frac12
 \begin{pmatrix}\zeta_Ns&a_N\\ a_N&-\zeta_Ns\end{pmatrix},
 \nonumber\\
 A_N(s)&=\begin{pmatrix}
 0&\beta_Ns&\cdots&\beta_Ns\\
 \beta_Ns&s+2a_1&&0\\
 \vdots&&\ddots&\\
 \beta_Ns&0&&s+2a_N
 \end{pmatrix},\nonumber\\
 H_N(s)&=A_N(s)\boxplus B_N(s).
 \label{eq:stair-H}
\end{align}

The factor $B_N$ is a single avoided crossing, and it alone sets the gap, with minimum
$a_N$ at $s=0$. The growth of the path length comes mainly from $A_N$. Write
$|0\rangle,\dots,|N\rangle$ for the coordinate basis of $A_N$, whose ground state is
$|0\rangle$ at $s=0$. Because the $j$-th diagonal entry of
$A_N$ is $s+2a_j$ and the first basis vector couples to the $j$-th with strength
$\beta_N s$, the ground state acquires a $j$-th component of size about
$\beta_N s/(s+2a_j)=\beta_N/(1+2a_j/s)$. That component vanishes for $s\ll a_j$ and is of
order $\beta_N$ near $s=a_j$. Because one such growth occurs at each $a_j=2^{-j}$, the
path length adds up to about $N\beta_N=b\sqrt N$. With $\Dstar=a_N=2^{-N}$ this
gives $L_P=\Theta(\sqrt{\log\Dstar^{-1}})$. The total curvature likewise collects one
contribution at each $a_j$, but of scale one, which gives
$K_P=\Theta(N)=\Theta(\log\Dstar^{-1})$.

\begin{lem}\label{lem:stair-gap}
Let $N$ be a positive integer and let $H_N(s)$ be defined by
Eq.~\eqref{eq:stair-H} for $s\in[0,1]$.
Then $H_N$ is real symmetric and affine with dimension $2(N+1)$ and a
nondegenerate ground state.
Its gap is in width class one with
\begin{align}
 \Delta(s)&=\sqrt{a_N^2+\zeta_N^2s^2},\qquad \Dstar=a_N,\nonumber\\
 \Gamma&=1,\qquad C_W=\frac2{\sqrt3}.
 \label{eq:stair-gap}
\end{align}
It satisfies $\|H_N'\|<2$ and $\Omega<4$.
Here $\Omega$ is the difference between the largest and smallest eigenvalues
of $H_N'$. The $H_N$ form a family of width class one in
Definition~\ref{def:family}.
\end{lem}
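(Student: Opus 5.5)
The proof takes the two factors separately and combines them with Lemma~\ref{lem:tensorsum}. Both $A_N(s)$ and $B_N(s)$ are real symmetric and affine in $s$, so $H_N$ is too, with dimension $(N+1)\cdot 2$.

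\textbf{Gap of $B_N$.} $B_N=\tfrac12(\zeta_N s\,\sigma_z+a_N\sigma_x)$ has eigenvalues $\pm\tfrac12\sqrt{a_N^2+\zeta_N^2s^2}$. Its ground state is therefore nondegenerate, with gap $\Delta_B(s)=\sqrt{a_N^2+\zeta_N^2s^2}$.

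\textbf{Gap of $A_N$.} Cauchy interlacing applied to the principal submatrix obtained by deleting the row and column of $|0\rangle$ gives a lower bound on the second eigenvalue. That submatrix is $\operatorname{diag}(s+2a_1,\dots,s+2a_N)$, so the second eigenvalue of $A_N$ is at least $s+2a_N$. The Rayleigh quotient of $|0\rangle$ is $0$, so the ground energy is at most $0$. Hence $\Delta_A(s)\ge s+2a_N$. Squaring shows $(s+2a_N)^2\ge a_N^2+s^2\ge a_N^2+\zeta_N^2s^2$, so $\Delta_A\ge\Delta_B>0$ for all $s$. Lemma~\ref{lem:tensorsum} then gives a nondegenerate ground state and $\Delta=\min\{\Delta_A,\Delta_B\}=\Delta_B$. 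This yields the formula in Eq.~\eqref{eq:stair-gap}, with minimum $\Dstar=a_N$ at $s_*=0$. At $s=1$ the gap is $\sqrt{a_N^2+1-a_N^2}=1$, and $\Delta$ is increasing, so $\Gamma=1$ is an upper bound.

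\textbf{Width class.} The bound $\Delta(s)\ge\zeta_N s$ holds on $[0,1]$. Since $s\ge0$, the set $\{\Delta<\gamma\}$ lies in $[0,\gamma/\zeta_N)$, so $\mubd(\gamma)\le\gamma/\zeta_N$. This is the one-sided version of Lemma~\ref{lem:minorant}, which avoids its factor $2$. Because $a_N\le 1/2$, we have $\zeta_N\ge\sqrt3/2$. With $\Gamma=1$ this gives $C_W=2/\sqrt3$, uniformly in $N$.

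\textbf{Norms.} Write $H_N'=A_N'\otimes\mathbb I+\mathbb I\otimes B_N'$ and split $A_N'=D+\beta_N S$. Here $D=\operatorname{diag}(0,1,\dots,1)$ and $S$ is the star adjacency matrix, with $\|S\|=\sqrt N$, so $\|\beta_NS\|=b$. Weyl's theorem places the spectrum of $A_N'$ in $[-b,1+b]$. The spectrum of $B_N'=\tfrac12\zeta_N\sigma_z$ is $\pm\zeta_N/2$. Since eigenvalues of a tensor sum add, $\|H_N'\|\le 1+b+\zeta_N/2<2$ and $\Omega\le(1+2b)+\zeta_N<4$. A common $\Lambda$ bound together with the common pair $(\Gamma,C_W)$ makes $\{H_N\}$ an affine family of width class one in the sense of Definition~\ref{def:family}.

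The only step needing care is $\Delta_A\ge\Delta_B$. The interlacing argument makes it straightforward, so no serious obstacle is expected.
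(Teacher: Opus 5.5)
Your proof is correct and follows essentially the same route as the paper: Cauchy interlacing against $\operatorname{diag}(s+2a_j)$ gives $\Delta_A\ge s+2a_N\ge\Delta_B$, Lemma~\ref{lem:tensorsum} then identifies the gap with $\Delta_B$, you bound $\mubd(\gamma)\le\gamma/\zeta_N$ using $\zeta_N\ge\sqrt3/2$, and you split $A_N'$ into $\operatorname{diag}(0,1,\ldots,1)$ plus a star matrix of norm $b$. The differences are minor: you use the Rayleigh quotient of $|0\rangle$ for $e_0\le0$ where the paper uses a $2\times2$ interlacing block, adding the tensor-sum spectra gives you the slightly sharper $\Omega\le1+2b+\zeta_N$, and you explicitly check $\Delta(1)=1$ to justify $\Gamma=1$.
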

\begin{proof}
We write $\Delta_A$ and $\Delta_B$ for the gaps of the two factors $A_N$ and
$B_N$, and let $e_0\le e_1$ be the two lowest eigenvalues of $A_N(s)$, so
$\Delta_A=e_1-e_0$. Consider $\operatorname{diag}(s+2a_j)$, which is obtained by
deleting the first row and column of $A_N$. By the Cauchy interlacing
theorem~\cite{Bhatia1997}, $e_1$ is at least the smallest of these diagonal entries,
$s+2a_N$. The same theorem gives $e_0<0$ for $s>0$, because the
$2\times2$ block of $A_N$ on the first and any other coordinate has a negative
determinant, $-\beta_N^2s^2$. Hence
\begin{align}
 \Delta_A(s)&>s+2a_N>\Delta_B(s)\qquad(s>0),\nonumber\\
 \mubd(\gamma)&=\frac{\sqrt{\gamma^2-a_N^2}}{\zeta_N}
 \leq\frac2{\sqrt3}\frac{\gamma}{\Gamma}
 \qquad(a_N\le\gamma\le1),
\end{align}
where $s+2a_N>\Delta_B(s)$ holds because $\zeta_N\le1$ gives
$\Delta_B(s)\le\sqrt{a_N^2+s^2}\le a_N+s$. At $s=0$ the couplings vanish and
$\Delta_A(0)=2a_N>a_N=\Delta_B(0)$.
Lemma~\ref{lem:tensorsum} gives the gap of $H_N$ as $\Delta_B$ and proves
Eq.~\eqref{eq:stair-gap}.

Differentiating $A_N$ gives a constant matrix, the sum of
$\operatorname{diag}(0,1,\ldots,1)$ and a matrix whose only nonzero entries are
$\beta_N$ in the first row and column. The second has norm $\beta_N\sqrt N=b$, so
$\|A_N'\|\le1+b$. The eigenvalues of $B_N'$ are $\pm\zeta_N/2$, so the triangle inequality gives $\|H_N'\|\le\frac32+b<2$. Hence $\Omega<4$.
The constants $\Gamma$ and $C_W$ and the bound on $\|H_N'\|$ do not depend on $N$,
as Definition~\ref{def:family} requires.
\end{proof}

\begin{lem}\label{lem:stair-state}
Let $N$ be a positive integer and let $A_N(s)$ be defined in
Eq.~\eqref{eq:stair-H} for $0<s\le1$.
Let $e_0(s)$ be its lowest eigenvalue and let $y(s)=-e_0(s)/s$.
Let $|\phi_A(s)\rangle$ be its real normalized ground state with positive
first coordinate.
In the coordinate basis $|0\rangle,\ldots,|N\rangle$ it is
\begin{align}
 |\phi_A\rangle=\frac{|0\rangle-|z\rangle}{\sqrt{1+\|z\|^2}},\qquad
 z_j(s)=\frac{\beta_Ns}{s(1+y)+2a_j}.
 \label{eq:stair-state}
\end{align}
Here $z_0=0$ and $|z\rangle=\sum_{j=1}^N z_j|j\rangle$.
The scalar $y$ satisfies
\begin{align}
 y=\beta_N^2\sum_{j=1}^N \frac{s}{s(1+y)+2a_j}
 \label{eq:stair-y}
\end{align}
and
\begin{align}
 0\le y\le b^2,\qquad \|z\|\le b.
 \label{eq:stair-ybounds}
\end{align}
The ground state $|\phi_A\rangle$ satisfies
\begin{align}
 \partial_s|\phi_A\rangle&=-\frac{(\mathbb I-|\phi_A\rangle\langle\phi_A|)\,
 \partial_s|z\rangle}{\sqrt{1+\|z\|^2}},
 \label{eq:stair-dphi}\\
 \|(\mathbb I-|\phi_A\rangle\langle\phi_A|)x\|&\ge\frac{\|x\|}{\sqrt{1+\|z\|^2}}
 \label{eq:stair-proj}
\end{align}
for every vector $|x\rangle$ with $x_0=0$.
\end{lem}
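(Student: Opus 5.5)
The plan is to solve the eigenvalue equation for $A_N(s)$ directly, exploiting its arrowhead structure. Write a candidate eigenvector as $x_0|0\rangle+\sum_j x_j|j\rangle$ with eigenvalue $e$. The rows $j\ge1$ of $A_N x=ex$ read $\beta_N s\,x_0+(s+2a_j)x_j=e\,x_j$, so that $x_j=-\beta_N s\,x_0/(s+2a_j-e)$. Row $0$ reads $\beta_N s\sum_j x_j=e\,x_0$, which gives the secular equation
\begin{align*}
-e=\beta_N^2s^2\sum_{j=1}^N\frac{1}{s+2a_j-e}.
\end{align*}

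For the ground state we already know from the proof of Lemma~\ref{lem:stair-gap} that $e_0<0$ for $s>0$, and by interlacing $e_0<s+2a_N$. Hence $x_0\neq0$, since otherwise the rows $j\ge1$ would force $e_0=s+2a_j$. Setting $e_0=-sy$ with $y>0$ and dividing by $s$ turns the secular equation into Eq.~\eqref{eq:stair-y}. Normalizing $x_0=1$ gives $x_j=-z_j$ with $z_j$ as in Eq.~\eqref{eq:stair-state}; after normalization and the choice of a positive first coordinate, this is the stated form.

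For the bounds \eqref{eq:stair-ybounds}, note first that $y\ge0$ because $e_0<0$. Each summand in Eq.~\eqref{eq:stair-y} satisfies $s/(s(1+y)+2a_j)\le1$, so $y\le N\beta_N^2=b^2$. Similarly $z_j\le\beta_N$, so $\|z\|^2\le N\beta_N^2=b^2$.

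For Eq.~\eqref{eq:stair-dphi}, write $|\phi_A\rangle=c\,(|0\rangle-|z\rangle)$ with $c=(1+\|z\|^2)^{-1/2}$. Differentiability of $y$ follows from nondegeneracy of $e_0$ (analytic perturbation), and hence $z$ is differentiable. Then $\partial_s|\phi_A\rangle=c'(|0\rangle-|z\rangle)-c\,\partial_s|z\rangle$. Since $|\phi_A\rangle$ is real and normalized, $\partial_s|\phi_A\rangle\perp|\phi_A\rangle$, so applying $\mathbb I-|\phi_A\rangle\langle\phi_A|$ leaves it unchanged. The same projector kills the first term, because that term is parallel to $|\phi_A\rangle$. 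This yields Eq.~\eqref{eq:stair-dphi}.

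For Eq.~\eqref{eq:stair-proj}, take $x$ with $x_0=0$. Then $\|(\mathbb I-|\phi_A\rangle\langle\phi_A|)x\|^2=\|x\|^2-\langle\phi_A|x\rangle^2$ and $\langle\phi_A|x\rangle=-c\langle z|x\rangle$. Cauchy--Schwarz gives $\langle\phi_A|x\rangle^2\le\|z\|^2\|x\|^2/(1+\|z\|^2)$, so the squared norm is at least $\|x\|^2/(1+\|z\|^2)$.

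The only delicate point is justifying that the ground-state eigenvector has $x_0\ne0$ and that $y$ (hence $z$) is differentiable. I would handle this through the strict inequality $e_0<s+2a_N$ and the simple eigenvalue $e_0$. Everything else is direct algebra.
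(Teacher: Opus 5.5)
Your proof is correct and follows essentially the same route as the paper: the eigenvalue equation gives the state and the equation for $y$; $e_0<0$ from interlacing gives $y\le N\beta_N^2=b^2$ and $z_j\le\beta_N$; and the derivative and projector estimates come from differentiating a normalized vector and applying Cauchy--Schwarz. You also spell out why $x_0\neq0$ and why $y$ (hence $z$) is differentiable, which the paper leaves implicit.
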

\begin{proof}
The eigenvalue equation gives Eqs.~\eqref{eq:stair-state} and~\eqref{eq:stair-y}.
For $s>0$ the $2\times2$ block of $A_N$ on coordinates $0$ and $1$ has
determinant $-\beta_N^2s^2<0$, so the Cauchy interlacing theorem~\cite{Bhatia1997}
gives $e_0<0$.
Hence $y>0$ and each term of the sum in Eq.~\eqref{eq:stair-y} lies in $[0,1]$.
Hence $y\le N\beta_N^2=b^2$. Since $z_j$ is $\beta_N$ times the $j$-th term,
$0\le z_j\le\beta_N$ and $\|z\|\le\sqrt N\beta_N=b$.

Write $|w\rangle=|0\rangle-|z\rangle$, so that $|\phi_A\rangle=|w\rangle/\|w\|$
with $\|w\|^2=1+\|z\|^2$. The derivative of a normalized vector is
$\partial_s(|w\rangle/\|w\|)=(\mathbb I-|\phi_A\rangle\langle\phi_A|)
\partial_s|w\rangle/\|w\|$. With $\partial_s|w\rangle=-\partial_s|z\rangle$ this
gives Eq.~\eqref{eq:stair-dphi}. If $x_0=0$, then
$\langle\phi_A|x\rangle=-\langle z|x\rangle/\|w\|$. The Cauchy--Schwarz
inequality bounds $|\langle\phi_A|x\rangle|^2$ by $\|z\|^2\|x\|^2/(1+\|z\|^2)$, and
$\|(\mathbb I-|\phi_A\rangle\langle\phi_A|)x\|^2=\|x\|^2-|\langle\phi_A|x\rangle|^2$
gives Eq.~\eqref{eq:stair-proj}.
\end{proof}

\begin{thm}\label{thm:stair-length}
Let $N$ be a positive integer and let $H_N(s)$ be defined by
Eq.~\eqref{eq:stair-H} for $s\in[0,1]$.
Let $L_P$ be its ground-state path length and let
$c_b=b(1-b^2)/[(6+3b^2)(1+b^2)]$ with $b=2^{-20}$.
Then $c_b\sqrt N\leq L_P<\sqrt{2(N+2)}$. Along the family $\{H_N\}$,
$L_P=\Theta(\sqrt{\log\Dstar^{-1}})$ as $N\to\infty$.
\end{thm}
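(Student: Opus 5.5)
The upper bound comes directly from Theorem~\ref{thm:L}. The lower bound compares the speed with the growth of one coordinate of $z$ on each dyadic interval $[a_j,2a_j]$.

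\emph{Upper bound.} Lemma~\ref{lem:stair-gap} gives $\Lambda_1=\|H_N'\|<2$. Hence Theorem~\ref{thm:L} yields $L_P^2< 2\int_0^1 ds/\Delta(s)$. By Lemma~\ref{lem:stair-gap} the gap integral is explicit:
\begin{align*}
\int_0^1\frac{ds}{\sqrt{a_N^2+\zeta_N^2s^2}}=\frac{\arsinh(\zeta_N/a_N)}{\zeta_N}.
\end{align*}
Bound it using $\arsinh x\le\log(2x+1)$, $\zeta_N\le1$ and $1/\zeta_N\le 2/\sqrt3$, with $a_N=2^{-N}$. This should give at most $(N\log 2+\log 3)/\zeta_N$, which is below $N+2$ with a little room. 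It follows that $L_P<\sqrt{2(N+2)}$.

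\emph{Lower bound.} By Lemma~\ref{lem:tensorsum}, $v_P\ge\|\partial_s\phi_A\|$. Combining Eqs.~\eqref{eq:stair-dphi} and~\eqref{eq:stair-proj} with $x=\partial_s z$ (which has $x_0=0$) and $\|z\|\le b$ gives
\begin{align*}
v_P\ge\frac{\|\partial_s z\|}{1+\|z\|^2}\ge\frac{\|\partial_s z\|}{1+b^2}.
\end{align*}
The naive estimate $\|z'\|\ge N^{-1/2}\sum_j|z_j'|$ only gives a constant, so I will instead localize. The intervals $I_j=[a_j,2a_j]=[2^{-j},2^{1-j}]$, $1\le j\le N$, are disjoint subsets of $(0,1]$. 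On $I_j$ I use $\|\partial_s z\|\ge|\partial_s z_j|$, so
\begin{align*}
L_P\ge\frac{1}{1+b^2}\sum_{j=1}^N\int_{I_j}|\partial_s z_j|\,ds\ge\frac{1}{1+b^2}\sum_{j=1}^N\big[z_j(2a_j)-z_j(a_j)\big].
\end{align*}
This needs $z_j$ to be $C^1$ on $(0,1]$. That holds because $e_0$ is a simple eigenvalue, so $y=-e_0/s$ and the components in Eq.~\eqref{eq:stair-state} are smooth.

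\emph{Evaluating the endpoints.} The main obstacle is that $z_j$ depends on $s$ also through $y(s)$, whose monotonicity I do not want to analyze. I avoid it by using only $0\le y\le b^2$ from Eq.~\eqref{eq:stair-ybounds} at each endpoint. At $s=2a_j$ this gives $z_j=\beta_N/(2+y)\ge\beta_N/(2+b^2)$. At $s=a_j$ it gives $z_j=\beta_N/(3+y)\le\beta_N/3$. Each increment is therefore at least
\begin{align*}
\beta_N\Big(\frac1{2+b^2}-\frac13\Big)=\frac{\beta_N(1-b^2)}{3(2+b^2)}.
\end{align*}
Summing $N$ such terms with $N\beta_N=b\sqrt N$ gives exactly $L_P\ge c_b\sqrt N$.

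\emph{Asymptotics.} Since $\Dstar=a_N=2^{-N}$, we have $N=\log_2\Dstar^{-1}$. The two bounds then give $L_P=\Theta(\sqrt{\log\Dstar^{-1}})$, uniformly along the family.
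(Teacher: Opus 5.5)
Your proof is correct and follows the paper's argument essentially step for step. The upper bound comes from Theorem~\ref{thm:L} with $\Lambda_1<2$ and the gap integral $\arsinh(\zeta_N/a_N)/\zeta_N\le N+2$, which you verify more explicitly than the paper does; the lower bound comes from $v_P\ge\|z'\|/(1+b^2)$, localized to the dyadic intervals $[a_j,2a_j]$, with the endpoint values of $z_j$ controlled only through $0\le y\le b^2$. One small slip: at $g=1$ one has $\Lambda_1=\Omega/2$, which is at most $\|H_N'\|$ rather than equal to it, but the bound $\Lambda_1<2$ still holds.
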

\begin{proof}
Since $z_0=0$, Eqs.~\eqref{eq:stair-dphi} and~\eqref{eq:stair-proj} with
$|x\rangle=|z'\rangle$ give $\|\phi_A'\|\ge\|z'\|/(1+\|z\|^2)\ge\|z'\|/(1+b^2)$,
the last step by Eq.~\eqref{eq:stair-ybounds}. Lemma~\ref{lem:tensorsum} gives $v_P\ge\|\phi_A'\|$.
Evaluating Eq.~\eqref{eq:stair-state} at $s=a_j$ and $s=2a_j$ with
$y\in[0,b^2]$ from Eq.~\eqref{eq:stair-ybounds} gives
\begin{align}
 z_j(2a_j)-z_j(a_j)\geq
 \beta_N\Big(\frac1{2+b^2}-\frac13\Big)
 =\frac{\beta_N(1-b^2)}{6+3b^2}.
 \label{eq:stair-shells}
\end{align}
Moreover,
\begin{align*}
  \int_0^1\|z'\|\,ds
  &\;\ge\;\sum_{j=1}^{N}\int_{a_j}^{2a_j}\|z'\|\,ds\\
  &\;\ge\;\sum_{j=1}^{N}\big[z_j(2a_j)-z_j(a_j)\big].
\end{align*}
With Eq.~\eqref{eq:stair-shells} and $N\beta_N=b\sqrt N$, the right-hand side is
at least $b\sqrt N(1-b^2)/(6+3b^2)$. Since $v_P\ge\|\phi_A'\|\ge\|z'\|/(1+b^2)$, the
path length $L_P=\int_0^1v_P\,ds$ is at least $c_b\sqrt N$.

At $g=1$, $\Lambda_1=\Omega/2$, so Lemma~\ref{lem:stair-gap} gives $\Lambda_1<2$.
Theorem~\ref{thm:L} then gives $L_P^2\le\Lambda_1(N+2)<2(N+2)$ from
\begin{align}
 \int_0^1\frac{ds}{\Delta(s)}
 =\frac{\arsinh(\zeta_N/a_N)}{\zeta_N}\le N+2.
 \label{eq:stair-gapint}
\end{align}
With $\Dstar=a_N=2^{-N}$ from Lemma~\ref{lem:stair-gap}, the two bounds on $L_P$
give $L_P=\Theta(\sqrt{\log\Dstar^{-1}})$.
\end{proof}

\begin{lem}\label{lem:stair-unitcompare}
Let $N$ be a positive integer and consider $0<s\le1$. Let $\xi=\log s$, so that
$\partial_\xi=s\partial_s$, and $|r\rangle:=\beta_N^{-1}\partial_\xi|z\rangle$
with $|z\rangle$ and $|\phi_A\rangle$ of Lemma~\ref{lem:stair-state}. Define
\begin{align*}
 f_0&=0,\qquad f_j=\frac{2a_js}{(s+2a_j)^2}\quad(1\le j\le N),\\
 |e\rangle&=-\frac{|f\rangle}{\|f\|}.
\end{align*}
Then $\|r-f\|\le3b^2\|f\|$ and $w_A:=\|\partial_\xi\phi_A\|>0$. Moreover, for every
integer $1\le k\le N$, the unit tangent $|u_A\rangle:=\partial_\xi|\phi_A\rangle/w_A$
satisfies
\begin{align}
 \|u_A(a_k)-e(a_k)\|\le2(3b^2+b)<6b.
 \label{eq:stair-unitcompare}
\end{align}
\end{lem}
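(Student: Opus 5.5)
The plan is to make everything explicit in the logarithmic variable $\xi=\log s$, compare the exact derivative $|r\rangle$ with the model vector $|f\rangle$ coordinate by coordinate, and then pass from $r$ to the unit tangent $u_A$ with an elementary normalization estimate. In fact the argument should give Eq.~\eqref{eq:stair-unitcompare} at every $s\in(0,1]$, not only at $s=a_k$.

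\emph{Step 1: compute $r$.} Write $D_j=s(1+y)+2a_j$ and $\eta=\partial_\xi y$. Differentiating Eq.~\eqref{eq:stair-state} in $\xi$ gives
\begin{align*}
 \partial_\xi\frac{s}{D_j}=\frac{s(2a_j-s\eta)}{D_j^2},
 \qquad
 r_j=\frac{s(2a_j-s\eta)}{D_j^2},
\end{align*}
so $\beta_N r_j=\partial_\xi z_j$.

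\emph{Step 2: bound $\eta$.} Differentiating Eq.~\eqref{eq:stair-y} in $\xi$ gives $\eta=\beta_N^2\sum_j r_j$. Solving this linear relation for $\eta$ yields
\begin{align*}
 \eta\Big(1+\beta_N^2\sum_j\frac{s^2}{D_j^2}\Big)=\beta_N^2\sum_j\frac{2a_js}{D_j^2}.
\end{align*}
Hence $\eta\ge0$. Since $y\ge0$ gives $D_j\ge s+2a_j$, we also get $\eta\le\beta_N^2\sum_jf_j\le\beta_N^2\sqrt N\,\|f\|$.

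\emph{Step 3: split $r-f$ and bound $\|r-f\|\le3b^2\|f\|$.} Split $r_j-f_j$ into two parts.
\begin{itemize}
\item The first part is $2a_js/D_j^2-f_j$. Because $s+2a_j\le D_j\le(1+b^2)(s+2a_j)$ by Eq.~\eqref{eq:stair-ybounds}, this part is at most $[1-(1+b^2)^{-2}]f_j\le2b^2f_j$ in absolute value.
\item The second part is $-s^2\eta/D_j^2$. Each entry has absolute value at most $\eta$, so the vector has norm at most $\sqrt N\,\eta\le N\beta_N^2\|f\|=b^2\|f\|$.
\end{itemize}
Adding the two parts gives $\|r-f\|\le3b^2\|f\|$.

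\emph{Step 4: positivity of $w_A$.} Every $f_j>0$ for $s>0$, so $\|r\|\ge(1-3b^2)\|f\|>0$. Equation~\eqref{eq:stair-dphi} in the $\xi$ variable reads $\partial_\xi|\phi_A\rangle=-\beta_N(\mathbb I-\Pi)|r\rangle/\sqrt{1+\|z\|^2}$, where $\Pi=|\phi_A\rangle\langle\phi_A|$. Since $r_0=0$, Eq.~\eqref{eq:stair-proj} applies and gives $w_A\ge\beta_N\|r\|/(1+\|z\|^2)>0$.

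\emph{Step 5: compare the unit tangent with $e$.} By Step 4, $u_A=x/\|x\|$ with $x=-(\mathbb I-\Pi)r$, and $e=y/\|y\|$ with $y=-f$. I use the elementary inequality $\|x/\|x\|-y/\|y\|\|\le2\|x-y\|/\|y\|$. Then
\begin{align*}
 \|x-y\|\le\|(\mathbb I-\Pi)(r-f)\|+\|\Pi f\|\le3b^2\|f\|+|\langle\phi_A|f\rangle|.
\end{align*}
Since $f_0=0$, we have $\langle\phi_A|f\rangle=-\langle z|f\rangle/\sqrt{1+\|z\|^2}$. Its modulus is at most $\|z\|\,\|f\|\le b\|f\|$ by Eq.~\eqref{eq:stair-ybounds}. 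This gives $\|u_A-e\|\le2(3b^2+b)$ at every $s\in(0,1]$, in particular at $s=a_k$, and $2(3b^2+b)<6b$ because $b=2^{-20}$.

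The main obstacle is Step 3, specifically the $\eta$ term. Its relative size compared with $f_j$ is not small for $j$ with $a_j\ll s$, so a coordinatewise relative bound fails. It has to be controlled in norm through $\eta\le\beta_N^2\sqrt N\|f\|$, which turns the $N$ small couplings into the single factor $N\beta_N^2=b^2$. A second point needing care is Step 5: the comparison must be made against $(\mathbb I-\Pi)f$ and $\Pi f$ rather than against $\|r\|$, otherwise stray factors such as $b\cdot3b^2$ break the stated constant $2(3b^2+b)$.
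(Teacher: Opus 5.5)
Your proposal is correct and follows the paper's proof essentially step for step. It uses the same formula for $\partial_\xi y$ and its bound by $\beta_N^2\sum_j f_j$, and the same split of $r-f$ into a coordinatewise $2b^2 f_j$ part plus a $\sqrt N\,\partial_\xi y\le b^2\|f\|$ part. It also uses the same normalization inequality applied to $(\mathbb I-|\phi_A\rangle\langle\phi_A|)(f-r)+|\phi_A\rangle\langle\phi_A|f\rangle$, with the only deviation being that you obtain $w_A>0$ directly from $\|r\|\ge(1-3b^2)\|f\|>0$ and Eq.~\eqref{eq:stair-proj} rather than by invoking Lemma~\ref{lem:neverstops}, which is self-contained and is the same quantitative estimate the paper itself uses later in Lemma~\ref{lem:stair-productcompare}.
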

\begin{proof}
By Lemma~\ref{lem:neverstops}, the ground state of $A_N$ has positive speed
because Eq.~\eqref{eq:stair-state} differs from $|\phi_A(0)\rangle=|0\rangle$
for $s>0$.
With $D_j=s(1+y)+2a_j$, differentiating Eq.~\eqref{eq:stair-y} gives
\begin{align*}
 0\le\partial_\xi y
 =\frac{\beta_N^2\sum_{j=1}^N 2a_js/D_j^2}
 {1+\beta_N^2\sum_{j=1}^N s^2/D_j^2}
 \le\beta_N^2\sum_{j=1}^N f_j.
\end{align*}
Differentiating $|z\rangle$ in Eq.~\eqref{eq:stair-state} gives $r_0=0=f_0$ and
$r_j=(2a_js-s^2\partial_\xi y)/D_j^2$ for $1\le j\le N$.
The first term of $r_j$ differs from $f_j$ by at most $2yf_j\le2b^2f_j$, since
$s+2a_j\le D_j\le(1+y)(s+2a_j)$ and $y\le b^2$. In each coordinate the second term
$s^2\partial_\xi y/D_j^2$ is at most $\partial_\xi y$, since $D_j\ge s$. Then, from
$\sum_{j=1}^N f_j\le\sqrt N\|f\|$ and $N\beta_N^2=b^2$, we have
$\|r-f\|\le2b^2\|f\|+\sqrt N\,\partial_\xi y\le3b^2\|f\|$.
Moreover, $\partial_\xi=s\partial_s$ and $\partial_\xi|z\rangle=\beta_N|r\rangle$
turn Eq.~\eqref{eq:stair-dphi} into
\begin{align}
 \frac{\sqrt{1+\|z\|^2}}{\beta_N}\partial_\xi|\phi_A\rangle
 &=-(\mathbb I-|\phi_A\rangle\langle\phi_A|)|r\rangle.
 \label{eq:stair-tangent}
\end{align}
For any nonzero vectors $|\psi_1\rangle$ and $|\psi_2\rangle$,
$\|\psi_1/\|\psi_1\|-\psi_2/\|\psi_2\|\|\le2\|\psi_1-\psi_2\|/\|\psi_2\|$.
Take $|\psi_1\rangle$ to be the vector in Eq.~\eqref{eq:stair-tangent} and
$|\psi_2\rangle=-|f\rangle$. Then $|\psi_1\rangle$ and $|\psi_2\rangle$ are normalized
to $|u_A\rangle$ and $|e\rangle$, respectively, and $\|\psi_2\|=\|f\|$. We also have
\begin{align*}
 |\psi_1\rangle-|\psi_2\rangle
 =(\mathbb I-|\phi_A\rangle\langle\phi_A|)(|f\rangle-|r\rangle)
 +|\phi_A\rangle\langle\phi_A|f\rangle.
\end{align*}
Since $f_0=0$ and $\|z\|\le b$ by Eq.~\eqref{eq:stair-ybounds},
$|\langle\phi_A|f\rangle|=\|f\|\,|\langle z|e\rangle|/\sqrt{1+\|z\|^2}\le b\|f\|$.
Hence $\|\psi_1-\psi_2\|\le\|r-f\|+|\langle\phi_A|f\rangle|\le(3b^2+b)\|f\|$, and
Eq.~\eqref{eq:stair-unitcompare} follows.
\end{proof}

\begin{lem}\label{lem:stair-fourshift}
Let $N$ be a positive integer and let $|f\rangle$ and $|e\rangle$ be as in
Lemma~\ref{lem:stair-unitcompare}. Then $2/9\le\|f(a_k)\|\le\sqrt3/2$ for every
integer $1\le k\le N$. For every integer $1\le k\le N-4$,
\begin{align}
 \|e(a_k)-e(a_{k+4})\|
 \ge\frac29-\frac9{64}>\frac1{30}.
 \label{eq:stair-fourshift}
\end{align}
\end{lem}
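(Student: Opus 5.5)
The approach is to reduce everything to one scalar function. At $s=a_k$ the coordinates of $|f\rangle$ depend only on the ratio $t=a_j/a_k=2^{k-j}$:
\begin{align*}
 f_j(a_k)=\frac{2a_ja_k}{(a_k+2a_j)^2}=g(2^{k-j}),\qquad g(t)=\frac{2t}{(1+2t)^2}.
\end{align*}
So $\|f(a_k)\|^2=\sum_{j=1}^N g(2^{k-j})^2$, a partial sum of the bi-infinite series $\sum_{m\in\mathbb Z}g(2^m)^2$. The function $g$ satisfies $g(t)\le1/4$, with the maximum at $t=1/2$. It also satisfies $g(t)\le2t$ and $g(t)\le1/(2t)$. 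Both norm bounds and the four-shift bound follow from these facts.

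For the lower bound on $\|f(a_k)\|$, I keep only the term $j=k$, where $t=1$. It gives $g(1)=2/9$, so $\|f(a_k)\|\ge2/9$.

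For the upper bound, I enlarge the sum to all $m\in\mathbb Z$.
\begin{itemize}
\item I evaluate the few central terms exactly: $g(1)=2/9$, $g(1/2)=1/4$, $g(1/4)=2/9$, $g(2)=4/25$, $g(4)=8/81$, $g(1/8)=4/25$.
\item I bound the remaining tails by geometric series. For $m\ge3$ I use $g(2^m)\le2^{-m-1}$. For $m\le-4$ I use $g(2^m)\le2^{m+1}$.
\end{itemize}
The squared central terms sum to roughly $0.22$, and the squared tails add only a few hundredths. The total is then comfortably below $3/4$, which gives $\|f(a_k)\|\le\sqrt3/2$. The only care needed here is writing the tail sums in closed form, which is routine.

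For Eq.~\eqref{eq:stair-fourshift}, I bound the distance of the two unit vectors below by the size of their difference in the single coordinate $k$:
\begin{align*}
 \|e(a_k)-e(a_{k+4})\|\ge\frac{f_k(a_k)}{\|f(a_k)\|}-\frac{f_k(a_{k+4})}{\|f(a_{k+4})\|}.
\end{align*}
This uses that all $f_j\ge0$, so both vectors have coordinates of one sign.
\begin{itemize}
\item In the first term, $f_k(a_k)=2/9$ and $\|f(a_k)\|\le\sqrt3/2<1$. The first term is therefore at least $2/9$.
\item In the second term, the ratio is $t=a_k/a_{k+4}=16$, so $f_k(a_{k+4})=g(16)=32/1089$. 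Combined with $\|f(a_{k+4})\|\ge2/9$, which needs $k+4\le N$, the second term is at most $(32/1089)(9/2)=16/121<9/64$.
\end{itemize}
This yields the bound $2/9-9/64>1/30$.

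I expect the main obstacle to be bookkeeping rather than ideas: keeping the tail estimate for the norm upper bound explicit and clean. I would present it as exact central terms plus two geometric tails.
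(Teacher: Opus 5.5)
Your proof is correct and follows the paper's argument. You get $\|f(a_k)\|\ge f_k(a_k)=2/9$ from the single coordinate $j=k$, the upper bound by extending the sum over $j$ to all integer shifts, and the four-shift bound from coordinate $k$ alone using $\|f(a_k)\|<1$ and $\|f(a_{k+4})\|\ge f_{k+4}(a_{k+4})=2/9$.

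The only differences are bookkeeping. The paper avoids your exact central terms by writing $\|f\|^2\le\max_j f_j\sum_j f_j\le\tfrac14\cdot 3$ with $f_j(a_k)\le\min\{1/4,2^{-|k-j+1|}\}$. It also uses $f_k(a_{k+4})\le 2^{-5}$ where you use the exact value $32/1089$.
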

\begin{proof}
With $\rho=2a_j/a_k=2^{k-j+1}$ we have
$f_j(a_k)=\rho/(1+\rho)^2$. Then $f_j(a_k)\le1/4$ because $(1+\rho)^2\ge4\rho$, and
$f_j(a_k)\le2^{-|k-j+1|}$ because $(1+\rho)^2\ge\max\{1,\rho\}^2$. These give
\begin{align*}
 \sum_{j=1}^N f_j(a_k)&\le\sum_{m\in\mathbb Z}2^{-|m|}=3,\\
 \|f(a_k)\|^2&\le\tfrac14\sum_{j=1}^N f_j(a_k)\le\tfrac34,\\
 f_k(a_{k+4})&\le2^{-5}=\frac1{32}.
\end{align*}
Thus coordinate $k$ of $|f\rangle/\|f\|$ is at least $2/9$ at $a_k$ and at most
$9/64$ at $a_{k+4}$. The first bound holds since $f_k(a_k)=2/9$ and
$\|f(a_k)\|<1$, and the second bound comes from $f_k(a_{k+4})\le1/32$ and
$\|f(a_{k+4})\|\ge f_{k+4}(a_{k+4})=2/9$.
Since $\|e(a_k)-e(a_{k+4})\|$ is at least the absolute difference of the $k$-th
coordinates, Eq.~\eqref{eq:stair-fourshift} follows.
\end{proof}

\begin{lem}\label{lem:stair-productcompare}
Let $N\ge2^{24}$ be an integer, and take $|\phi_A\rangle$ from
Lemma~\ref{lem:stair-state} and $|e\rangle$ from Lemma~\ref{lem:stair-unitcompare}.
Choose the ground state $|\phi_B\rangle$ of $B_N$ in Eq.~\eqref{eq:stair-H} to be real and normalized with
positive second coordinate. Then for every integer $1\le k\le\lfloor N/2\rfloor$ the
unit tangent $u:=\partial_s|\Phi_0\rangle/v_P$ of the ground state
$|\Phi_0\rangle=|\phi_A\rangle\otimes|\phi_B\rangle$ of $H_N$ satisfies
\begin{align}
 \|u(a_k)-e(a_k)\otimes\phi_B^\infty\|<7b,\qquad
 |\phi_B^\infty\rangle:=(0,1)^{\mathsf T}.
 \label{eq:stair-productcompare}
\end{align}
\end{lem}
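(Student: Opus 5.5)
The plan is to write the unit tangent of the product state explicitly and treat the $B_N$ contribution as a negligible perturbation at $s=a_k$ with $k\le\lfloor N/2\rfloor$. By the product rule, as in Lemma~\ref{lem:tensorsum},
\begin{align*}
 u=\frac{\partial_s\phi_A\otimes\phi_B+\phi_A\otimes\partial_s\phi_B}{v_P},\qquad
 v_P^2=\|\partial_s\phi_A\|^2+\|\partial_s\phi_B\|^2 .
\end{align*}
The two terms are orthogonal. Set $\varepsilon=\|\partial_s\phi_B\|/\|\partial_s\phi_A\|$. Then
\begin{align*}
 \Big\|u-\frac{\partial_s\phi_A}{\|\partial_s\phi_A\|}\otimes\phi_B\Big\|\le 2\varepsilon,
\end{align*}
one $\varepsilon$ from the dropped term and one from the normalization $\sqrt{1+\varepsilon^2}-1\le\varepsilon$. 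For $s>0$ we have $\partial_\xi=s\partial_s$, so $\partial_s\phi_A/\|\partial_s\phi_A\|$ equals the $u_A$ of Lemma~\ref{lem:stair-unitcompare}.

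The triangle inequality then splits the target into three pieces:
\begin{align*}
 \|u-e\otimes\phi_B^\infty\|
 \le 2\varepsilon+\|u_A-e\|+\|\phi_B-\phi_B^\infty\| .
\end{align*}
The middle piece is below $6b$ at $s=a_k$ by Eq.~\eqref{eq:stair-unitcompare}. It remains to show that the other two pieces are much smaller than $b$ when $N\ge2^{24}$.

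For $B_N$ I will use the explicit $2\times2$ form. Write $B_N=\tfrac12(\zeta_Ns\,\sigma_z+a_N\sigma_x)$. Its real ground state with positive second coordinate is $(-\sin(\theta/2),\cos(\theta/2))^{\mathsf T}$ with $\tan\theta=a_N/(\zeta_Ns)$. This gives
\begin{align*}
 \|\phi_B-\phi_B^\infty\|\le\frac{\theta}{2}\le\frac{a_N}{2\zeta_Na_k},\qquad
 \|\partial_s\phi_B\|=\frac{|\theta'|}{2}=\frac{a_N\zeta_N}{2(a_N^2+\zeta_N^2s^2)}\le\frac{a_N}{2\zeta_N a_k^2}.
\end{align*}
Since $k\le N/2$, we have $a_k\ge2^{-N/2}$, so $a_N/a_k\le2^{-N/2}$.

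For $A_N$ I will lower-bound the speed via Eq.~\eqref{eq:stair-tangent}. By Eq.~\eqref{eq:stair-proj} applied to $|r\rangle$ (note $r_0=0$), and then Lemma~\ref{lem:stair-unitcompare} and Lemma~\ref{lem:stair-fourshift},
\begin{align*}
 \|\partial_\xi\phi_A\|\ge\frac{\beta_N\|r\|}{1+b^2}\ge\frac{\beta_N(1-3b^2)\|f\|}{1+b^2}\ge\frac{\beta_N}{5}\quad\text{at } s=a_k .
\end{align*}
Hence $\|\partial_s\phi_A(a_k)\|\ge\beta_N/(5a_k)$. Combining the two estimates,
\begin{align*}
 \varepsilon\le\frac{5a_N}{2\zeta_N\beta_Na_k}\le\frac{5\sqrt N\,2^{-N/2}}{2\zeta_N b}.
\end{align*}

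The main point to check is that these quantities are really below $b$ with room to spare. For $N\ge2^{24}$, the factor $\sqrt N\,2^{-N/2}$ is astronomically smaller than $b^2=2^{-40}$, and $\zeta_N\ge\sqrt3/2$. So $2\varepsilon+\|\phi_B-\phi_B^\infty\|<b$. Adding the $6b$ from Eq.~\eqref{eq:stair-unitcompare} gives the strict bound $<7b$ in Eq.~\eqref{eq:stair-productcompare}.

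The only delicate bookkeeping is the sign and gauge conventions. First, $u$ must be the tangent of the real product state; this holds because both factors are real normalized, so no phase correction enters. Second, the sign of $|\phi_B\rangle$ is fixed by its positive second coordinate, which makes it close to $+\phi_B^\infty$ rather than to $-\phi_B^\infty$.
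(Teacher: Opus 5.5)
Your proposal is correct and follows essentially the same route as the paper. You bound $\|u-u_A\otimes\phi_B\|$ by twice the ratio of the $B_N$ and $A_N$ speeds, and you control $\|\phi_B-\phi_B^\infty\|$ and the $B_N$ speed through the explicit angle $\theta$. You then lower-bound the $A_N$ speed by $\beta_N/5$ via Eqs.~\eqref{eq:stair-tangent}, \eqref{eq:stair-proj} and $\|f(a_k)\|\ge2/9$, and add the $6b$ of Eq.~\eqref{eq:stair-unitcompare}. The only difference is cosmetic: you work with $\partial_s$ instead of $\partial_\xi=s\partial_s$, which leaves the ratio $\varepsilon=w_B/w_A$ unchanged.
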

\begin{proof}
We use $\xi$ and $w_A$ and the vectors $|r\rangle$, $|f\rangle$ and $|u_A\rangle$ of Lemma~\ref{lem:stair-unitcompare}.
With $\theta=\arctan(a_N/(\zeta_Ns))$ the ground state of $B_N$ is
$|\phi_B\rangle=(-\sin\tfrac\theta2,\cos\tfrac\theta2)^{\mathsf T}$. Write
$w_B=\|\partial_\xi\phi_B\|$ and $|u_B\rangle=\partial_\xi|\phi_B\rangle/w_B$
where $w_B>0$.
Since $\partial_\xi=s\partial_s$ with $s>0$, $u$ is also the normalized
$\partial_\xi|\Phi_0\rangle$. The product rule and Eq.~\eqref{eq:tensorsum-speed} give
\begin{align*}
 u=\frac{w_A\,|u_A\rangle\otimes|\phi_B\rangle+w_B\,|\phi_A\rangle\otimes|u_B\rangle}
        {\sqrt{w_A^2+w_B^2}}.
\end{align*}
The real normalized product satisfies the gauge condition of
Sec.~\ref{sec:curvature}.
Here $w_B=\tfrac12\sin\theta\cos\theta$.
Then, $\|\phi_B-\phi_B^\infty\|=2\sin(\theta/4)$
and
\begin{align*}
 \max\{w_B,\|\phi_B-\phi_B^\infty\|\}
 \le\frac{a_N}{2\zeta_Ns}.
\end{align*}
Next, consider $w_A$.
Since $z_0=0$ gives $r_0=0$, Eq.~\eqref{eq:stair-tangent}, Eq.~\eqref{eq:stair-proj}
with $|x\rangle=|r\rangle$ and Eq.~\eqref{eq:stair-ybounds} give $w_A\ge\beta_N\|r\|/(1+b^2)$.
With $\|r\|\ge(1-3b^2)\|f\|$ from Lemma~\ref{lem:stair-unitcompare}, we have
\begin{align*}
 w_A(a_k)\ge\frac{\beta_N(1-3b^2)}{1+b^2}\|f(a_k)\|
 >\frac{\beta_N}{5}.
\end{align*}
The last inequality comes from $\|f(a_k)\|\ge2/9$.

At $s=a_k$ with $k\le\lfloor N/2\rfloor$, we have $a_k\ge2^{-N/2}$, so $\zeta_N>1/2$
bounds $w_B$ and $\|\phi_B-\phi_B^\infty\|$ by $2^{-N/2}$. Apply
$\|\psi_1/\|\psi_1\|-\psi_2/\|\psi_2\|\|\le2\|\psi_1-\psi_2\|/\|\psi_2\|$ with
$|\psi_1\rangle=w_A\,|u_A\rangle\otimes|\phi_B\rangle+w_B\,|\phi_A\rangle\otimes|u_B\rangle$ and
$|\psi_2\rangle=w_A\,|u_A\rangle\otimes|\phi_B\rangle$. This gives
$\|u-u_A\otimes\phi_B\|\le2w_B/w_A$. By the triangle inequality,
\begin{align*}
 \|u-u_A\otimes\phi_B^\infty\|
 &\le\frac{2w_B}{w_A}+\|\phi_B-\phi_B^\infty\|\\
 &\le\left(\frac{10\sqrt N}{b}+1\right)2^{-N/2}<b.
\end{align*}
For $N\ge 2^{24}$, $(10\sqrt{N}/b+1) 2^{-N/2}<b$.
Lemma~\ref{lem:stair-unitcompare} and the triangle inequality now give
Eq.~\eqref{eq:stair-productcompare}.
\end{proof}

\begin{thm}\label{thm:stair-curvature}
Let $N\ge2^{24}$ be an integer and let $H_N(s)$ be defined by
Eq.~\eqref{eq:stair-H} for $s\in[0,1]$.
Let $K_P$ be its total curvature. Then
\begin{align}
 \frac{N}{1600}\leq K_P<8(N+2).
 \label{eq:stair-LK}
\end{align}
Along the family $\{H_N\}$,
$K_P=\Theta(\log\Dstar^{-1})$ as $N\to\infty$.
\end{thm}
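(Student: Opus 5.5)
The upper bound comes straight from Theorem~\ref{thm:K}. By Lemma~\ref{lem:stair-gap} we have $\Omega<4$, and Eq.~\eqref{eq:stair-gapint} gives $\int_0^1 ds/\Delta(s)\le N+2$. Inserting both into the second inequality of Eq.~\eqref{eq:KP} yields $K_P<2\cdot4\,(N+2)=8(N+2)$. This is legitimate because $H_N$ is affine and its ground state moves ($v_P>0$ by Lemma~\ref{lem:neverstops}), so the unit tangent $u$ is defined on all of $[0,1]$.

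For the lower bound I will use that, at $g=1$, $K_P=\int_0^1\|Qu'\|\,ds$ is the length of the curve traced by the unit tangent $u(s)$, projected onto the complement of the ground state. The first step is to compare this with total displacements of $u$. We have $u'=Qu'+|\Phi_0\rangle\langle\Phi_0|u'\rangle$. From $\langle\Phi_0|u\rangle=0$ we get $\langle\Phi_0|u'\rangle=-\langle\Phi_0'|u\rangle=-v_P$, since $\Phi_0'=v_Pu$. Hence $\|u'\|\le\|Qu'\|+v_P$, and therefore
\begin{align*}
\sum_i\|u(t_{i+1})-u(t_i)\|\le\int_0^1\|u'\|\,ds\le K_P+L_P
\end{align*}
for any increasing sample points $t_i$. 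Since $L_P<\sqrt{2(N+2)}=o(N)$ by Theorem~\ref{thm:stair-length}, it suffices to show that the sum of the sampled displacements is at least a constant times $N$.

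For the sample points I will take $a_k$ with $k$ running over $\{1,5,9,\dots\}\cap[1,\lfloor N/2\rfloor-4]$, which gives about $N/8$ points. For consecutive samples $a_k$ and $a_{k+4}$, Lemma~\ref{lem:stair-productcompare} places $u$ within $7b$ of $e\otimes\phi_B^\infty$ at both points. Lemma~\ref{lem:stair-fourshift} gives $\|e(a_k)-e(a_{k+4})\|>1/30$. The triangle inequality then yields
\begin{align*}
\|u(a_k)-u(a_{k+4})\|>\tfrac1{30}-14b,
\end{align*}
which is essentially $1/30$ because $b=2^{-20}$. Summing over the roughly $N/8$ disjoint pairs gives
\begin{align*}
K_P\ge\tfrac{N}{8}\bigl(\tfrac1{30}-14b\bigr)-O(1)-\sqrt{2(N+2)}.
\end{align*}
For $N\ge2^{24}$ the square-root term and the boundary losses are negligible, so this comfortably exceeds $N/1600$. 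The main bookkeeping risk is the counting and the subtraction of $L_P$, and the large slack in $1/1600$ is there to absorb it. Finally, since $\Dstar=a_N=2^{-N}$, both bounds are linear in $N=\log_2\Dstar^{-1}$, which gives $K_P=\Theta(\log\Dstar^{-1})$.
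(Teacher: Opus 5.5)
Your proposal is correct and follows essentially the same route as the paper. The upper bound comes from Theorem~\ref{thm:K} with $\Omega<4$ and Eq.~\eqref{eq:stair-gapint}. The lower bound comes from $u'=Qu'-v_P|\Phi_0\rangle$, the sample points $a_k$ with $k=1,5,9,\ldots$ and $k+4\le\lfloor N/2\rfloor$, Lemmas~\ref{lem:stair-fourshift} and~\ref{lem:stair-productcompare}, and subtraction of $L_P<\sqrt{2(N+2)}$. The only difference is cosmetic: the paper makes your ``$-O(1)$'' explicit by counting at least $N/8-2$ intervals and rounding $1/30-14b$ down to $1/100$.
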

\begin{proof}
We use $u$ and $\phi_B^\infty$ of Lemma~\ref{lem:stair-productcompare} and $e$ of
Lemma~\ref{lem:stair-unitcompare}. Theorem~\ref{thm:stair-length} gives the bounds
for $L_P$, and $L_P>0$ gives the ground state of $H_N$ a positive speed on $[0,1]$
by Lemma~\ref{lem:neverstops}.
Lemmas~\ref{lem:stair-fourshift} and~\ref{lem:stair-productcompare} give
for $1\le k$ and $k+4\le\lfloor N/2\rfloor$
\begin{align*}
 \|u(a_k)-u(a_{k+4})\|>\frac1{30}-14b>\frac1{100}.
\end{align*}
The intervals $[a_{k+4},a_k]$ with $k=1,5,9,\ldots$ and
$k+4\le\lfloor N/2\rfloor$ have disjoint interiors.
There are at least $N/8-2$ such intervals. Summing $\|u(a_k)-u(a_{k+4})\|$
over them,
\begin{align}
 \int_0^1\|u'\|\,ds\ge\frac{N}{800}-\frac1{50}.
 \label{eq:stair-turnvar}
\end{align}

Moreover, with $Q=\mathbb I-|\Phi_0\rangle\langle\Phi_0|$, differentiation of
$\langle\Phi_0|u\rangle=0$ gives $u'=Qu'-v_P|\Phi_0\rangle$.
Hence Eq.~\eqref{eq:KPdef} and Eq.~\eqref{eq:stair-turnvar} imply
\begin{align}
 K_P&\ge\int_0^1\|u'\|\,ds-L_P\nonumber\\
 &>\frac{N}{800}-\frac1{50}-\sqrt{2(N+2)}
 \ge\frac{N}{1600}.
 \nonumber
\end{align}
The last inequality comes from the fact that $\sqrt{2(N+2)}<N/2000$ and $1/50\le N/8000$ for $N\ge2^{24}$.
Then Theorem~\ref{thm:K} with Lemma~\ref{lem:stair-gap} and
Eq.~\eqref{eq:stair-gapint} gives $K_P\le2\Omega(N+2)<8(N+2)$, which completes
Eq.~\eqref{eq:stair-LK}. With $\Dstar=2^{-N}$, Eq.~\eqref{eq:stair-LK} gives
$K_P=\Theta(\log\Dstar^{-1})$.
\end{proof}

\subsection{\texorpdfstring{Width class $p\ge2$}{Width class p >= 2}}

Let $p\ge2$ be an integer.
Define

\begin{align}
 \beta_p&=\frac{(2p-3)!!}{2^p},\qquad
 \gamma_p=\frac{\beta_p}{(p+1)^{p-1/2}},\nonumber\\
 \eta_p&=\min\left\{\frac14,(2\beta_p)^{-1/(2p)}\right\},\qquad
 c_p=\sqrt{\frac{\eta_p}{288}}.
 \label{eq:fixedp-data}
\end{align}
For an integer $N\ge3$ let $\delta=[\eta_p/(288N^2)]^{p/(p-1)}$,
$a=\delta^{1/(2p-1)}$, and $x=2s-1$.  With $\sigma_x,\sigma_z$ the Pauli
matrices, define one two-level factor for each $1\le j\le p$ and collect them
by the parity of $j$ into an odd and an even branch,
\begin{align}
 V_j(x)&=\binom pj(a\sigma_z+\sqrt j\,x\sigma_x),\nonumber\\
 B_{p,N}(x)&=\mathop{\boxplus}_{\substack{1\le j\le p\\j\ \mathrm{odd}}}V_j(x)
 \ \oplus\
 \left[(\delta-a)\mathbb I+
 \mathop{\boxplus}_{\substack{1\le j\le p\\j\ \mathrm{even}}}V_j(x)\right].
 \nonumber
\end{align}
We set $\epsilon=8\delta$ and
\begin{align}
 \lambda_1&=\lambda_N=2,\qquad
 \lambda_{\rm int}=\frac1{N-2},\nonumber\\
 \lambda_j&=\lambda_{\rm int}\quad(2\leq j\leq N-1),\nonumber\\
 m_0&=\frac52,\qquad m_j=m_{j-1}-\lambda_j,\nonumber\\
 t_j&=\left(j-\frac{N+1}{2}\right)\frac{32\epsilon}{\lambda_{\rm int}},\nonumber\\
 q_0&=0,\qquad q_j=q_{j-1}+\lambda_jt_j,\nonumber\\
 A_{p,N}(x)&=\operatorname{diag}(q_j+m_jx)_{j=0}^N\nonumber\\
 &\quad+\epsilon\sum_{j=1}^N
 (|j-1\rangle\langle j|+|j\rangle\langle j-1|),\nonumber\\
 H_{p,N}(s)&=A_{p,N}(2s-1)\boxplus B_{p,N}(2s-1).
 \label{eq:fixedp-H}
\end{align}
As with $A_N$ and $B_N$ in Eq.~\eqref{eq:stair-H}, the growth of the length $L_P$ comes from $A_{p,N}$ and the
gap from $B_{p,N}$. The ground energies of the two branches of $B_{p,N}$
differ by $\delta$ plus an alternating binomial sum over $j$, a $p$-th finite
difference that cancels every power of $|x|$ below $|x|^{2p}$. The gap
therefore rises from its minimum $\delta$ at $x=0$ as about
$\delta+\beta_p|x|^{2p}/\delta$, and it stays of order $\delta$ for $|x|$ up to
about $\delta^{1/p}$. In $A_{p,N}$ the diagonal slopes $m_j$ decrease by
$\lambda_j$, so consecutive diagonal entries $q_{j-1}+m_{j-1}x$ and $q_j+m_jx$
cross once each at $x=t_j$. The coupling $\epsilon$ opens each crossing into an
avoided crossing. Across the crossing at $t_j$ the ground state of $A_{p,N}$ turns
from near the coordinate vector $|j-1\rangle$ to near $|j\rangle$. These two
vectors are orthogonal, so the ground state travels a distance of order one at
each crossing. By Lemma~\ref{lem:tensorsum} the gap of $H_{p,N}$ is the smaller of
the gaps of $A_{p,N}$ and $B_{p,N}$. The gap of $A_{p,N}$ is at least
$\epsilon=8\delta$ and stays of order $\epsilon$ over the interval that holds the
crossings. Beyond the outermost avoided crossings it grows
faster than the gap of $B_{p,N}$. Consecutive crossings are a distance of order $\delta N$
apart, so the $N$ of them occupy an interval of length of order $\delta N^2$. The
definition of $\delta$ makes this length of order $\delta^{1/p}$. The whole interval
then lies where the gap of $B_{p,N}$ is still about $\delta$, below $\epsilon$, so
$B_{p,N}$ sets the gap of $H_{p,N}$. The path length then adds up to
about $N$, and $\delta N^2\propto\delta^{1/p}$ gives $N\propto\delta^{-(p-1)/(2p)}$,
that is $L_P=\Theta(\Dstar^{-(p-1)/(2p)})$.

\begin{lem}\label{lem:fixedp-profile}
Let $p\ge2$ and $N\ge3$ be integers. For $z\ge0$ let
\begin{align*}
 h_p(z)=\sum_{j=0}^p(-1)^{j+1}\binom pj\sqrt{1+jz^2}.
\end{align*}
Then
\begin{align}
 \gamma_pz^{2p}&\le h_p(z)\le\beta_pz^{2p}
 &&(0\le z\le1),\nonumber\\
 \gamma_pz&<h_p(z) &&(z\ge1),\nonumber\\
 0&<h_p(z)<z &&(z>0).
 \label{eq:fixedp-profile}
\end{align}
Moreover, for $|x|\le1$ the factor $B_{p,N}(x)$ has a nondegenerate ground state, and its gap
$\Delta_B(x)$ obeys
\begin{align}
 \delta\le\Delta_B(x)
 =\delta+ah_p(|x|/a)\le\delta+|x|<2
 \qquad(|x|\le1).
 \label{eq:fixedp-gap}
\end{align}
\end{lem}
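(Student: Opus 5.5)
The plan is to treat the two parts separately. I get the bounds on $h_p$ from integral representations of finite differences. I then compute the spectrum of $B_{p,N}$ exactly from Lemma~\ref{lem:tensorsum}, because every factor is an explicit $2\times2$ matrix.

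\emph{Bounds on $h_p$.} Fix $z\ge0$ and set $g(t)=\sqrt{1+tz^2}$. Since $(-1)^{j+1}=(-1)^{p+1}(-1)^{p-j}$, we have $h_p(z)=(-1)^{p+1}\Delta^p g(0)$, where $\Delta$ is the forward difference with unit step. I use the Peano (B-spline) representation $\Delta^p g(0)=\int_0^p M_p(t)\,g^{(p)}(t)\,dt$, where $M_p\ge0$ is the cardinal B-spline, with $\int M_p=1$ and $M_p>0$ on $(0,p)$. Direct differentiation gives
\begin{align*}
g^{(p)}(t)=(-1)^{p-1}\beta_p\,z^{2p}(1+tz^2)^{\frac12-p}.
\end{align*}
Hence
\begin{align*}
h_p(z)=\beta_p z^{2p}\int_0^pM_p(t)(1+tz^2)^{\frac12-p}dt .
\end{align*}
\begin{itemize}
\item The integrand is at most $1$, which gives $h_p\le\beta_pz^{2p}$.
\item It is strictly above $(1+pz^2)^{1/2-p}$ on $(0,p)$ when $z>0$, which gives positivity.
\item For $z\le1$, the factor $(1+pz^2)^{1/2-p}\ge(p+1)^{1/2-p}$ gives the lower bound $\gamma_pz^{2p}$.
\item For $z\ge1$, I use $1+tz^2<(p+1)z^2$ for $t<p$. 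This yields $h_p>\beta_pz^{2p}/((p+1)z^2)^{p-1/2}=\gamma_pz$.
\end{itemize}

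For the bound $h_p<z$, I instead write $h_p=(-1)^{p-1}\Delta^{p-1}k(0)$ with
\begin{align*}
k(t)=\Delta g(t)=\frac{z^2}{g(t)+g(t+1)} .
\end{align*}
The function $k$ is completely monotone in $t$, being a Laplace-type composition of a positive, decreasing, convex profile. So each $(-1)^m\Delta^mk$ is nonnegative and nonincreasing. Then
\begin{align*}
(-1)^m\Delta^mk(0)=(-1)^{m-1}\big[\Delta^{m-1}k(0)-\Delta^{m-1}k(1)\big]\le(-1)^{m-1}\Delta^{m-1}k(0).
\end{align*}
Iterating gives $h_p\le k(0)=\sqrt{1+z^2}-1<z$. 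Verifying complete monotonicity of $k$ is the one step I expect to need care. If it is awkward, I would fall back on writing $k(t)=\int_0^1 \frac{z^2}{2\sqrt{1+(t+\theta)z^2}}\,d\theta$, which is an average of completely monotone functions.

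\emph{Gap of $B_{p,N}$.} The factor $V_j(x)$ has eigenvalues $\pm\binom pj\sqrt{a^2+jx^2}=\pm\binom pj a\sqrt{1+jz^2}$ with $z=|x|/a$. By Lemma~\ref{lem:tensorsum}, each branch has a nondegenerate ground state whose energy is the sum of the factor ground energies. Absorbing the shift $-a$ as the $j=0$ term, the even-branch ground energy minus the odd-branch ground energy equals
\begin{align*}
\delta+a\sum_{j=0}^p(-1)^{j+1}\binom pj\sqrt{1+jz^2}=\delta+a\,h_p(z)\ge\delta .
\end{align*}
So the ground state lies in the odd branch and is nondegenerate.

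The gap is the minimum of two quantities. The first is this interbranch difference. The second is the odd branch's own gap, which by Lemma~\ref{lem:tensorsum} is at least $2\min_j\binom pj\sqrt{a^2+jx^2}\ge2\max\{a,|x|\}\ge a+|x|$. Since $a=\delta^{1/(2p-1)}\ge\delta$ (because $\delta<1$) and $ah_p(z)<az=|x|$, the interbranch difference satisfies $\delta+ah_p<\delta+|x|\le a+|x|$. The minimum is therefore the interbranch difference, which gives $\Delta_B=\delta+ah_p(|x|/a)\in[\delta,\delta+|x|]$. Finally, $\delta+|x|<2$ holds because $\delta\ll1$ and $|x|\le1$.
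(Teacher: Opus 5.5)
Your proof is correct. Most of it follows the paper; only the bound $h_p<z$ uses a different argument.

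\textbf{Same as the paper.} The cardinal B-spline $M_p$ is the density of $y_1+\cdots+y_p$ for independent uniform $y_k\in[0,1]$. So your Peano representation is exactly the paper's cube integral $\int_{[0,1]^p}(1+(y_1+\cdots+y_p)z^2)^{1/2-p}\,dy$, and your bounds on the integrand are the same as the paper's. The gap computation for $B_{p,N}$ also matches: branch ground energies from Lemma~\ref{lem:tensorsum}, the interbranch difference $\delta+ah_p(z)$, and the comparison with intra-branch excitations of at least $2\sqrt{a^2+x^2}$.

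\textbf{Different: the bound $h_p<z$.} The paper inserts $\sqrt t=\frac1{2\sqrt\pi}\int_0^\infty u^{-3/2}(1-e^{-tu})\,du$ into each term. This collapses the alternating sum to $h_p(z)=\frac1{2\sqrt\pi}\int_0^\infty u^{-3/2}e^{-u}(1-e^{-z^2u})^p\,du$ and gives $0<h_p<z$ in one line. You stay within finite-difference calculus and telescope, obtaining $h_p\le h_{p-1}\le\cdots\le h_1=\sqrt{1+z^2}-1<z$. The two routes have the same strength: dropping the power in the paper's formula also gives $h_p\le\sqrt{1+z^2}-1$. Yours, however, needs a sign fact on higher differences rather than an explicit formula.

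\textbf{Justifying the sign fact.} Do not rely on the ``positive, decreasing, convex profile'' heuristic. Those properties do not imply complete monotonicity. Your fallback (averaging $(1+(t+\theta)z^2)^{-1/2}$ over $\theta$) is valid. It is simpler still to note that, with $\Delta$ your forward difference, the iteration only uses
$(-1)^{m-1}\Delta^{m-1}k(1)=(-1)^{m-1}\Delta^m g(1)=\int_0^m M_m(\tau)\,(-1)^{m-1}g^{(m)}(1+\tau)\,d\tau\ge0$ for $1\le m\le p-1$. This follows from the Peano formula you already have, since $(-1)^{m-1}g^{(m)}\ge0$.

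\textbf{Two small fixes.} At $x=0$ the inequality $ah_p(z)<az$ is an equality, but only $\le$ is needed. Also, derive $\delta<1$ (and hence $\delta<a$) from $\eta_p\le1/4$ and the definition of $\delta$, rather than asserting $\delta\ll1$.
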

\begin{proof}
Let $g(t)=\sqrt{1+tz^2}$. Since $g(t+1)-g(t)=\int_0^1g'(t+y)\,dy$, applying this
$p$ times gives
\begin{align*}
 \sum_{j=0}^p(-1)^{p-j}\binom pj\,g(j)
 =\int_{[0,1]^p}g^{(p)}(y_1+\cdots+y_p)\,dy.
\end{align*}
The left-hand side is $(-1)^{p+1}h_p(z)$. With
$g^{(p)}(t)=(-1)^{p+1}\beta_pz^{2p}(1+tz^2)^{1/2-p}$, this gives
\begin{align*}
 h_p(z)=\beta_pz^{2p}\int_{[0,1]^p}
 \bigl(1+(y_1+\cdots+y_p)z^2\bigr)^{1/2-p}\,dy.
\end{align*}
For $z>0$ the integrand decreases strictly in each $y_k$. It equals $1$ at
$y=(0,\dots,0)$ and $(1+pz^2)^{1/2-p}$ at $y=(1,\dots,1)$, and lies strictly between
these values elsewhere in $[0,1]^p$. Hence
$\beta_pz^{2p}(1+pz^2)^{1/2-p}<h_p(z)\le\beta_pz^{2p}$ for $z>0$. The right inequality
is the upper bound in Eq.~\eqref{eq:fixedp-profile}, and the left inequality gives
$h_p(z)>0$. With $1+pz^2\le(p+1)\max\{1,z^2\}$ the left inequality also gives the two lower
bounds in Eq.~\eqref{eq:fixedp-profile}.
The identity
\begin{align*}
 \sqrt t=\frac1{2\sqrt\pi}\int_0^\infty
 u^{-3/2}(1-e^{-tu})\,du
\end{align*}
follows from the substitution $u\to u/t$ and one integration by parts. Applied to each
term of $h_p$, it gives, for $z>0$,
\begin{align*}
 h_p(z)
 &=\frac1{2\sqrt\pi}\int_0^\infty
 u^{-3/2}e^{-u}(1-e^{-z^2u})^p\,du\\
 &<\frac1{2\sqrt\pi}\int_0^\infty
 u^{-3/2}(1-e^{-z^2u})\,du=z.
\end{align*}

For the gap calculation let $z=|x|/a$.
Each $V_j(x)$ has eigenvalues $\pm\binom pj\sqrt{a^2+jx^2}$.
Applying Lemma~\ref{lem:tensorsum} factor by factor gives each branch ground energy as the sum
of the lower eigenvalues of its factors. The
ground energy of the even branch therefore exceeds that of the odd branch by
\begin{align*}
 \delta-a+\sum_{j=1}^p(-1)^{j+1}\binom pj\sqrt{a^2+jx^2}=\delta+ah_p(z).
\end{align*}
Since $\eta_p\le1/4$ gives $\delta<1$, we have $0<\delta<a=\delta^{1/(2p-1)}<1$. Each branch has a unique ground state, and
every excitation within a branch costs at least $2\sqrt{a^2+x^2}>a+|x|>\delta+|x|$.
Since $h_p(0)=0$, Eq.~\eqref{eq:fixedp-profile} gives $0\le ah_p(z)\le|x|$. The even-branch ground
energy therefore lies above the odd-branch ground energy by $\delta+ah_p(z)$, which is
at least $\delta$ and at most $\delta+|x|$, and so below every excitation within a branch.
Hence the ground state of $B_{p,N}(x)$ is the unique ground state of the odd
branch, and $\Delta_B(x)=\delta+ah_p(|x|/a)$. With $\delta<1$ and $|x|\le1$ this
proves Eq.~\eqref{eq:fixedp-gap}.
\end{proof}

\begin{lem}\label{lem:fixedp-gap}
Let $p\ge2$ and $N\ge3$ be integers, and let $\Delta_A(x)$ and $\Delta_{p,N}(s)$
be the gaps of $A_{p,N}(x)$ and $H_{p,N}(s)$.
For $1\le j\le N$ let $J_j=[t_j-8\epsilon/\lambda_j,t_j+8\epsilon/\lambda_j]$ be the
window around the crossing at $t_j$, and let $r=\eta_p\delta^{1/p}$. The windows are
disjoint and contained in $(-4r/9,4r/9)\subset(-1,1)$.
$A_{p,N}(x)$ and $H_{p,N}(s)$ have nondegenerate ground states and satisfy
\begin{align}
 \Delta_A(x)&>\Delta_B(x)\qquad(|x|\le1),\nonumber\\
 \Delta_{p,N}(s)&=\Delta_B(2s-1)\ge\delta=\Dstar.
 \nonumber
\end{align}
The gap of $B_{p,N}$ also obeys
\begin{align}
 \Delta_B(x)\ge\gamma_p|x|^p\qquad(|x|\le1).
 \label{eq:fixedp-width}
\end{align}
In particular $\Delta_{p,N}(s)\ge2^p\gamma_p|s-1/2|^p$ for $s\in[0,1]$.
\end{lem}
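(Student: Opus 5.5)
The plan is to treat the combinatorial placement of the windows first, then compare the two gaps region by region, and finish with the tensor-sum reduction and the power-law lower bound.

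\emph{Placement of the windows.} The centers $t_j$ are equally spaced by $32\epsilon/\lambda_{\rm int}$. Interior windows have half-width $8\epsilon/\lambda_{\rm int}$, and the two end windows have the smaller half-width $4\epsilon$, so consecutive windows are disjoint. For the extent, $\max_j|t_j|\le\tfrac{N-1}{2}\cdot32\epsilon(N-2)$. Adding a half-width gives every window inside $|x|<16\epsilon N^2=128\delta N^2$. The definition of $\delta$ says $\delta^{(p-1)/p}=\eta_p/(288N^2)$, that is $\delta N^2=r/288$. Hence every window lies in $(-4r/9,4r/9)$, and $r\le\tfrac14<1$.

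\emph{Comparing $\Delta_A$ and $\Delta_B$.} Write $A_{p,N}=D+\epsilon T$ with $D$ diagonal. The diagonal entries $d_j=q_j+m_jx$ satisfy $d_j-d_{j-1}=\lambda_j(t_j-x)$. Hence for $t_j\le x\le t_{j+1}$ the lowest entry is $d_j$, and the diagonal gap is $\min\{\lambda_j(x-t_j),\lambda_{j+1}(t_{j+1}-x)\}$, with the other entries higher still. I split $|x|\le1$ into three regions.
\begin{itemize}
\item \emph{Inside $(-4r/9,4r/9)$ but outside the windows.} The diagonal gap is at least $8\epsilon$, and $\|\epsilon T\|\le2\epsilon$, so Weyl gives $\Delta_A\ge4\epsilon=32\delta$. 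Meanwhile Lemma~\ref{lem:fixedp-profile} gives $\Delta_B\le\delta+\beta_p|x|^{2p}/a^{2p-1}=\delta+\beta_p|x|^{2p}/\delta$. For $|x|\le r$ this is at most $\delta+\beta_p\eta_p^{2p}\delta\le\tfrac32\delta$, using $\eta_p\le(2\beta_p)^{-1/(2p)}$. So $\Delta_A>\Delta_B$ there.
\item \emph{Inside a window $J_j$.} This is the main obstacle, since a plain Weyl bound loses the whole $2\epsilon$ splitting. I would isolate the $2\times2$ block on $|j-1\rangle,|j\rangle$, whose gap is at least $2\epsilon$. The remaining sites have diagonal entries at least about $24\epsilon$ above this block, because the neighboring crossings are $32\epsilon/\lambda$ away and the window half-width is only $8\epsilon/\lambda$. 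A Schur-complement or second-order (Feshbach) estimate then bounds the level shifts from the two $\epsilon$ couplings to the outside by $O(\epsilon^2/(20\epsilon))\lesssim\epsilon/10$. This leaves $\Delta_A\ge\epsilon=8\delta>\tfrac32\delta\ge\Delta_B$ and a nondegenerate ground state.
\item \emph{For $|x|\ge4r/9$, beyond the outermost crossings.} The diagonal gap is $2(|x|-|t_{1\text{ or }N}|)$ because $\lambda_1=\lambda_N=2$, so $\Delta_A\ge2|x|-2\cdot\tfrac{4r}{9}-2\epsilon$ or similar. I compare this with $\Delta_B\le\delta+\min\{|x|,\beta_p|x|^{2p}/\delta\}$. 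The first form wins once $|x|$ is a constant multiple of $r$, and the small-gap bound covers the transition strip $|x|\lesssim r$. I expect this bookkeeping to need some care with the constants.
\end{itemize}

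\emph{Tensor-sum reduction.} With $\Delta_A>\Delta_B$ and both ground states nondegenerate, Lemma~\ref{lem:tensorsum} applied with $x=2s-1$ gives a nondegenerate ground state of $H_{p,N}$ and $\Delta_{p,N}(s)=\Delta_B(2s-1)$. Equation~\eqref{eq:fixedp-gap} gives the minimum $\delta$, attained at $x=0$.

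\emph{The bound \eqref{eq:fixedp-width}.} Set $z=|x|/a$.
\begin{itemize}
\item For $z\le1$, Eq.~\eqref{eq:fixedp-profile} and $a^{2p-1}=\delta$ give $\Delta_B\ge\delta+\gamma_p|x|^{2p}/\delta$. AM--GM bounds this below by $2\sqrt{\gamma_p}\,|x|^p$, which is at least $\gamma_p|x|^p$ since $\gamma_p\le1$.
\item For $z\ge1$, $\Delta_B>a\gamma_pz=\gamma_p|x|\ge\gamma_p|x|^p$ because $|x|\le1$.
\end{itemize}
Substituting $|x|=2|s-1/2|$ gives the final claim $\Delta_{p,N}(s)\ge2^p\gamma_p|s-1/2|^p$.
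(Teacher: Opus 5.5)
Your proposal is correct. Away from the windows it is the paper's argument: the same window placement via $\delta N^2=r/288$, the same bound $\Delta_B\le 3\delta/2$ for $|x|\le r$, the same linear-growth comparison using $\lambda_1=\lambda_N=2$, and the same AM--GM and $z\ge1$ cases for Eq.~\eqref{eq:fixedp-width}.

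The constants you were unsure about close exactly at $|x|=r$. Your outside-window bound applies to every $x$ outside the windows, so it also covers $4r/9\le|x|\le r$. For $|x|\ge r$ one gets $\Delta_A-\Delta_B\ge|x|-8r/9-2\epsilon-\delta\ge(32N^2-17)\delta>0$. Two small justifications are missing. Using $h_p(z)\le\beta_pz^{2p}$ requires $r<a$, which holds since $\eta_p<1$, $\delta<1$ and $1/p>1/(2p-1)$. And $\gamma_p<1$ follows from Eq.~\eqref{eq:fixedp-profile} at $z=1$.

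The one genuine difference is inside a window. There you assert that plain Weyl loses the whole $2\epsilon$ splitting, and you switch to a Schur-complement estimate. That is true for the split $D+\epsilon T$, but the paper splits instead as $(M\oplus C)+W$, with $M$ the $2\times2$ block on $|j-1\rangle,|j\rangle$. Then:
\begin{itemize}
\item the leftover $W$ is two disjoint edges, so $\|W\|=\epsilon$ rather than $2\epsilon$;
\item $\min\operatorname{spec}C\ge d_{(0)}+22\epsilon>\alpha_+$, so $\alpha_\pm$ are the two lowest eigenvalues of $M\oplus C$;
\item the variational bound $e_0\le\alpha_-$ costs nothing.
\end{itemize}
Weyl then gives $e_1-e_0\ge(\alpha_+-\epsilon)-\alpha_-\ge\epsilon$ directly.

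Your Feshbach route also works and is sharper: both levels move down by at most about $\epsilon/13$, so the gap stays near $2\epsilon$. To make it a proof, though, you need the standard monotonicity argument identifying the two lowest eigenvalues of $A_{p,N}$ with the solutions of $\lambda\in\operatorname{spec}\bigl(M-R(C-\lambda)^{-1}R^\dagger\bigr)$, where $R$ is the off-diagonal block. You also need a check that no other eigenvalue lies below them. Since $\epsilon=8\delta$ already comfortably exceeds $\Delta_B\le3\delta/2$, the extra sharpness buys nothing here, and the block-Weyl argument is the more economical choice.
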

\begin{proof}
Write $d_k(x)=q_k+m_kx$ for the diagonal entries of $A_{p,N}(x)$, $0\le k\le N$.
For $1\le k\le N$, Eqs.~\eqref{eq:fixedp-data} and~\eqref{eq:fixedp-H} and the
definitions of $\delta$, $a$ and $r$ give
\begin{align*}
 &d_k-d_{k-1}=\lambda_k(t_k-x),\qquad
 \lambda_{\rm int}\le\lambda_k\le2=\lambda_1=\lambda_N,\\
 &t_k-t_{k-1}=\frac{32\epsilon}{\lambda_{\rm int}}\quad(k\ge2),\\
 &\max_j|t_j|=16\epsilon(N-1)(N-2),\qquad\epsilon=8\delta,\\
 &\eta_p\le\tfrac14,\qquad\delta<1,\qquad r=288\delta N^2=36\epsilon N^2,\\
 &a^{2p-1}=\delta,\qquad\beta_p\eta_p^{2p}\le\tfrac12,\qquad r=\eta_p\delta^{1/p}<a.
\end{align*}
Since $2\cdot8\epsilon/\lambda_{\rm int}<32\epsilon/\lambda_{\rm int}$, neighbouring
windows are disjoint, and
\begin{align*}
 |x|&\le16\epsilon(N-1)(N-2)+8\epsilon(N-2)\\
 &<16\epsilon N^2=\frac{4r}9\qquad\Big(x\in\bigcup_jJ_j\Big),
\end{align*}
and $4r/9<1$ because $r=\eta_p\delta^{1/p}<1/4$.

Let $x\in J_j$. Then $t_i-x\ge24\epsilon/\lambda_{\rm int}$ for $i\ge j+1$ and
$x-t_i\ge24\epsilon/\lambda_{\rm int}$ for $i\le j-1$. Summing the diagonal
differences and using $\lambda_i\ge\lambda_{\rm int}$ gives, for every $\ell\ge1$
for which the index on the left lies in $0,\dots,N$,
\begin{equation}
 \begin{split}
 d_{j+\ell}-d_j&=\sum_{i=j+1}^{j+\ell}\lambda_i(t_i-x)\ge24\epsilon,\\
 d_{j-1-\ell}-d_{j-1}&=\sum_{i=j-\ell}^{j-1}\lambda_i(x-t_i)\ge24\epsilon.
 \end{split}
 \label{eq:fixedp-separation}
\end{equation}

Let $d_{(0)}\le d_{(1)}$ be the two smallest diagonal entries of $A_{p,N}(x)$ and
$e_0\le e_1$ its two lowest eigenvalues. Write $A_{p,N}(x)=D+O$ with $D$ its
diagonal part and $O$ its off-diagonal part. Each row of $O$ has at most two nonzero
entries, both equal to $\epsilon$, so $\|O\|\le2\epsilon$.
With $|c\rangle$ the coordinate vector of $d_{(0)}$ and Weyl's inequality for $D+O$,
$e_0\le\langle c|A_{p,N}|c\rangle=d_{(0)}$ and $e_1\ge d_{(1)}-\|O\|$. The $d_k$
decrease in $k$ while $t_k<x$ and increase afterwards, so
$\{d_{(0)},d_{(1)}\}=\{d_{k-1},d_k\}$ for some $k$ and
\begin{align}
 \Delta_A\ge d_{(1)}-d_{(0)}-2\epsilon=\lambda_k|t_k-x|-2\epsilon.
 \label{eq:fixedp-DeltaA}
\end{align}
For $x\notin J_k$ the definition of $J_k$ gives $\lambda_k|t_k-x|>8\epsilon$, so
$\Delta_A\ge6\epsilon$ outside the windows.
For $x\in J_j$ split $A_{p,N}=(M\oplus C)+W$ with $M$ the block on coordinates
$j-1$ and $j$ and $C$ the block on the other coordinates. Explicitly,
\begin{align*}
 M=\begin{pmatrix}d_{j-1}&\epsilon\\ \epsilon&d_j\end{pmatrix}.
\end{align*}
The couplings between the two blocks are
\begin{align*}
 W=\epsilon\bigl(|j-2\rangle\langle j-1|+|j\rangle\langle j+1|+\mathrm{h.c.}\bigr)
\end{align*}
for $2\le j\le N-1$, while $W=\epsilon(|1\rangle\langle2|+\mathrm{h.c.})$ for $j=1$ and
$W=\epsilon(|N-2\rangle\langle N-1|+\mathrm{h.c.})$ for $j=N$. The eigenvalues of
$M$ are
\begin{align*}
 \alpha_\pm=\frac{d_{j-1}+d_j}2\pm\sqrt{\frac{(d_j-d_{j-1})^2}4+\epsilon^2}.
\end{align*}
By Eq.~\eqref{eq:fixedp-separation}, $d_{(0)}=\min\{d_{j-1},d_j\}$, and the
definition of $J_j$ gives $|d_j-d_{j-1}|=\lambda_j|t_j-x|\le8\epsilon$. Hence
\begin{align*}
 \alpha_+&\le d_{(0)}+|d_j-d_{j-1}|+\epsilon\le d_{(0)}+9\epsilon,\\
 \alpha_+-\alpha_-&\ge2\epsilon.
\end{align*}
The diagonal entries of $C$ are at least $d_{(0)}+24\epsilon$ by
Eq.~\eqref{eq:fixedp-separation}, and its off-diagonal part is a block of $O$.
Weyl's inequality therefore bounds the lowest eigenvalue of $C$ from below by
\begin{align*}
 d_{(0)}+24\epsilon-\|O\|\ge d_{(0)}+22\epsilon>d_{(0)}+9\epsilon\ge\alpha_+,
\end{align*}
so the two lowest eigenvalues of $M\oplus C$ are $\alpha_-$ and $\alpha_+$. With
$|c'\rangle$ the lowest eigenvector of $M$ extended by zeros and Weyl's inequality
for $(M\oplus C)+W$,
\begin{align*}
 e_0\le\langle c'|A_{p,N}|c'\rangle=\alpha_-,\qquad
 e_1\ge\alpha_+-\|W\|=\alpha_+-\epsilon,
\end{align*}
so $\Delta_A\ge\alpha_+-\alpha_--\epsilon\ge\epsilon$ on each window. Hence
$\Delta_A\ge\epsilon$ for $|x|\le1$.

For $|x|\le r$ we have $|x|/a<1$, and Eq.~\eqref{eq:fixedp-profile} gives
\begin{align*}
 \Delta_B-\delta&=ah_p(|x|/a)\le a\beta_p\Big(\frac{|x|}a\Big)^{2p}\\
 &=\frac{\beta_p|x|^{2p}}{\delta}\le\beta_p\eta_p^{2p}\delta\le\frac\delta2,
\end{align*}
so $\Delta_B\le3\delta/2<\epsilon\le\Delta_A$. For $|x|\ge r>\max_j|t_j|$ the index
$k$ in Eq.~\eqref{eq:fixedp-DeltaA} is $1$ or $N$, where $\lambda_k=2$ and
$|t_k-x|=|x|-\max_j|t_j|$. With $\Delta_B\le\delta+|x|$ from
Eq.~\eqref{eq:fixedp-gap}, $2\max_j|t_j|<32\epsilon N^2=8r/9$ and $\epsilon=8\delta$,
\begin{align*}
 \Delta_A-\Delta_B&\ge|x|-2\max_j|t_j|-2\epsilon-\delta\\
 &>\frac r9-17\delta=(32N^2-17)\delta>0.
\end{align*}
Hence $\Delta_A>\Delta_B$ for $|x|\le1$. The bound $\Delta_A\ge\epsilon$ and
Lemma~\ref{lem:fixedp-profile} give $A_{p,N}$ and $B_{p,N}$ nondegenerate ground
states, respectively. Lemma~\ref{lem:tensorsum} then gives $H_{p,N}$ a nondegenerate ground state
with gap $\Delta_{p,N}(s)=\Delta_B(2s-1)$, whose minimum $\delta$ is at $s=1/2$.

Equation~\eqref{eq:fixedp-profile} at $z=1$ gives $0<\gamma_p<1$.
For $|x|\le a$, it gives
\begin{align*}
 \Delta_B(x)&\ge\delta+\gamma_p|x|^{2p}/\delta\\
 &\ge2\sqrt{\gamma_p}|x|^p\ge\gamma_p|x|^p.
\end{align*}
For $a\le|x|\le1$, it gives
$\Delta_B(x)>\gamma_p|x|\ge\gamma_p|x|^p$.
This proves Eq.~\eqref{eq:fixedp-width}, and $x=2s-1$ with
$\Delta_{p,N}(s)=\Delta_B(2s-1)$ gives $\Delta_{p,N}(s)\ge2^p\gamma_p|s-1/2|^p$.
\end{proof}

\begin{lem}\label{lem:fixedp-windows}
Let $p\ge2$ and $N\ge3$ be integers, and let $J_j$ be the windows of
Lemma~\ref{lem:fixedp-gap}. Choose a real normalized and continuously differentiable ground state $|\Phi_0^A(x)\rangle$ of
$A_{p,N}(x)$ on $[-1,1]$.
Then
\begin{align}
 \int_{J_j}\|\partial_x\Phi_0^A\|\,dx>1
 \qquad(1\le j\le N).
 \label{eq:fixedp-windows}
\end{align}
\end{lem}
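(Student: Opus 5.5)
We bound the arc length over each window from below by the chordal distance, so the plan is to show that the ground state nearly coincides with $|j-1\rangle$ at the left endpoint of $J_j$ and with $|j\rangle$ at the right endpoint. Write $x_\pm=t_j\pm8\epsilon/\lambda_j$ for the endpoints of $J_j$. Since $\int_{J_j}\|\partial_x\Phi_0^A\|\,dx\ge\|\Phi_0^A(x_+)-\Phi_0^A(x_-)\|$ and the state is real and normalized, the right-hand side is at least $\sqrt{2-2|\langle\Phi_0^A(x_+)|\Phi_0^A(x_-)\rangle|}$ after possibly flipping a sign. A sign flip does not change the integral, because a continuous real normalized ground state has a fixed sign along the path. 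It therefore suffices to show that $|\langle\Phi_0^A(x_+)|\Phi_0^A(x_-)\rangle|<1/2$. Overlap $<1/2$ gives chord $>1$, which is the claim.

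The first step reuses the block decomposition from the proof of Lemma~\ref{lem:fixedp-gap}, namely $A_{p,N}=(M\oplus C)+W$ with $M$ the $2\times2$ block on coordinates $j-1,j$. At $x_\pm$ we have $d_j-d_{j-1}=\lambda_j(t_j-x_\pm)=\mp8\epsilon$. The lowest eigenvector of $M$ at $x_-$ is then $(\cos\theta,\sin\theta)$ with $\tan2\theta=2\epsilon/(8\epsilon)=1/4$, so its weight on $|j\rangle$ is small. At $x_+$ the two roles are exchanged, and the weight sits mostly on $|j\rangle$. The overlap of these two $M$-eigenvectors is $|\sin2\theta|=1/\sqrt{17}\approx0.24$.

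The second step, which I expect to be the main obstacle, controls the distance from the true ground state $\Phi_0^A$ to the embedded $M$-eigenvector $|c'\rangle$. I would use a Davis--Kahan or first-order perturbation estimate with the explicit bounds from Lemma~\ref{lem:fixedp-gap}. The perturbation satisfies $\|W\|\le\epsilon$, since $W$ couples only one coordinate on each side. The separation between $\alpha_-$ and the rest of the spectrum of $M\oplus C$ is at least $\min\{\alpha_+-\alpha_-,\,\lambda_{\min}(C)-\alpha_-\}$. The second quantity is about $22\epsilon$. The first is $2\sqrt{16\epsilon^2+\epsilon^2}=2\sqrt{17}\,\epsilon\approx8.2\epsilon$ at the endpoints. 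The perturbed gap therefore stays larger than roughly $7\epsilon$.

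Davis--Kahan ($\sin\Theta$) then gives $\sin\angle(\Phi_0^A,c')\le\|W\|/(\text{gap}-\|W\|)\lesssim\epsilon/(7\epsilon)=1/7$. If that version is too loose, a sharper estimate uses $\|Wc'\|$ directly: since $W$ touches only coordinates $j-1$ and $j$ of $c'$, $\|Wc'\|\le\epsilon$. I would check the constants carefully here. In particular, a direct computation of $\|(\mathbb I-|c'\rangle\langle c'|)Wc'\|/\text{gap}$ bounds the angle by about $\epsilon/(8\epsilon)$, which is comfortable.

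The final step combines the two estimates. Each endpoint state lies within angle $\phi\le1/7$ (or better) of the corresponding $c'$, while the two $c'$ vectors have overlap $1/\sqrt{17}$, that is, an angle $\arccos(0.243)\approx1.325$ between them. By the triangle inequality for angles, the angle between the true endpoint states is at least $1.325-2/7\approx1.04>\pi/3$, so their overlap is below $1/2$. This gives a chord $>1$ and proves Eq.~\eqref{eq:fixedp-windows}. For the edge windows $j=1$ and $j=N$, $W$ has only one coupling, so the same argument applies with better constants.
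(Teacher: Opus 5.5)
Your first two steps hold up against what Lemma~\ref{lem:fixedp-gap} establishes. At $x_\pm$ the lowest eigenvectors $c'_\pm$ of $M$ have overlap $1/\sqrt{17}$, so they are $\arccos(1/\sqrt{17})=\pi/2-\arctan(1/4)\approx1.3258$ apart. You also have $\|Wc'\|\le\epsilon$, and Weyl gives $e_1\ge\alpha_+-\epsilon$. The residual form of Davis--Kahan therefore gives $\sin\phi\le1/(2\sqrt{17}-1)\approx0.138$ for the angle $\phi$ between $\Phi_0^A$ and $c'$.

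The gap is in your last line. The chord criterion needs the angle between the endpoint states to exceed $\pi/3\approx1.0472$. But $1.325-2/7\approx1.039$ is below $\pi/3$, not above it. Reading your bound as $\sin\phi\le1/7$ gives $\phi\approx0.143$ and the same $1.039$, which is an overlap bound of about $0.507$. So, as written, you have established neither overlap $<1/2$ nor chord $>1$.

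The constants are genuinely this tight. With the unrounded constant, $\phi\le\arcsin\bigl(1/(2\sqrt{17}-1)\bigr)\approx0.1385$, you get $1.3258-0.2769\approx1.049>\pi/3$, a margin of only about $0.002$. Your fallback via $\|(\mathbb I-|c'\rangle\langle c'|)Wc'\|/\mathrm{gap}$ with the unperturbed gap is first-order perturbation theory, not a bound. The rigorous version projects the eigenvalue equation onto the complement of $c'$ and uses $e_0\le\alpha_-$. This gives $\sin\phi\le\|W\Phi_0^A\|/(\alpha_+-e_0)\le1/(2\sqrt{17})$, which leaves comfortable room.

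The simpler repair is to drop the chord. The integral is the length of a curve on the unit sphere, so it is at least the great-circle distance $\arccos\langle\Phi_0^A(x_+)|\Phi_0^A(x_-)\rangle\ge\arccos|\langle\Phi_0^A(x_+)|\Phi_0^A(x_-)\rangle|$. You then only need an angle $>1$, which your $1.039$ already provides. No sign flip is ever needed either, since $2-2\langle a|b\rangle\ge2-2|\langle a|b\rangle|$ for real unit vectors.

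This is how the paper concludes, and it also avoids diagonalizing $M$ and using Davis--Kahan. At each endpoint the smallest diagonal entry $d_k$, $k\in\{j-1,j\}$, lies at least $8\epsilon$ below all the others. Projecting the eigenvalue equation orthogonally to $|k\rangle$ then gives $\tan\vartheta_k\le\sqrt2/6<1/4$ for the angle to the coordinate vector $|k\rangle$. Because $|j-1\rangle\perp|j\rangle$, the endpoint states are more than $\pi/2-1/2>1$ apart. Your reference vectors $c'_\pm$ sit closer to the true states, but they are only about $1.326$ apart rather than $\pi/2$. That is why your margins end up so thin.
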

\begin{proof}
Lemma~\ref{lem:fixedp-gap} gives a nondegenerate ground state and
Eq.~\eqref{eq:fixedp-separation}. Let $x_\pm=t_j\pm8\epsilon/\lambda_j$ be the
endpoints of $J_j$. By Eq.~\eqref{eq:fixedp-separation} the smallest diagonal entry
of $A_{p,N}$ at an endpoint is $d_{j-1}$ or $d_j$, and
$d_j-d_{j-1}=\lambda_j(t_j-x_\pm)=\mp8\epsilon$ decides which. Let $k$ be its index,
$k=j-1$ at $x_-$ and $k=j$ at $x_+$, and let $|k\rangle$ be the coordinate vector of
that entry. Then
\begin{align*}
 d_i-d_k\ge8\epsilon\qquad(i\ne k).
\end{align*}
Write $|\Phi_0^A\rangle=v|k\rangle+|w\rangle$ with $|w\rangle\perp|k\rangle$, and
split
\begin{align*}
 A_{p,N}=d_k|k\rangle\langle k|+|b\rangle\langle k|+|k\rangle\langle b|+C_k,
\end{align*}
where $|b\rangle=\sum_{i\ne k}\langle i|A_{p,N}|k\rangle|i\rangle$ has at most two
entries $\epsilon$, so $\|b\|\le\sqrt2\,\epsilon$, and $C_k=\sum_{i,i'\ne k}\langle i|A_{p,N}|i'\rangle|i\rangle\langle i'|$.
Let $e_0\le\langle k|A_{p,N}|k\rangle=d_k$ be the lowest eigenvalue of $A_{p,N}$.
The components of $A_{p,N}|\Phi_0^A\rangle=e_0|\Phi_0^A\rangle$ orthogonal to
$|k\rangle$ give
\begin{align*}
 (C_k-e_0)|w\rangle=-v|b\rangle.
\end{align*}
For $i\neq k$, the diagonal entries of $C_k$ are $d_i\ge d_k+8\epsilon$, and its off-diagonal
entries have at most two values $\epsilon$ per row, so
$\langle w|C_k|w\rangle\ge(d_k+6\epsilon)\|w\|^2\ge(e_0+6\epsilon)\|w\|^2$.
The inner product of the displayed equation with $|w\rangle$ then gives
\begin{align*}
 6\epsilon\|w\|^2\le|v|\,|\langle w|b\rangle|\le|v|\,\|w\|\,\|b\|\le\sqrt2\,\epsilon|v|\,\|w\|.
\end{align*}
Hence $v\ne0$, since $v=0$ would force $|w\rangle=0$ and $|\Phi_0^A\rangle=0$. The Fubini--Study distance
$\vartheta_k=\arccos|v|$ between the ground state and $|k\rangle$ therefore
satisfies
\begin{align*}
 \vartheta_k\le\tan\vartheta_k=\frac{\|w\|}{|v|}\le\frac{\sqrt2\,\epsilon}{6\epsilon}<\frac14.
\end{align*}
The Fubini--Study distance between $|j-1\rangle$ and $|j\rangle$ is $\pi/2$. By the
triangle inequality the ground states at $x_-$ and $x_+$ are at distance greater than
$\pi/2-1/2>1$, and the integral in Eq.~\eqref{eq:fixedp-windows} is at least this
distance.
\end{proof}

\begin{thm}\label{thm:fixedp-length}
For integers $p\geq2$ and $N\geq3$ with $s\in[0,1]$, $H_{p,N}$ in
Eq.~\eqref{eq:fixedp-H} with constants from Eq.~\eqref{eq:fixedp-data} is real symmetric and affine as in Eq.~\eqref{eq:Hs}.
It has dimension $(2^{\lceil p/2\rceil}+2^{\lfloor p/2\rfloor})(N+1)$ and a
nondegenerate ground state. It is in width class $p$ with
\begin{align}
 \Dstar=\delta,\qquad\Gamma=2,\qquad C_W=(2/\gamma_p)^{1/p}.
 \nonumber
\end{align}
It satisfies $\|H_{p,N}'\|<2^{p+1}\sqrt p+5$ and
\begin{align}
 L_P>c_p\Dstar^{-(p-1)/(2p)}.
 \nonumber
\end{align}
At fixed $p$ the $H_{p,N}$ form a family of width class $p$ in
Definition~\ref{def:family}, along which
$L_P=\Theta(\Dstar^{-(p-1)/(2p)})$ as $N\to\infty$.
\end{thm}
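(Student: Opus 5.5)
The proof assembles the preceding lemmas; most of the work is bookkeeping of constants. I would proceed in five steps.

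First, the structural claims. $A_{p,N}(x)$ and each $V_j(x)$ are real symmetric and affine in $x$, and $x=2s-1$ is affine in $s$. Tensor sums, direct sums and adding $(\delta-a)\mathbb I$ preserve both properties, so $H_{p,N}$ is real symmetric and affine as in Eq.~\eqref{eq:Hs}. For the dimension, $A_{p,N}$ has dimension $N+1$. The odd branch is a tensor sum of $\lceil p/2\rceil$ two-level factors and the even branch of $\lfloor p/2\rfloor$, so $B_{p,N}$ has dimension $2^{\lceil p/2\rceil}+2^{\lfloor p/2\rfloor}$; the product of the two gives the stated dimension. Nondegeneracy, $\Dstar=\delta$ and $\Delta_{p,N}(s)=\Delta_B(2s-1)$ are taken directly from Lemma~\ref{lem:fixedp-gap}.

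Second, the width class. Lemma~\ref{lem:fixedp-gap} gives $\Delta_{p,N}(s)\ge 2^p\gamma_p|s-1/2|^p$, which is Eq.~\eqref{eq:minorant} with $2\omega=2^p\gamma_p$. Lemma~\ref{lem:minorant} then gives
\[
\mubd(\gamma)\le 2\big(\gamma/(2^p\gamma_p)\big)^{1/p}=(\gamma/\gamma_p)^{1/p}.
\]
Eq.~\eqref{eq:fixedp-gap} gives $\Delta_B<2$, so $\Gamma=2$ is a valid upper bound. Writing $(\gamma/\gamma_p)^{1/p}=C_W(\gamma/2)^{1/p}$ yields $C_W=(2/\gamma_p)^{1/p}$, which is independent of $N$.

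Third, the norm bound. By the chain rule, $H_{p,N}'=2\,(A_{p,N}'\boxplus B_{p,N}')$, and $A'_{p,N}=\operatorname{diag}(m_j)$. The slopes satisfy $m_0=5/2$ and $m_N=5/2-2-(N-2)\lambda_{\rm int}-2=-3/2$, so $\|A'_{p,N}\|\le5/2$. Each $V_j'$ has norm $\binom pj\sqrt j\le\sqrt p\binom pj$. In either branch the norms add, so $\|B'_{p,N}\|\le\sqrt p\sum_j\binom pj<2^p\sqrt p$. Multiplying by the factor $2$ gives $\|H'_{p,N}\|<2^{p+1}\sqrt p+5$.

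Fourth, the lower bound on $L_P$. By Lemma~\ref{lem:tensorsum} with $\Phi_0=\Phi_0^A\otimes\phi_B$ (both chosen real), $v_P\ge\|\partial_s\Phi_0^A(2s-1)\|=2\|\partial_x\Phi_0^A\|$. Integrating and applying Lemma~\ref{lem:fixedp-windows} over the $N$ disjoint windows $J_j\subset(-1,1)$ gives $L_P>N$. It remains to relate $N$ to $\delta$. From $\delta^{(p-1)/p}=\eta_p/(288N^2)$ we get
\[
N=\sqrt{\eta_p/288}\,\delta^{-(p-1)/(2p)}=c_p\Dstar^{-(p-1)/(2p)},
\]
hence $L_P>c_p\Dstar^{-(p-1)/(2p)}$.

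Fifth, the family statement and the upper bound. At fixed $p$ the constants $\Gamma$, $C_W$ and the bound on $\|H'\|$ do not depend on $N$, so Definition~\ref{def:family} holds, and Theorem~\ref{thm:rates} gives $L_P=\mathcal O(\Dstar^{-(p-1)/(2p)})$. Combined with the lower bound, $L_P=\Theta(\Dstar^{-(p-1)/(2p)})$. As $N\to\infty$, $\delta\to0$, so the family covers arbitrarily small gaps.

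The main obstacle is consistency of the constants. I need to check $2^p\gamma_p$ against Lemma~\ref{lem:minorant}'s normalization, that the windows lie inside $[-1,1]$ so they map into $s\in[0,1]$, and the exact power in $N\leftrightarrow\delta$. If the factor $2$ from $dx/ds$ were lost, one would still get $L_P>N\ge c_p\Dstar^{-(p-1)/(2p)}$, since each window already contributes more than $1$.
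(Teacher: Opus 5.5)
Your proposal is correct and follows the paper's proof step by step. Nondegeneracy and $\Dstar=\delta$ come from Lemma~\ref{lem:fixedp-gap}, the width class from Lemma~\ref{lem:minorant} with $2\omega=2^p\gamma_p$, the norm bound from the triangle inequality over tensor sums, $L_P>N$ from the disjoint windows of Lemma~\ref{lem:fixedp-windows}, and the matching upper bound from Theorem~\ref{thm:rates}. The one slip is arithmetic: $m_N=\tfrac52-2-1-2=-\tfrac52$ rather than $-\tfrac32$, which leaves $\|A_{p,N}'\|=\tfrac52$ and hence your norm bound unchanged.
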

\begin{proof}
Lemma~\ref{lem:fixedp-gap} gives a nondegenerate ground state and $\Dstar=\delta$.
The odd and even branches of $B_{p,N}$ have dimensions $2^{\lceil p/2\rceil}$ and
$2^{\lfloor p/2\rfloor}$, and $A_{p,N}$ has dimension $N+1$. The defining matrices
are real symmetric and affine in $s$.

The slopes $m_j$ run from $5/2$ to $-5/2$, so $\|\partial_xA_{p,N}\|=5/2$, and
$\|\partial_xV_j\|=\binom pj\sqrt j$. With $\partial_s=2\partial_x$ and the
triangle inequality for tensor sums,
\begin{align}
 \|H_{p,N}'\|&\le2\Big(\frac52+\sum_{j=1}^p\binom pj\sqrt j\Big)
 \le2\Big(\frac52+\sqrt p\sum_{j=1}^p\binom pj\Big)\nonumber\\
 &=5+2(2^p-1)\sqrt p<2^{p+1}\sqrt p+5,
 \label{eq:fixedp-norm}
\end{align}
using $\sqrt j\le\sqrt p$ and $\sum_{j=1}^p\binom pj=2^p-1$.

Eq.~\eqref{eq:fixedp-width} is the hypothesis $\Delta(s)\ge2\omega|s-s_*|^p$ of
Lemma~\ref{lem:minorant} with $s_*=1/2$ and $2\omega=2^p\gamma_p$. With $\Gamma=2$
from Eq.~\eqref{eq:fixedp-gap}, Lemma~\ref{lem:minorant} gives
$C_W=(2/\gamma_p)^{1/p}$.

For the ground state $|\Phi_0^A\rangle$ of Lemma~\ref{lem:fixedp-windows},
Lemma~\ref{lem:tensorsum} gives $v_P\ge\|\partial_s\Phi_0^A\|$. With $ds=dx/2$ and
the disjoint windows of Lemma~\ref{lem:fixedp-gap},
\begin{align*}
 L_P&\ge\int_{-1}^1\|\partial_x\Phi_0^A\|\,dx
 \ge\sum_{j=1}^N\int_{J_j}\|\partial_x\Phi_0^A\|\,dx\\
 &>N=c_p\Dstar^{-(p-1)/(2p)},
\end{align*}
where the last equality is the definition of $\delta$ solved for $N$.

At fixed $p$ the constants $\Gamma$ and $C_W$ and the bound~\eqref{eq:fixedp-norm}
do not depend on $N$, as Definition~\ref{def:family} requires.
Equation~\eqref{eq:fixedp-gap} gives the bound $\Delta\le\Gamma$ that
Lemma~\ref{lem:layer} uses.
Here $\gamma_c=\Gamma C_W^{-p}=\gamma_p$, and $\delta<\gamma_p$ once $N$ is large.
Theorem~\ref{thm:rates} supplies the matching upper bound, proving
$L_P=\Theta(\Dstar^{-(p-1)/(2p)})$ along the family.
\end{proof}

\section{Unbounded total curvature for non-affine Hamiltonians}
\label{app:winding}

Theorem~\ref{thm:K} bounds the total curvature for affine $H(s)$. Here, we
consider non-affine $H(s)$ that is twice continuously differentiable and has a
nondegenerate ground state, so $g=1$ and $K_P=K$. The ground energy $E_0(s)$ is
nondegenerate with gap $\Delta(s)=E_1(s)-E_0(s)>0$, the phase of
$|\Phi_0(s)\rangle$ is fixed by $\langle\Phi_0|\Phi_0'\rangle=0$, and
$Q=\mathbb I-|\Phi_0\rangle\langle\Phi_0|$. We assume $v_P(s)=\|\Phi_0'\|>0$
on $[0,1]$, so the unit tangent $u=\partial_s|\Phi_0\rangle/v_P$ and $K=\int_0^1\|Qu'\|\,ds$
are defined. For affine $H(s)$, Lemma~\ref{lem:neverstops} derives this
from nonconstancy of the state. Its proof uses a constant $H'$, so here
positivity of $v_P$ is an assumption. As in Theorem~\ref{thm:K}, $\Omega(s)$
is the difference between the largest and the smallest eigenvalues of
$H'(s)$, now varying with $s$.

The following family keeps the gap constant and every norm of $H'$ and
$H''$ bounded while the total curvature diverges.
\begin{prop}\label{prop:winding}
For $0<c\le1$ and an integer $M\ge2\pi c$, let $H(s)=H_1(s)\oplus H_2(s)$ on
$\mathbb C^2\oplus\mathbb C$ with
\begin{align*}
 H_1(s)&=n(s)\cdot\boldsymbol\sigma,\qquad H_2(s)=5+s,\\
 n(s)&=(\sin\alpha\cos2\pi Ms,\,\sin\alpha\sin2\pi Ms,\,\cos\alpha),
\end{align*}
where $\boldsymbol\sigma$ is the vector of Pauli matrices and $\alpha\in(0,\pi/2)$ is
fixed by $\sin\alpha=cM^{-2}$. Then $H(s)$ is analytic and has a nondegenerate ground
state, constant gap $\Delta=2$ and $v_P(s)>0$. It satisfies $\sup_s\|H'(s)\|=1$, $\Omega(s)\le2$,
$\Lambda_1\le1$ and $\mathcal{B}_1=\int_0^1\|H''\|\,ds\le4\pi^2$. Its total curvature is
$K=2\pi M\cos\alpha$, which is unbounded as $M\to\infty$.
\end{prop}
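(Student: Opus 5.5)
The proof is a direct computation on the two-level block. Three facts do most of the work: $|n(s)|=1$ for all $s$, the block $H_1$ has eigenvalues $\pm1$, and $H_2(s)=5+s\ge5$. So the spectrum of $H(s)$ is $\{-1,1,5+s\}$ at every $s$. The ground state is therefore the nondegenerate $-1$ eigenvector of $n\cdot\boldsymbol\sigma$, embedded in $\mathbb C^2\oplus0$, and the gap is $1-(-1)=2$. Analyticity is immediate from the trigonometric entries.

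For the derivative bounds, write $\phi=2\pi Ms$. Then
\begin{align*}
 H'&=n'\cdot\boldsymbol\sigma\oplus1, & |n'|&=2\pi M\sin\alpha=2\pi cM^{-1}\le1,\\
 H''&=n''\cdot\boldsymbol\sigma\oplus0, & |n''|&=(2\pi M)^2\sin\alpha=4\pi^2c\le4\pi^2,
\end{align*}
where $|n'|\le1$ uses $M\ge2\pi c$. The eigenvalues of $H'$ are $\pm|n'|$ and $1$, which gives $\|H'\|=1$ and $\Omega(s)\le2$. For $\Lambda_1=\tfrac12[r_1(H'(0))-\ell_1(H'(1))]$, I take the largest eigenvalue $1$ and the smallest $-|n'|\ge-1$; this only gives $\Lambda_1\le1$ if $|n'|\le1$, which holds. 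Finally $\beta_1(H'')=\|H''\|=|n''|\le4\pi^2$, and integrating gives $\mathcal B_1\le4\pi^2$.

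For the geometry, I will identify the ground state with the Bloch vector $-n(s)$. The state $|\Phi_0\rangle$ is the spinor pointing along $-n$, and $-n$ traces a circle of polar angle $\pi-\alpha$ at angular rate $2\pi M$. The Fubini--Study metric is one half the Bloch-sphere metric, so $v_P=\tfrac12|n'|=\pi M\sin\alpha>0$. The curvature is cleanest to get by explicit parallel transport. I take
\[
 |\psi\rangle=\bigl(\cos\tfrac{\alpha}{2}\,e^{-i\phi/2},\;-\sin\tfrac{\alpha}{2}\,e^{i\phi/2}\bigr)^{\mathsf T}
\]
(up to convention), check that it is a $-1$ eigenvector of $n\cdot\boldsymbol\sigma$, and multiply by a phase $e^{i\chi(s)}$ with $\chi'=i\langle\psi|\psi'\rangle$. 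This gives a constant-speed phase rotation, so $|\Phi_0\rangle$ has components of the form $c_1e^{i\omega_1\phi}$ and $c_2e^{i\omega_2\phi}$ with $|c_1|^2+|c_2|^2=1$.

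Because the speed is constant, $u=\Phi_0'/v_P$ and $Qu'=Q\Phi_0''/v_P$. I compute $\Phi_0''$ componentwise and subtract its projection onto $\Phi_0$. The result should be that $\|Q\Phi_0''\|$ is constant, and hence $K=\|Q\Phi_0''\|/v_P$.

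Equivalently, I can use the two-sphere picture. The curve is a small circle of radius $\sin\alpha$ on the unit sphere, so its geodesic curvature is $\cot(\pi-\alpha)$ in absolute value, i.e.\ $\cos\alpha/\sin\alpha$. Its length is $2\pi M\sin\alpha$. The total geodesic curvature is therefore $2\pi M\cos\alpha$, and rescaling the metric by $\tfrac12$ leaves the total turning angle unchanged. The main obstacle is confirming that the paper's $\kappa=\|Q\,d^2\Phi_0/dl^2\|$ coincides with the geodesic curvature on $\mathbb{CP}^1$ rather than picking up an extra normal component. I expect it to match, since $Q$ removes the $\Phi_0$ direction, which corresponds to the radial direction of the sphere. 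I will nonetheless verify the final constant directly from the explicit spinor, and fall back on that computation if the two pictures disagree.

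Since $\cos\alpha\to1$ as $M\to\infty$, this gives $K=2\pi M\cos\alpha\to\infty$.
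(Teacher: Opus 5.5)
Your proposal is correct and follows the paper's proof. The gap comes from the spectrum $\{-1,1,5+s\}$, the norm bounds from the eigenvalues $\pm2\pi c/M$ and $1$ of $H'$, and the curvature from the total geodesic curvature of a small circle traversed $M$ times (scale-invariant, so your unit-sphere computation agrees with the paper's radius-$\tfrac12$ one). One slip in your backup check is that the $-1$ eigenvector is $\bigl(\sin\tfrac\alpha2\,e^{-i\phi/2},-\cos\tfrac\alpha2\,e^{i\phi/2}\bigr)^{\mathsf T}$, with $\cos$ and $\sin$ swapped relative to yours; with it the parallel-transport computation gives $\|Qu'\|=2\pi M\cos\alpha$ pointwise, confirming as you expected that $\kappa$ is the Fubini--Study geodesic curvature.
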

\begin{proof}
$H_1(s)$ has eigenvalues $\mp1$ and $H_2(s)\in[5,6]$, so the ground state of $H(s)$
is the ground state of $H_1(s)$. It is nondegenerate and $\Delta=2$. Here
$H'=H_1'\oplus1$ with $H_1'=n'\cdot\boldsymbol\sigma$ and
$\|n'\|=2\pi M\sin\alpha=2\pi c/M$, so the eigenvalues of $H'$ are
$\{\pm2\pi c/M,1\}$. The condition $M\ge2\pi c$ makes $1$ the largest of them, so
$\|H'(s)\|=1$, $\Omega(s)=1+2\pi c/M\le2$, and
Eq.~\eqref{eq:nonaffinebudgets} gives $\Lambda_1=(1+2\pi c/M)/2\le1$. Similarly
$\|H''\|=\|n''\|=4\pi^2c$, and $\mathcal{B}_1=\int_0^1\|H''\|\,ds\le4\pi^2$ because $c\le1$.

The ground state of $H_1(s)$ is represented on the Bloch sphere by the unit vector
$-n(s)$. The Fubini--Study length is half the Bloch-sphere angle, so these ground
states lie on a sphere of radius $1/2$. In the gauge $\langle\Phi_0|\Phi_0'\rangle=0$
the tangent $u$ is orthogonal to $|\Phi_0\rangle$, so $\|Qu'\|\,ds=\kappa\,dl$ on
that sphere as in Sec.~\ref{sec:curvature}. Here $l$ is the arc length of the curve
traced by $-n(s)/2$ and $\kappa$ is its geodesic curvature, the curvature measured
within the sphere. The vector $-n(s)$ stays at angle $\alpha$ from $(0,0,-1)$. As
$s$ runs over $[0,1]$ it traces the circle of points at that angle $M$ times, at
constant speed $v_P=\|n'\|/2=\pi c/M>0$. On a sphere of radius $1/2$, this circle
has length $\pi\sin\alpha$ and geodesic curvature $2\cot\alpha$, so
\begin{align*}
  K=M\cdot\pi\sin\alpha\cdot2\cot\alpha=2\pi M\cos\alpha,
\end{align*}
which is unbounded as $M\to\infty$ at fixed $c$, since $\cos\alpha\to1$.
\end{proof}

\end{document}